\newcolumntype{Y}{>{\raggedleft\arraybackslash}X}
\newcolumntype{Z}{>{\centering\arraybackslash}X}
\newcommand{\B}{\fontseries{b}\selectfont}
\DeclareMathAlphabet{\pazocal}{OMS}{zplm}{m}{n} % usepackage calrsfs
\DeclareFontFamily{OT1}{pzc}{}
\DeclareFontShape{OT1}{pzc}{m}{it}{<-> s * [1.10] pzcmi7t}{}
\DeclareMathAlphabet{\mathpzc}{OT1}{pzc}{m}{it}
\newtheorem{assumption}{Assumption}
\newtheorem{lemma}{Lemma}
\newtheorem{theorem}{Theorem}
\newtheorem{corollary}{Corollary}
\newtheorem{remark}{Remark}
\theoremstyle{definition}
\newcolumntype{P}[1]{>{\centering\arraybackslash}p{#1}}
\newcommand{\bm}{\boldsymbol}
\newcommand{\cm}[1]{\mbox{\boldmath$\mathscr{#1}$}}
\newcommand{\cmt}[1]{\mbox{\boldmath\scriptsize$\mathscr{#1}$}}
\newcommand{\Fr}{{\mathrm{F}}}
\newcommand{\ar}{{\mathrm{AR}}}
\newcommand{\HW}{{\mathrm{HW}}}
\def\HH{{\mathrm{\scriptscriptstyle\mathsf{H}}}}
\DeclareMathOperator*{\vect}{vec}
\DeclareMathOperator*{\argmin}{arg\,min}
\DeclareMathOperator*{\var}{var}
\numberwithin{equation}{section}
\title{\vspace{-20pt} Tensor Stochastic Regression for High-dimensional Time Series via CP Decomposition}
\author{Shibo Li
	and 
	Yao Zheng\footnote{Corresponding Author. Assistant Professor, Department of Statistics, University of Connecticut, Storrs, CT 06269. (Email: yao.zheng@uconn.edu).} \\
	\textit{University of Connecticut}}
\date{}
\begin{document}
\begin{bibunit}[apalike]	
\setlength{\parindent}{16pt}
	
\maketitle
	
\begin{abstract}
As tensor-valued data become increasingly common in time series analysis, there is a growing need for flexible and interpretable models that can handle high-dimensional predictors and responses across multiple modes. We propose a unified framework for high-dimensional tensor stochastic regression based on CANDECOMP/PARAFAC (CP) decomposition, which encompasses vector, matrix, and tensor responses and predictors as special cases. Tensor autoregression naturally arises as a special case within this framework. By leveraging CP decomposition, the proposed models interpret the interactive roles of any two distinct tensor modes, enabling dynamic modeling of input-output mechanisms. We develop both CP low-rank and sparse CP low-rank estimators, establish their non-asymptotic error bounds, and propose an efficient alternating minimization algorithm for estimation. Simulation studies confirm the theoretical properties and demonstrate the computational advantage. Applications to mixed-frequency macroeconomic data and spatio-temporal air pollution data reveal interpretable low-dimensional structures and meaningful dynamic dependencies. 
\end{abstract}

\textit{Keywords}:  CANDECOMP/PARAFAC decomposition; High-dimensional time series; Low-rank estimation; Tensor autoregression; Sparse estimation; 

\textit{MSC2020 subject classifications}: Primary 62M10; secondary 62H12, 60G10

\newpage
\section{Introduction}\label{section:intro}

Driven by the need for analyzing complex dynamic systems and advancements in data collection capability, tensor-valued time series data have become increasingly common in many areas, such as economics \citep[e.g.,][]{Chen2022}, finance \citep[e.g.][]{rogers2013multilinear},  transportation engineering \citep[e.g.,][]{Han2024}, medical imaging analyses \citep[e.g.,][]{Bi2021}. Tensor-valued time series  $\cm{Y}_t\in\mathbb{R}^{p_1\times \dots\times p_m}$ are typically  formed when more granular data are available, where $t$ denotes time. This granularity may arise from  the cross-sectional dimension, i.e., when $p_d$ outcome variables are observed at disaggregated levels, such as by industry or geographic region, or across different categories. It may also result from the temporal dimension, which could be due to the incorporation of multiple time lags or the availability of higher-frequency observations. In particular, when modeling mixed-frequency time series data (e.g., yearly and monthly), it is natural to treat the  additional temporal granularity (e.g, $p_d=12$ months per year) as another dimension. This allows  higher-frequency information to be easily leveraged within a modeling framework indexed by a lower-frequency time index $t$; e.g., higher-frequency variables can be  used to predict lower-frequency variables, or vice versa.

In this paper, we consider the tensor stochastic regression for time series data, where the response $\cm{Y}_t\in\mathbb{R}^{p_1\times\cdots\times p_m}$ and the predictor $\cm{X}_t\in\mathbb{R}^{q_1\times\cdots\times q_n}$ can be vectors, matrices or high-order tensors of different dimensions. The tensor forms of the response or predictor can result from more detailed data in variables and temporal dimensions. It is noteworthy that our framework includes vector, matrix and tensor autoregressions \citep[e.g.,][]{Chen2021, wang2022high, wang2024high} as special cases, when $\cm{X}_t$ consists of lagged values of  $\cm{Y}_t$. However, it also allows for mixed-frequency tensor regression, which generalizes the mixed-data sampling (MIDAS) regression \citep{midas2006} to tensor time series. As  emphasized in the rapidly growing literature on matrix- and tensor-valued time series, preserving the tensor form of the data---rather than vectorizing them and applying conventional multivariate modeling---is essential for maintaining multi-dimensional structural information.  This preservation not only enables the extraction of distinct insights across each dimension of the data but also improves estimation efficiency, as dimension reduction can be achieved simultaneously across all directions.

There is an extensive body of literature on tensor regression for $i.i.d.$ data, where  low-rank assumptions are imposed on  coefficient tensors typically via CANDECOMP/PARAFAC (CP) or Tucker decomposition; see \cite{Liu2021} and \cite{Bi2021} for recent reviews. Our paper is most closely related to the literature on tensor regression via CP decomposition. For example,  \cite{zhou2013tensor, wang2017generalized, feng2021brain} focus on models with scalar responses, whereas \cite{Lock2018} and \cite{raskutti2019convex}, among others, study tensor-on-tensor regression, which includes scalar-on-tensor, vector-on-vector, vector-on-tensor regression, etc., as special cases. Our setup aligns with the latter, as we similarly aim for generality. For tensor-on-tensor regression via Tucker decomposition, see the recent work by \cite{LZ2024} and the references therein. Our paper differs considerably from the above works as we focus on time series rather than independent data. From a theoretical perspective, our analysis is complicated by the temporal dependencies in the response and predictor series. From an empirical perspective, the temporal dimensions, whether arising from time lags or mixed frequencies, carry distinct interpretations for tensor time series compared to models for $i.i.d.$ data. 

Recently, tensor decomposition methods  have also been introduced to the modeling of vector-, matrix- and tensor-valued time series. For example, Tucker decomposition  has been  applied to high-dimensional vector \citep{wang2022high} and tensor \citep{wang2024high} autoregressions. Moreover, various matrix autoregressive (MAR) and tensor autoregressive (TAR) models with low-rank assumptions are intrinsically connected to the Tucker decomposition \citep[e.g.,][]{Chen2021, Samadi2024}; see also  \cite{Tsay2024} for a recent survey. \cite{Chang2023} proposes regression with tensor response and vector predictors via CP decomposition for time series data. In contrast, CP decomposition has been rarely explored in autoregression and  tensor-on-tensor regression for time series; an exception is the recent work by \cite{Chan2024}  where CP decomposition is applied to Bayesian vector autoregression (VAR). The popularity of  Tucker decomposition  may be partly due to its more natural connection with the classical reduced-rank VAR \citep{Velu1986, wang2022high}. Nevertheless, CP decomposition has  unique advantages over Tucker decomposition. Indeed, beyond the regression setting, there has been a recent surge of interest in applying  CP decomposition to factor models. Compared to factor models based on  Tucker decomposition \citep[e.g.,][]{Chen2022,Barigozzi2022,Chen2024a}, those based on CP decomposition are attractive \citep[e.g.][]{Babii2024,Han2024,Chen2024} as their identification does not rely on the orthogonality constraints typically required for the identifiability of Tucker decomposition based models. As discussed in these latter works, such constraints can be difficult to interpret and may limit the model's flexibility. This drawback motivates the growing interest in incorporating CP decomposition rather than Tucker decomposition in the construction of factor models.
However, tensor factor models cannot be directly used for forecasting as they are fundamentally unsupervised dimension reduction tools. To fill this gap, we consider predictive modeling for tensor-valued time series via CP decomposition, which simultaneously accomplishes dimension reduction and forecasting.

Furthermore, an important insight revealed in this paper is that, in the context of (auto)regression, CP decomposition can provide meaningful interpretations of \textit{interactive} patterns across various dimensions, as well as input-output dynamic mechanisms. In contrast, methods based on Tucker decomposition, such as the TAR model in \cite{wang2024high},  can only extract patterns separately for each dimension of the tensor-variate data. This can be a significant limitation in empirical analysis.  For instance, applying our method to mixed-frequency macroeconomic data in Section \ref{subsec:macro} reveals the dynamic mechanism between quarterly responses and monthly predictors.
For the air pollution data in Section \ref{subsec:taiwan}, we are able to not only discover patterns within individual dimensions such as pollutants, regions, or months, but also  explore interactions among these dimensions: e.g., do different pollutants concentrate in specific regions or during certain months? Additionally,  we can examine input-output dynamics: how  do past levels of different pollutants influence current concentrations, and how do past pollution in various regions affect the present distribution of pollutants across these regions? 
The main contributions of this paper are summarized as follows:
\begin{itemize}
\item For the general tensor stochastic  regression, we propose dimension reduction via the CP low-rank assumption on the coefficient tensor. This framework encompasses  vector, matrix and tensor autoregressions (TAR) via CP decomposition as special cases, which have not been studied in the literature.  

\item To improve the estimation efficiency and interpretability when the tensor dimensions are large, we further impose entrywise sparsity  on the component vectors. This leads to the sparse CP low-rank stochastic tensor regression and TAR. 
 For both the non-sparse and sparse models, we establish non-asymptotic statistical error bounds for the corresponding estimators. An efficient algorithm is also developed, and its computational advantage is verified through simulation studies.

\item As demonstrated by our empirical analysis, unlike models based on Tucker low-rank structures, our approach  provides meaningful insights into interactions across different dimensions, rather than only extracting patterns from each individual dimension.
\end{itemize}

The remainder of this paper is organized as follows. Section \ref{sec:model} introduces the proposed models. Section \ref{sec:est}  presents  non-asymptotic properties for the proposed non-sparse and sparse CP low-rank estimators. Simulation  and empirical studies are presented in Sections \ref{sec:sim} and \ref{sec:real}, respectively. Section \ref{sec:conclude} concludes with a brief discussion. A detailed discussion of the connections between the proposed model and Tucker low-rank models, the proposed algorithm, additional simulation and empirical results, as well as all technical proofs, is provided in a separate supplementary file.

\section{Tensor stochastic regression}\label{sec:model}
\subsection{Notation and tensor algebra}\label{subsec:notation}
Tensors, a.k.a. multi-dimensional arrays, are higher-order extensions of matrices. A multidimensional array $\cm{A} \in \mathbb{R}^{p_1 \times p_2 \times \cdots \times p_m}$ is called an $m$th order tensor \citep{Kolda2009}.
In this paper, we use bold lowercase letters to represent vectors, e.g., $\bm{x}$, $\bm{y}$; bold capital letters to represent matrices, e.g., $\bm{X}$, $\bm{Y}$; and bold Euler capital letters to represent tensors, e.g., $\cm{A}$, $\cm{B}$. For any $\bm{x}\in\mathbb{R}^p$, denote its $\ell_q$-norm by $\|\bm{x}\|_q=(\sum_{j=1}^p|x_j|^q)^{1/q}$ for $q>0$, its $\ell_0$-norm by $\|\bm{x}\|_0=|\textrm{supp}(\bm{x})|=\sum_{j=1}^p \mathbb{I}_{\{x_j\neq 0\}}$, and its normalization by  $\textrm{Norm}(\bm{x})=\bm{x}/\|\bm{x}\|_2$. For a positive integer $s$, we let $\textrm{Truncate}(\bm{x}, s)$ denote  the operation that sets all but the $s$ largest (in absolute value)  entries of $\bm{x}$ to zero.
For any $\bm{X}\in\mathbb{R}^{p_1\times p_2}$, let $\bm{X}^\top$ and  $\lambda_{\max}(\bm{X})$ (or $\lambda_{\min}(\bm{X})$) denote its transpose and largest (or smallest)  eigenvalue, respectively.
For any sequences $x_n$ and $y_n$, denote $x_n\lesssim y_n$ (or $x_n\gtrsim y_n$) if there is an absolute constant $C>0$ such that $x_n\leq C y_n$ (or $x_n\geq C y_n$). 

The outer product of a set of vectors $\{\bm{z}_d\}_{k=1}^K$, $\bm{z}_k\in\mathbb{R}^{p_k}$, denoted by $\bm{z}_1\circ\cdots\circ\bm{z}_K$, equals $\cm{Z}\in\mathbb{R}^{p_1\times\cdots\times p_K}$, where $\cm{Z}_{i_1\cdots i_K}=\prod_{k=1}^K \bm{z}_{k, i_k}$. The tensor-matrix multiplication of $\cm{Z}\in\mathbb{R}^{p_1\times p_2\times\cdots\times p_K}$ and $\bm{Y}\in\mathbb{R}^{q_k\times p_k}$ on the $k$th mode with $1\leq k\leq K$ is denoted by $\cm{Z}\times_k\bm{Y}\in\mathbb{R}^{p_1\times  p_2\times \cdots\times p_{k-1}\times q_k\times p_{k+1}\cdots\times p_K}$, where $(\cm{Z}\times_k\bm{Y})_{i_1\cdots i_{k-1}j i_{k+1}\cdots i_K}=\sum_{i_k=1}^{p_k}\cm{Z}_{i_1\cdots i_K}\bm{Y}_{ji_k}$. For $m,n\geq1$, the generalized inner product of two tensors $\cm{B}\in\mathbb{R}^{p_1\times \cdots\times p_m\times q_{1}\times\cdots\times q_n}$ and $\cm{X}\in\mathbb{R}^{q_{1}\times\cdots\times q_n}$ on the last $n$ modes is denoted by $\langle\cm{B}, \cm{X} \rangle \in\mathbb{R}^{p_{1}\times\cdots\times p_m}$, where $\langle\cm{B}, \cm{X} \rangle_{i_{1}\cdots i_m} = \sum_{j_{1},\ldots,j_n}\cm{B}_{i_1\cdots i_m j_1\cdots j_n}\cm{X}_{j_{1}\cdots j_n}$. The outer product between two tensors $\cm{A}\in\mathbb{R}^{p_1\times \cdots\times p_m}$ and $\cm{X}\in\mathbb{R}^{q_1 \times \cdots\times q_n}$ is denoted by $\cm{A}\circ\cm{X}\in\mathbb{R}^{p_1\times\cdots\times p_m\times q_1\times \cdots \times q_n}$, where $(\cm{A}\circ\cm{X})_{i_1\cdots i_m j_1\cdots j_n}=\cm{A}_{i_1\cdots i_m}\cm{X}_{j_{1}\cdots j_n}$. The Frobenius norm of a tensor $\cm{Z}\in\mathbb{R}^{p_1\times\cdots\times p_K}$ is $\|\cm{Z}\|_{\Fr} = \sqrt{\sum_{i_1,\dots, i_K} \cm{Z}_{i_1\cdots i_K}^2}$. The vectorization of a tensor, $\vect(\cm{Z})$, is defined analogously to that of a matrix.

For a tensor $\cm{B}\in\mathbb{R}^{p_1\times p_2\times\cdots\times p_N}$, the CANDECOMP/PARAFAC (CP) decomposition 
factorizes it into a sum of component rank-one
tensors as
\begin{align}\label{eq:CP}
	\cm{B} =[\![\bm{\omega};\bm{B}^{(1)},\bm{B}^{(2)},\dots, \bm{B}^{(N)}]\!]= \sum_{r=1}^R\omega_r\bm{\beta}_{r,1}\circ\bm{\beta}_{r,2}\circ\cdots\circ\bm{\beta}_{r,N},
\end{align}
where $\omega_r\in(0,\infty)$ and $\lVert\bm{\beta}_{r,d}\rVert_2=1$, for $1\leq r\leq R$ and $1\leq d\leq  N$, with  $R$ being a positive integer.
The rank of a tensor $\cm{B}$ is defined as the
smallest number of rank-one tensors that generate $\cm{B}$ as their sum. In other words, this is the smallest $R$ such that the CP decomposition in \eqref{eq:CP} holds for $\cm{B}$. A sufficient condition for the uniqueness of the CP decomposition for a general $n$th-order tensor $\cm{B}$, up to a permutation, is 
$\sum_{d=1}^{N} k_{\bm{B}^{(d)}} \geq 2R + (N-1)$, where  $k_{\bm{B}^{(d)}}$ denotes
the Kruskal rank of the matrix $\bm{B}^{(d)}= (\bm{\beta}_{1, d},\ldots, \bm{\beta}_{R, d})$; see \cite{Sidiropoulos2000}. Here the Kruskal rank of a matrix is defined as the maximum value $k$ such that any $k$ columns of the matrix are linearly independent \citep{kruskal1976more}.

\begin{remark}\label{remark:CPunique}
The sufficient condition for the uniqueness of the CP decomposition, up to a permutation, is a typically mild requirement in practice \citep{ten2002uniqueness, sun2017store, Zeng2021}: when $R$ is fixed, it implies that the identifiability is guaranteed when the components $\bm{\beta}_{r,d}$ are not highly dependent across $r$, and it is reasonable to assume that this condition is fulfilled in most empirical applications.
\end{remark}

\subsection{Model formulation}\label{subsec:model}

We consider the regression of an $m$th-order tensor response series $\cm{Y}_t\in\mathbb{R}^{p_1\times\cdots\times p_m}$ on an  $n$th-order tensor predictor series $\cm{X}_t\in\mathbb{R}^{q_{1}\times\cdots\times q_n}$ as follows:
\begin{align}\label{eq:model}
	\cm{Y}_t = \langle \cm{B},\cm{X}_{t}\rangle+\cm{E}_t, \quad 1\leq t \leq T,
\end{align}
where $m,n\geq1$,  $\cm{E}_t\in\mathbb{R}^{p_1\times\cdots\times p_m}$ is the innovation at time $t$, and $\cm{B}\in\mathbb{R}^{p_1\times\cdots\times p_{m}\times q_{1}\times\cdots\times q_n}$ is the $N$th-order coefficient tensor, with $N=m+n$. Here, without loss of generality, by centering both series $\{\cm{Y}_t\}$ and $\{\cm{X}_t\}$, we set the intercept to zero. 
Let $p_y=\prod_{d=1}^m p_d$ and $q_x=\prod_{d=1}^n q_d$. 
Note that even when each individual dimension $p_d$ or $q_d$ is relatively small, the ambient dimension of $\cm{B}$, i.e., $p_y q_x$,  can be very large  due to a moderately large $N$. For example,   $\cm{B}$ is a sixth-order tensor with more than $10^6$ entries for the air pollution data in Section \ref{subsec:taiwan}.
%while the sample size is $T=12$. 
Thus, efficient dimension reduction is required for model \eqref{eq:model}.

To reduce the dimensionality of $\cm{B}$ while achieving the interpretability of the tensor dynamic structure, we assume that $\cm{B}$  admits the  rank-$R$ CP decomposition in \eqref{eq:CP} with $N=m+n$. This assumption reduces the number of parameters from $p_y q_x$ to $R(1+\sum_{d=1}^m p_d + \sum_{d=1}^n q_d)$.
Moreover, for an even more substantial dimension reduction, we can further impose entrywise sparsity assumptions  $\|\bm{\beta}_{r,d}\rVert_0\leq s_d$, where $s_d$ represents the sparsity level for the $d$th mode, for $1\leq d\leq N$.

Combining \eqref{eq:CP} and \eqref{eq:model}, the model can be written as 
\begin{equation}\label{eq:generalmodel}
\cm{Y}_t = \sum_{r=1}^R\omega_r\bm{\beta}_{r,1}\circ\cdots\circ\bm{\beta}_{r,m}\big( \underbrace{\cm{X}_t\times_{d=1}^{n}\bm{\beta}_{r,m+d}}_{f_{r,t}} \big) +\cm{E}_t,
\end{equation}
where  $f_{r,t} = \cm{X}_t\times_{d=1}^{n}\bm{\beta}_{r,m+d}$  for $1\leq r\leq R$ are univariate factor processes summarizing the information in the predictor series. Model \eqref{eq:generalmodel} shows that we can separate $\{\bm{\beta}_{r,d}\}_{d=1}^{N}$ into two parts, $\{\bm{\beta}_{r,d}\}_{d=1}^{m}$ and $\{\bm{\beta}_{r,m+d}\}_{d=1}^{n}$, which we call the \textit{response} and \textit{predictor} loading vectors, respectively. The  latter encodes  the high-dimensional predictor  $\cm{X}_t$ into univariate factors $f_{r,t}$ for $1\leq r\leq R$. The \textit{predictor} loading vector $\bm{\beta}_{r,m+d}$, which consists of relative weights across variables  along the $d$th mode of the predictor, can vary across $r$. Then the information in each factor $f_{r,t}$ is  disseminated across $m$ modes of the response  $\cm{Y}_t$, where the relative weights across variables in the $d$th  mode of the response  are given by the \textit{response} loading vector $\bm{\beta}_{r,d}$ for $1\leq d\leq m$.  In addition,  $\omega_r$ controls the relative importance of different factors. 

As discussed above, for every $r$, the loading vectors $\bm{\beta}_{r,1}, \dots, \bm{\beta}_{r,N}$  provide insights into each  mode of the response and predictor tensors individually. More interestingly, CP decomposition enables us to interpret the interactive roles of any two distinct modes $1\leq d_1\neq d_2\leq N$  via the loading matrix $\bm{\beta}_{r,d_1}\circ \bm{\beta}_{r,d_2}$. There are three types of interactive relationships implied by \eqref{eq:generalmodel}:
\begin{itemize}
\item [(i)] response-response ($1\leq d_1\neq d_2 \leq m$):  Since the outer product is permutation-invariant, the full tensor $\bm{\beta}_{r,1} \circ \cdots \circ \bm{\beta}_{r,m}$ can be viewed as an extension of $\bm{\beta}_{r,d_1} \circ \bm{\beta}_{r,d_2}$ along the remaining $m-2$ modes.

\item [(ii)] predictor-predictor ($d_1=m+d^\prime$ and $d_2=m+d^{\prime\prime}$, with $1\leq d^{\prime}\neq d^{\prime\prime}\leq n$): We can separate out $\bm{\beta}_{r,d_1}$ and $\bm{\beta}_{r,d_2}$ by rewriting $f_{r,t} = \langle \bm{\beta}_{r,d_1} \circ \bm{\beta}_{r,d_2}, \cm{X}_t\times_{d=1, d\notin \{d_1-m, d_2-m\}}^{n}\bm{\beta}_{r,m+d} \rangle$. 

\item [(iii)] response-predictor ($1\leq d_1\leq m$ and $d_2=m+d^{\prime\prime}$, with $1\leq d^{\prime\prime}\leq n$): It can be shown that, up to a permutation of modes, $\bm{\beta}_{r,1}\circ\cdots\circ\bm{\beta}_{r,m} f_{r,t}$ is equivalent to 
\begin{equation}\label{eq:interact}
	\bm{\beta}_{r,1}\circ \cdots \circ\bm{\beta}_{r,d_1-1}\circ \bm{\beta}_{r,d_1+1}\circ \cdots\circ\bm{\beta}_{r,m}  \circ 
	\langle \bm{\beta}_{r,d_1} \circ \bm{\beta}_{r,d_2}, 
	\cm{X}_t\times_{d=1, d\neq d_2-m}^{n}\bm{\beta}_{r,m+d}
	\rangle.
\end{equation}
\end{itemize}
Thus, in all three cases, the loading matrix $\bm{\beta}_{r,d_1} \circ \bm{\beta}_{r,d_2}$ fully encapsulates the effects associated with two specific modes, which may belong to the response or predictor tensor.

Interestingly, model \eqref{eq:generalmodel} offers a simple data-driven framework to accommodate  high-dimensional response and predictor series with mismatched sampling frequencies. 
In Section \ref{subsec:macro}, we demonstrate a macroeconomic analysis based on the model  $\bm{y}_t = \langle \cm{B},\bm{X}_{t}\rangle+\bm{\varepsilon}_t$. Here, $\bm{y}_t\in\mathbb{R}^{179}$ is a low-frequency response, containing 179 economic variables observed each quarter $t$. The predictor $\bm{X}_{t}\in\mathbb{R}^{112\times 3}$ is high-frequency,  containing 112 variables, with three monthly observations available  in  each quarter $t$. Based on the decomposition $\cm{B} = \sum_{r=1}^R\omega_r\bm{\beta}_{r,1}\circ\bm{\beta}_{r,2}\circ\bm{\beta}_{r,3}$,  group patterns within quarterly  and monthly variables can be identified  from $\bm{\beta}_{r,1}\in\mathbb{R}^{179}$ and $\bm{\beta}_{r,2}\in\mathbb{R}^{112}$, respectively. More interestingly,  we can understand the response-predictor (i.e., output-input) relationship by visualizing the heatmap of the matrix $\bm{\beta}_{r,1}\circ \bm{\beta}_{r,2}$, despite the different sampling frequencies of predictors and responses. Furthermore, $\bm{\beta}_{r,3}\in\mathbb{R}^{3}$ captures the  temporal aggregation pattern, i.e., how the high-frequency signals are summarized into the low-frequency scale. Importantly, since $\bm{\beta}_{r,3}$ is an unknown parameter, it allows the data to determine the aggregation weights rather than relying on predetermined ones; see Remark \ref{remark:midas} in section for more discussions. In short, as a useful by-product, our CP decomposition-based approach offers a simple, data-driven tool for disentangling  variables and sampling frequencies in mixed-frequency time series modeling. 

In fact, it is straightforward to further incorporate lags of predictor variables. For instance, to include data from the $L>1$ most recent quarters up to quarter $t$, we can simply form a tensor $\cm{X}_{t}\in\mathbb{R}^{112\times 3\times L}$. Moreover, it is noteworthy that our approach is equally applicable in settings with high-frequency responses and low-frequency predictors. In such cases, the  frequency difference will be treated as a mode of the response tensor $\cm{Y}_t$, while the corresponding loading vector $\bm{\beta}_{r,d}$ indicates how the low-frequency information is exploited at the high-frequency scale.

\begin{remark}[Supervised factor modeling]\label{remark:tfm}
	While we refer to $\{f_{r,t}\}$ in \eqref{eq:generalmodel} as factor processes, they are different from conventional factors in unsupervised settings \citep{Bai2016, wang2019factor, Han2024, Babii2024}, where the goal is to identify latent drivers of co-movement in high-dimensional time series. In contrast, our factors are extracted from the observed predictors $\cm{X}_t$ and are supervised, i.e., chosen to be most predictive of the response $\cm{Y}_t$. This supervised nature arises directly from the regression framework, where $\cm{X}_t$ serves as input and $\cm{Y}_t$ as output. Therefore, although \eqref{eq:generalmodel} resembles tensor factor models in \cite{Han2024, Babii2024} in form, the meaning of $\{f_{r,t}\}$ differs. For instance, \cite{Han2024} assume uncorrelated factors, and \cite{Babii2024} impose orthogonality assumptions. These assumptions are not applicable to our model due to the different nature of those factors.
\end{remark}

\begin{remark}[Connection with MIDAS]\label{remark:midas}
There is extensive literature on mixed-frequency regression, particularly Mixed Data Sampling (MIDAS) models \citep{midas2006, Foroni2013, Babii2022}, which typically forecast a univariate low-frequency variable using high-frequency predictors. 
However, the case involving high-dimensional responses and predictors remains largely underexplored. While a naive extension is to apply univariate MIDAS entrywise,  it would inevitably ignore potential low-dimensional structures in the response space. Traditional MIDAS methods often impose  parametric lag structures, which can be restrictive when applied uniformly across responses. While U-MIDAS variants \citep{Foroni2015} avoid these structures, it becomes overparameterized in high dimensions. 
Our method can be viewed as a structured U-MIDAS approach using CP decomposition: it tensorizes the coefficient array across frequency, predictors, and responses, enabling flexible modeling of lag effects and response interactions. The CP low-rank structure ensures both scalability and interpretability in high-dimensional mixed-frequency settings. Moreover, our approach applies to high-frequency responses and low-frequency predictors.
\end{remark}

\subsection{Tensor autoregression}\label{subsec:tar}
This section introduces a new tensor autoregressive (TAR) model as an important special case of model \eqref{eq:model}.

In the simple case where $\cm{X}_t = \cm{Y}_{t-1}$, the TAR(1) model is formed, where the coefficient tensor $\cm{B}\in\mathbb{R}^{p_1\times\cdots\times p_m\times p_1\times\cdots\times p_m}$ has the CP decomposition in \eqref{eq:CP}, with $N=2m=2n$, and $(q_1,\dots, q_{n})=(p_1,\dots, p_m)$.  The interpretations regarding \eqref{eq:generalmodel}, particularly those regarding individual predictor or response loading vectors and various loading matrices  $\bm{\beta}_{r,d_1} \circ \bm{\beta}_{r,d_2}$, apply immediately.
Consider the Taiwan air pollution time series data  in  Section \ref{subsec:taiwan} as a concrete example. For  each year $t$, the observations can be arranged into a third-order tensor  $\cm{Y}_t\in\mathbb{R}^{12\times7\times 12}$, for  12 monitoring stations, 7 air pollutants, and 12 months. Then $\cm{B}$ is a tensor with $N=6$ modes satisfying \eqref{eq:CP}.  In this case, for each $1\leq r\leq R$, the \textit{predictor} loading vectors  $\{\bm{\beta}_{r,d}\}_{d=4}^6$ capture important signals   from the previous year across various monitoring stations, air pollutants, and months, which are  predictive of the following year. Analogously, the \textit{response} loading vectors  $\{\bm{\beta}_{r,d}\}_{d=1}^3$ summarize how the past influence manifests in the following year across   stations,  pollutants, and months. Moreover, we can interpret interactions between any two modes. For example,  $\bm{\beta}_{r,4} \circ \bm{\beta}_{r,5}$ reveals the station-pollutant interaction in  signals from the past year, while $\bm{\beta}_{r,1} \circ \bm{\beta}_{r,2}$ captures this interaction in terms of the response that manifests in the following year. 
In addition, as demonstrated in \eqref{eq:interact}, $\bm{\beta}_{r,1} \circ \bm{\beta}_{r,4}$ summarizes dynamic interactions across stations from one year to the next; see Section \ref{subsec:taiwan} for details.

More generally, for an $m$th-order tensor time series $\cm{Y}_t\in\mathbb{R}^{p_1\times\cdots\times p_m}$, we can take $\cm{X}_t\in\mathbb{R}^{p_1\times\cdots\times p_m\times L}$ with $\cm{X}_{\cdots \ell}=\cm{Y}_{t-\ell}$ for $1\leq \ell\leq L$, given that $L>1$; i.e., $n=m+1$ and $(q_1,\dots, q_{n})=(p_1,\dots, p_m, L)$. Thus, if $L>1$, we can define the TAR($L$) model,
\begin{align}\label{eq:TensorAR}
	\cm{Y}_t=\sum_{\ell=1}^L\langle\cm{B}_\ell, \cm{Y}_{t-\ell} \rangle+\cm{E}_t,
\end{align}
where $\cm{B}\in\mathbb{R}^{p_1\times\cdots\times p_m\times p_1\times\cdots\times p_m \times L}$ satisfies \eqref{eq:CP} with $N=2m+1$, and $\cm{B}_\ell = \cm{B}_{\cdots \ell} \in\mathbb{R}^{p_1\times\cdots\times p_m\times p_1\times\cdots\times p_m}$. 
The interpretation of \eqref{eq:generalmodel} can be generalized to  model \eqref{eq:TensorAR}, where the factor process  $f_{r,t} = \cm{X}_t\times_{d=1}^{m+1}\bm{\beta}_{r,m+d}$  in  \eqref{eq:generalmodel}
 captures the $r$th input signals of the dynamic system across not only   $m$ different cross-sectional modes, but also the lag mode. That is, the  entries of the loading vector $\bm{\beta}_{r,2m+1} \in\mathbb{R}^{L}$ are relative weights across $L$ lags. They control how the strength of the $r$th cross-sectional dynamic dependence pattern, as characterized by $\bm{\beta}_{r,d}$ for $1\leq d\leq 2m$, varies over time lags.
 
 \begin{remark}[Stationarity condition]\label{remark:stationary}
Let $\bm{y}_t=\vect(\cm{Y}_t)\in\mathbb{R}^{p_y}$ be the vectorization of $\cm{Y}_t$, and similarly $\bm{\varepsilon}_t=\vect(\cm{E}_t)$. 
For $1\leq \ell \leq L$, let $\bm{B}_\ell$ denote the $p_y\times p_y$ matricization of $\cm{B}_\ell$ obtained by combining its first $m$ modes to rows and the other $m$ modes to columns. Then, \eqref{eq:TensorAR} can be written in the vector form as
$\bm{y}_t=\sum_{\ell=1}^L\bm{B}_{(\ell)} \bm{y}_{t-\ell}+\bm{\varepsilon}_t$. By \eqref{eq:CP} it can be shown that $\bm{B}_{(\ell)}= \sum_{r=1}^R\omega_r  \bm{\beta}_{r,m} \otimes \cdots \otimes \bm{\beta}_{r,1} \beta_{r, 2m+1}^{(\ell)} (\bm{\beta}_{r,2m}\otimes\cdots\otimes \bm{\beta}_{r,m+1})^\top$, where $\beta_{r, 2m+1}^{(\ell)}$ is the $\ell$th entry of $\bm{\beta}_{r,2m+1}\in\mathbb{R}^{L}$ for $1\leq \ell\leq L$, given that $L>1$. Thus, by the stationarity condition of the vector autoregressive model \citep{Luetkepohl2005}, the TAR($L$) model in \eqref{eq:TensorAR} is strictly stationarity if and only if $\rho(\bm{\underline{B}})<1$, where $\rho(\bm{\underline{B}})$ is the spectral radius of
\[
\bm{\underline{B}}
=\begin{pmatrix}
\sum_{r=1}^R\omega_r  \bm{\beta}_{r,m} \otimes \cdots \otimes \bm{\beta}_{r,1}  (\bm{\beta}_{r,2m+1}\otimes \bm{\beta}_{r,2m}\otimes\cdots\otimes \bm{\beta}_{r,m+1})^\top\\
	\bm{I}_{(L-1)p_y} \quad	\bm{0}_{(L-1)p_y\times p_y}
\end{pmatrix} \in\mathbb{R}^{ L p_y\times L p_y},
\]
with $\bm{I}_{(L-1)p_y}$ being the identity matrix of size $(L-1)p_y$. Note that the top block of $\bm{\underline{B}}$ is $\sum_{r=1}^R\omega_r  \bm{\beta}_{r,m} \otimes \cdots \otimes \bm{\beta}_{r,1}  (\bm{\beta}_{r,2m+1}\otimes \bm{\beta}_{r,2m}\otimes\cdots\otimes \bm{\beta}_{r,m+1})^\top=(\bm{B}_{(1)}, \dots, \bm{B}_{(L)})$. In the special case where $L=1$, the vector $\bm{\beta}_{r,2m+1}$ is simply omitted as the $(2m+1)$-th mode of $\cm{B}$ is collapsed.
\end{remark}

\begin{remark}[Special case with $m=1$]\label{remark:var}
When $m=1$, $\cm{Y}_t = \vect(\cm{Y}_t) =\bm{y}_t \in\mathbb{R}^{p_y}$ is a vector. In this case,  \eqref{eq:TensorAR} reduces to a vector-valued time series model, $\bm{y}_t=\sum_{\ell=1}^L\bm{B}_{\ell} \bm{y}_{t-\ell}+\bm{\varepsilon}_t$, with  coefficient matrices $\bm{B}_\ell=\sum_{r=1}^R\omega_r\bm{\beta}_{r,1}\bm{\beta}_{r,2}^\top\beta_{r,3}^{(\ell)}$; i.e., in \eqref{eq:TensorAR},  $\cm{X}_t=\bm{X}_t=(\bm{y}_{t-1},\ldots,\bm{y}_{t-L})$ is a $p_y\times L$ matrix, and $\cm{B}$ is a $p_y\times p_y\times L$ tensor that satisfies \eqref{eq:CP} with $N=3$ and is formed by stacking $\bm{B}_{\ell}$ for $1\leq \ell\leq L$. Thus, a   by-product of the proposed TAR model in the case of $m=1$ is a new high-dimensional vector autoregressive (VAR) model based on the CP decomposition. Note that this is different from the vector form of the TAR model for a general $m$-th order tensor time series $\cm{Y}_t$ for $m>1$; see Remark  \ref{remark:stationary}.  
\end{remark}

\section{High-dimensional estimation}\label{sec:est}
\subsection{CP low-rank estimator}\label{subsec:est1}
Suppose that an observed sequence $\{(\cm{Y}_t, \cm{X}_t)\}_{t=1}^T$ of length $T$ follows the tensor stochastic regression in  \eqref{eq:generalmodel}.
 
While the ambient dimension  of $\cm{B}$ is $p_yq_x$, where $p_y=\prod_{d=1}^m p_d$ and $q_x= \prod_{d=1}^n q_d$, by exploiting its CP low-rank structure, the required sample size for its consistent estimation can be reduced to  $T \gg \sum_{d=1}^m p_d +  \sum_{d=1}^n q_d$. Specifically,  we propose the CP low-rank estimator for $\cm{B}$ as follows:
\[
\widetilde{\cm{B}} =[\![\widetilde{\bm{\omega}};\widetilde{\bm{B}}^{(1)},\dots, \widetilde{\bm{B}}^{(N)}]\!] 
= \underset{\bm{\omega}, \bm{B}^{(1)}, \dots, \bm{B}^{(N)}}{ \argmin }
\sum_{t=1}^T\Big\lVert\cm{Y}_t-  \sum_{r=1}^R\omega_r\bm{\beta}_{r,1}\circ\cdots\circ\bm{\beta}_{r,m}\big( \cm{X}_t\times_{d=1}^{n}\bm{\beta}_{r,m+d}  \big)  \Big\rVert_\Fr^2
\]
subject to $\lVert\bm{\beta}_{r,d}\rVert_2=1$ for  $1\leq r\leq R$ and $1\leq d\leq N$, where $\bm{\omega}=(\omega_r)_{1\leq r\leq R}$, and  $\bm{B}^{(d)}=(\bm{\beta}_{1, d},\ldots, \bm{\beta}_{R, d})$.  

Denote by $\cm{B}^*$ the true value of $\cm{B}$. Let  $\bm{y}_t=\vect(\cm{Y}_t)  \in\mathbb{R}^{p_y}$ and $\bm{x}_t=\vect(\cm{X}_t) \in\mathbb{R}^{q_x}$.  Without loss of generality, we assume that $\mathbb{E}(\bm{x}_t)=\bm{0}$, i.e., the data are centered. We also make the following assumptions.

\begin{assumption}[Identifiability]\label{assum:unique}
	The coefficient tensor $\cm{B}^*\in\mathbb{R}^{p_1\times\cdots\times p_m\times q_1\times\cdots\times q_n}$ has a unique CP decomposition $\cm{B}^* =\sum_{r=1}^R\omega_r^*\bm{\beta}_{r,1}^*\circ\cdots\circ\bm{\beta}_{r,N}^*$, up to a permutation, where  $\omega_r^*\in(0,\infty)$ and $\|\bm{\beta}_{r,d}^*\|_2=1$  for $1\leq r\leq R$ and $1\leq d\leq N$.
\end{assumption}

\begin{assumption}[Innovation process]\label{assum:error}
	(i) Let $\bm{\varepsilon}_t=\vect(\cm{E}_t)=\bm{\Sigma}_{\varepsilon}^{1/2}\bm{\xi}_t$, where $\bm{\Sigma}_{\varepsilon}$ is a positive definite covariance matrix, $\{\bm{\xi}_t\}$ is a sequence of $i.i.d.$ random vectors with zero mean and $\var(\bm{\xi}_t)=\bm{I}_{p_y}$, and the coordinates $\xi_{it}$ are mutually independent and $\sigma^2$-sub-Gaussian. (ii) Moreover, $\bm{\xi}_t$ is independent of the $\sigma$-field $\cm{F}_t:=\sigma\{\bm{x}_t, \bm{\xi}_{t-1}, \bm{x}_{t-1}, \bm{\xi}_{t-2}, \dots\}$ for all $t\in\mathbb{Z}$.
\end{assumption}

\begin{assumption}[Response and predictor processes]\label{assum:stationary}
	(i) The process $\{(\cm{Y}_t, \cm{X}_t), t\in\mathbb{Z}\}$ is stationary. (ii) The process $\{\cm{X}_t\}$ has an infinite-order moving average form, $\bm{x}_t =\vect(\cm{X}_t)=\sum_{j=1}^{\infty}\bm{\Phi}_j^x \bm{\epsilon}_{t-j}$, where $\bm{\epsilon}_t\in\mathbb{R}^{q_\epsilon}$,  $\bm{\Phi}_j^x\in\mathbb{R}^{q_x\times q_\epsilon}$, $\sum_{j=1}^{\infty}\|\bm{\Phi}_j^x\|_\Fr <\infty$, and $q_\epsilon\leq q_x$. Moreover, $\{\bm{\epsilon}_t\}$ is a sequence of $i.i.d.$ random vectors with zero mean and $\var(\bm{\epsilon}_t)=\bm{I}_{q_\epsilon}$, and the  coordinates $\epsilon_{it}$ are mutually independent and sub-Gaussian.
\end{assumption}

Assumption \ref{assum:unique} requires the uniqueness of the CP decomposition; see a sufficient condition given in Section \ref{subsec:notation} and a discussion on its mildness in Remark \ref{remark:CPunique}.  The sub-Gaussian condition in Assumption \ref{assum:error}(i) is standard in the literature \citep{wang2022high, wang2024high, zheng2025}. Assumption \ref{assum:error}(ii)  implies that the innovation  $\cm{E}_t$ is independent of $\cm{F}_t$, i.e., information up to time $t$. This condition enables us to establish martingale concentration results under the general random design $\cm{X}_t$.  The stationarity of the time series, i.e., Assumption \ref{assum:stationary}(i), is standard. The infinite-order moving average form in Assumption \ref{assum:stationary}(ii) allows  $\{\bm{x}_t\}$ to be a general linear process. The latter is used to establish the restricted strong convexity of the loss function; see the proofs of main theorems  in the supplementary file.

Let  $\bm{\Phi}_x(z) = \sum_{j=1}^{\infty}\bm{\Phi}_j^x z^j$,  for $z\in\mathbb{C}$.
Denote 
$\kappa_1 = \min_{|z|=1}\lambda_{\min}(\bm{\Phi}_x^{\HH}(z)\bm{\Phi}_x(z))$ and $\kappa_2 = \max_{|z|=1}\lambda_{\max}(\bm{\Phi}_x^{\HH}(z)\bm{\Phi}_x(z))$,
where   $\bm{\Phi}_x^{\HH}(z)$ is the conjugate transpose of $\bm{\Phi}_x(z)$. In particular, Assumption \ref{assum:stationary}(ii) implies $\kappa_1\leq \lambda_{\min}(\bm{\Sigma}_x) \leq \lambda_{\max}(\bm{\Sigma}_x) \leq \kappa_2$, where $\bm{\Sigma}_x = \mathbb{E}(\bm{x}_t\bm{x}_t^\top)$ is the coveriance matrix of $\bm{x}_t$; see more details in Lemma \ref{lemma:eigen} in the supplementary file.  Note that $\kappa_1$ and $\kappa_2$ are allowed to vary with the dimension. 

The following theorem presents the estimation and prediction error bounds for the CP low-rank estimator.

\begin{theorem}[CP low-rank estimator]\label{thm:nonsparsetr}
	Suppose that the observed  response and predictor series $\{(\cm{Y}_t, \cm{X}_t)\}_{t=1}^T$ follow model \eqref{eq:generalmodel}. Under Assumptions \ref{assum:unique}--\ref{assum:stationary}, if $\kappa_1>0$ and  $T \gtrsim (\kappa_2/\kappa_1)^2  R(\sum_{d=1}^m p_d +\sum_{d=1}^{n} q_d) \log N$, then
	\[
	\|\widetilde{\cm{B}} - \cm{B}^*\|_\Fr \lesssim   \sqrt{\frac{\sigma^2  \kappa_2 \lambda_{\max}(\bm{\Sigma}_{\varepsilon}) R(\sum_{d=1}^m p_d +\sum_{d=1}^{n} q_d) \log N }{\kappa_1^2 T} },\quad\text{and}
	\]
	\[
	\frac{1}{T}\sum_{t=1}^T\lVert\langle \widetilde{\cm{B}} - \cm{B}^*, \cm{X}_{t}\rangle\rVert_\Fr^2
	\lesssim  \frac{\sigma^2  \kappa_2 \lambda_{\max}(\bm{\Sigma}_{\varepsilon}) R(\sum_{d=1}^m p_d +\sum_{d=1}^{n} q_d) \log N}{\kappa_1 T} 
	\] 
	with probability at least $1-5  e^{-2 R(\sum_{d=1}^N p_d+1) \log\{6(N+1)\}}$.
\end{theorem}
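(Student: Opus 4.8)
The plan is to follow the standard three-step recipe for the non-asymptotic analysis of an $M$-estimator—basic inequality, restricted strong convexity (RSC), and control of the stochastic deviation term—while tracking the two complications specific to this setting: the non-convexity of the CP low-rank parameter set and the temporal dependence in $\{(\cm{Y}_t, \cm{X}_t)\}$. First I would establish the basic inequality. Writing $\cm{D} = \widetilde{\cm{B}} - \cm{B}^*$ and substituting $\cm{Y}_t = \langle\cm{B}^*, \cm{X}_t\rangle + \cm{E}_t$ into the optimality of $\widetilde{\cm{B}}$, the quadratic terms rearrange into
\[
\frac{1}{T}\sum_{t=1}^T \|\langle\cm{D}, \cm{X}_t\rangle\|_\Fr^2 \leq \frac{2}{T}\sum_{t=1}^T \langle\cm{E}_t, \langle\cm{D}, \cm{X}_t\rangle\rangle .
\]
The crucial structural observation is that $\cm{D}$ is a difference of two rank-$R$ CP tensors, hence lies in the set of tensors of CP rank at most $2R$; this constrains $\cm{D}$ to a low-dimensional, but non-convex, manifold parameterized by roughly $R(\sum_{d=1}^m p_d + \sum_{d=1}^n q_d)$ free coordinates, which is what ultimately drives the effective dimension in the bound.

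For the RSC step I would lower-bound the left-hand side by $c\,\kappa_1\|\cm{D}\|_\Fr^2$ uniformly over the rank-$2R$ cone. After vectorizing, $T^{-1}\sum_t\|\langle\cm{D},\cm{X}_t\rangle\|_\Fr^2$ is a quadratic form in the sample second-moment operator of $\{\bm{x}_t\}$, whose population counterpart has smallest eigenvalue bounded below by $\kappa_1$ under Assumption \ref{assum:stationary}(ii) and the stated spectral-density bounds. The infinite-order moving-average representation lets me invoke a concentration result for linear-process Gram matrices, and a covering-number bound for the rank-$2R$ CP manifold upgrades this to a \emph{uniform} lower bound over the relevant directions. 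The sample-size requirement $T \gtrsim (\kappa_2/\kappa_1)^2 R(\sum_{d=1}^m p_d + \sum_{d=1}^n q_d)\log N$ is precisely what guarantees that the stochastic fluctuation of this quadratic form is dominated by half the population curvature.

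The deviation term is then handled by rewriting it as $\langle\cm{D}, \cm{G}\rangle$ with the score tensor $\cm{G} = T^{-1}\sum_t \cm{E}_t \circ \cm{X}_t$, using the contraction identity $\langle\cm{E}_t, \langle\cm{D},\cm{X}_t\rangle\rangle = \langle\cm{D}, \cm{E}_t\circ\cm{X}_t\rangle$, so that $\langle\cm{D},\cm{G}\rangle \leq \|\cm{D}\|_\Fr\,\|\cm{G}\|_{\rsc}$, where $\|\cm{G}\|_{\rsc}$ denotes the supremum of $\langle\cdot,\cm{G}\rangle$ over unit-Frobenius rank-$2R$ directions. Here Assumption \ref{assum:error}(ii) is essential: since $\cm{E}_t$ is independent of $\cm{F}_t\supseteq\sigma(\cm{X}_t)$, the summands form a martingale-difference sequence, and combining the sub-Gaussianity of $\bm{\xi}_t$ with the variance scale $\kappa_2\,\lambda_{\max}(\bm{\Sigma}_\varepsilon)$ of $\cm{E}_t\circ\cm{X}_t$ yields a sub-Gaussian tail for $\langle v, \cm{G}\rangle$ at each fixed direction $v$. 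A union bound over a net of the rank-$2R$ manifold—a product of unit spheres, one per mode per component, whose log-cardinality is of order $R(\sum_{d=1}^m p_d + \sum_{d=1}^n q_d)\log\{6(N+1)\}$—then gives $\|\cm{G}\|_{\rsc} \lesssim \sqrt{\sigma^2\kappa_2\lambda_{\max}(\bm{\Sigma}_\varepsilon)R(\sum_{d=1}^m p_d + \sum_{d=1}^n q_d)\log N / T}$ on the stated high-probability event.

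Finally I would combine the three pieces. The chain $c\,\kappa_1\|\cm{D}\|_\Fr^2 \leq T^{-1}\sum_t\|\langle\cm{D},\cm{X}_t\rangle\|_\Fr^2 \leq 2\|\cm{D}\|_\Fr\,\|\cm{G}\|_{\rsc}$ yields the estimation bound after dividing through by $\kappa_1\|\cm{D}\|_\Fr$, and feeding the resulting bound on $\|\cm{D}\|_\Fr$ back into the middle quadratic produces the prediction bound, which carries only a single power of $\kappa_1^{-1}$. I expect the main obstacle to be the uniform RSC-and-deviation analysis over the non-convex rank-$2R$ set: unlike Tucker-based or pure matrix problems, the CP manifold offers no orthogonality structure to exploit, so all uniform control must come from a carefully constructed covering of the product of per-mode unit spheres, and the martingale concentration must be threaded through this net while tracking the correct joint dependence on $\kappa_2$, $\lambda_{\max}(\bm{\Sigma}_\varepsilon)$, and $\sigma^2$.
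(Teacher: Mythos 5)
Your proposal is correct and follows essentially the same route as the paper's proof: the basic inequality from optimality of $\widetilde{\cm{B}}$, the observation that the error tensor has CP rank at most $2R$, a Hanson--Wright-type concentration for the linear-process Gram matrix made uniform via an $\epsilon$-net of the rank-$2R$ set built from products of per-mode unit spheres (log-cardinality $\asymp R(\sum_d p_d + \sum_d q_d)\log N$), and a martingale concentration bound for the cross term conditioned on the same-net upper bound of the quadratic form. The only detail you gloss over—and which the paper handles explicitly—is that the sub-Gaussian tail for the score term requires first restricting to the event $\sum_t\|\bm{\Delta}_{[m]}\bm{x}_t\|_2^2\lesssim \kappa_2 T$ (controlled by a second application of Hanson--Wright) before applying the self-normalized martingale bound, but this is a routine completion of the argument you outline.
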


By Theorem \ref{thm:nonsparsetr}, when the rank $R$ and the constants $\kappa_1, \kappa_2$ and $\lambda_{\max}(\bm{\Sigma}_{\varepsilon})$ are fixed, the estimation error rate is $\|\widetilde{\cm{B}} - \cm{B}^*\|_\Fr=  O_p(\sqrt{(\sum_{d=1}^m p_d +  \sum_{d=1}^n q_d)\log N/T})$, and the  prediction error rate is $T^{-1}\sum_{t=1}^T\lVert\langle \widetilde{\cm{B}} - \cm{B}^*, \cm{X}_{t}\rangle\rVert_\Fr^2= O_p((\sum_{d=1}^m p_d +  \sum_{d=1}^n q_d) \log N/T)$. The required sample size  $T$ is at least $O(\sum_{d=1}^m p_d +  \sum_{d=1}^n q_d)$, a substantial reduction from $O(p_y q_x)$.

\subsection{Sparse CP low-rank estimator}\label{subsec:est2}

In the ultra-high-dimensional setup where the ambient dimension of $\cm{B}$ can grow at an exponential rate of the sample size, i.e., $T\gtrsim \log(p_y q_x)$, we propose the sparse CP low-rank estimator as follows,
\begin{align*}
\widehat{\cm{B}} =[\![\widehat{\bm{\omega}};\widehat{\bm{B}}^{(1)},\dots, \widehat{\bm{B}}^{(N)}]\!] 
= \underset{\bm{\omega}, \bm{B}^{(1)}, \dots, \bm{B}^{(N)}}{ \argmin } &
\sum_{t=1}^T\Big\lVert\cm{Y}_t-  \sum_{r=1}^R\omega_r\bm{\beta}_{r,1}\circ\cdots\circ\bm{\beta}_{r,m}\big( \cm{X}_t\times_{d=1}^{n}\bm{\beta}_{r,m+d}  \big)  \Big\rVert_\Fr^2
\end{align*}
subject to $\lVert\bm{\beta}_{r,d}\rVert_2=1$ and $\lVert\bm{\beta}_{r,d}\rVert_0\leq s_d$ for $1\leq r\leq R$ and $1\leq d\leq N$,  where  $s_d$ is a predetermined positive integer that bounds the sparsity level of $\bm{\beta}_{r,d}$ for each mode $d$. The $\ell_0$-constraints correspond to the hard-thresholding method for enforcing sparsity. It is favorable than the Lasso-type penalized approach as the latter always leads to a biased estimator although it is sparse. In fact, the Lasso or soft-thresholding method is typically preferred for ensuring the convexity of the loss function. However, since the  above loss function  is nonconvex due to the low-rankness, it is not necessary to insist on
using the Lasso-type penalty. In addition,  the hard-thresholding method often provides greater numerical stability, as results tend to be more sensitive to the Lasso regularization parameter than to the sparsity upper bounds; see \cite{sun2017store, Huang2025} and the references therein.

\begin{assumption}[Sparsity]\label{assum:sparse}
For $1\leq r\leq R$ and $1\leq d\leq N$, $\|\bm{\beta}_{r,d}^*\|_0\leq s_d$, where the sparsity levels $s_1,\dots, s_N$ are less than $p_1,\dots, p_m, q_1,\dots, q_n$, respectively.
\end{assumption}

If $(s_1,\dots, s_N)\geq (p_1,\dots, p_m, q_1,\dots, q_n)$ holds elementwise, the sparsity constraints will become inactive, and $\widehat{\cm{B}}$ will reduce to $\widetilde{\cm{B}}$  in Section \ref{subsec:est1}. To avoid uninformative results, we impose the upper bounds on $s_1,\dots, s_N$ in Assumption \ref{assum:sparse}.

Similar to Theorem \ref{thm:nonsparsetr}, we provide nonasymptotic results for the sparse CP low-rank estimator as follows.

\begin{theorem}[Sparse CP low-rank estimator]\label{thm:sparsetr}
Suppose that the observed response and predictor series $\{(\cm{Y}_t, \cm{X}_t)\}_{t=1}^T$ follow model \eqref{eq:generalmodel}.
Under Assumptions \ref{assum:unique}--\ref{assum:sparse}, if $\kappa_1>0$ and $T\gtrsim (\kappa_2/\kappa_1)^2 Rs  \min\{\log(p_y q_x), \log[21e p_y q_x/(2Rs)]\}$, then
\[
\|\widehat{\cm{B}} - \cm{B}^*\|_\Fr \lesssim   \sqrt{\frac{\sigma^2  \kappa_2 \lambda_{\max}(\bm{\Sigma}_{\varepsilon}) Rs \min\{\log(p_yq_x), \log[ep_yq_x/(2Rs)]\} }{\kappa_1^2 T} }, \quad\text{and}
\]
\[
\frac{1}{T}\sum_{t=1}^T\lVert\langle \widehat{\cm{B}} - \cm{B}^*, \cm{X}_{t}\rangle\rVert_\Fr^2
\lesssim  \frac{\sigma^2  \kappa_2 \lambda_{\max}(\bm{\Sigma}_{\varepsilon}) Rs \min\{\log(p_yq_x), \log[ep_yq_x/(2Rs)]\} }{\kappa_1 T} 
\] 
with probability at least $1-5e^{- 2Rs\min\{\log(p_y q_x), \log[e p_y q_x/(2Rs)]\}}$, where $s=\prod_{d=1}^{N}s_d$.
\end{theorem}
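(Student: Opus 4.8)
The plan is to follow the constrained least-squares $M$-estimation route, adapting the argument behind Theorem~\ref{thm:nonsparsetr} so that the sparsity constraint replaces the CP geometry in the entropy computation. Write $\Delta=\widehat{\cm{B}}-\cm{B}^*$, and let $\bm{D}\in\mathbb{R}^{p_y\times q_x}$ be the matricization of $\Delta$ obtained by collapsing its first $m$ modes to rows and its last $n$ modes to columns, so that $\vect(\langle\Delta,\cm{X}_t\rangle)=\bm{D}\bm{x}_t$. Since $\cm{B}^*$ is feasible for the program by Assumption~\ref{assum:sparse} and $\widehat{\cm{B}}$ is the minimizer, expanding the objective around $\cm{B}^*$ and using $\cm{Y}_t=\langle\cm{B}^*,\cm{X}_t\rangle+\cm{E}_t$ gives the basic inequality
\[
\frac{1}{T}\sum_{t=1}^T\|\bm{D}\bm{x}_t\|_2^2 \le \frac{2}{T}\sum_{t=1}^T \bm{\varepsilon}_t^\top\bm{D}\bm{x}_t = 2\Big\langle \bm{D},\ \frac{1}{T}\sum_{t=1}^T\bm{\varepsilon}_t\bm{x}_t^\top\Big\rangle .
\]

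First I would pin down the low-dimensional set containing $\Delta$. Both $\widehat{\cm{B}}$ and $\cm{B}^*$ are sums of $R$ rank-one tensors whose mode-$d$ factors have at most $s_d$ nonzero entries, so $\Delta$ is a sum of at most $2R$ rank-one tensors, and the vectorization of each such term is a Kronecker product $\bm{u}_1\otimes\cdots\otimes\bm{u}_N$ with at most $\prod_{d=1}^N s_d = s$ nonzeros. Hence $\vect(\Delta)$ is $(2Rs)$-sparse, i.e., it lies in the union of $\binom{p_yq_x}{2Rs}$ coordinate subspaces of $\mathbb{R}^{p_yq_x}$. This is the key reduction replacing the CP-covering used for Theorem~\ref{thm:nonsparsetr}: the metric entropy of the $(2Rs)$-sparse unit sphere is of order $Rs\log[ep_yq_x/(2Rs)]$, while a cruder per-coordinate union bound gives $Rs\log(p_yq_x)$, and taking the smaller of the two produces the stated $\min\{\cdot,\cdot\}$.

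Next I would establish two matching bounds on $\mathbb{K}=\{\bm{d}\in\mathbb{R}^{p_yq_x}:\|\bm{d}\|_0\le 2Rs\}$. For the lower bound (restricted strong convexity), I would show that with high probability $\frac{1}{T}\sum_t\|\bm{D}\bm{x}_t\|_2^2 \ge c\,\kappa_1\|\Delta\|_{\Fr}^2$ uniformly over $\mathbb{K}$; this uses the moving-average representation of Assumption~\ref{assum:stationary}(ii) together with $\lambda_{\min}(\bm{\Sigma}_x)\ge\kappa_1>0$ and a concentration inequality for quadratic forms of sub-Gaussian linear processes restricted to sparse supports, and it is precisely here that the requirement $T\gtrsim(\kappa_2/\kappa_1)^2 Rs\min\{\cdot,\cdot\}$ is consumed. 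For the upper bound (deviation), I would control $\sup_{\bm{d}\in\mathbb{K},\,\|\bm{d}\|_2\le 1}\langle\bm{D},\,\frac{1}{T}\sum_t\bm{\varepsilon}_t\bm{x}_t^\top\rangle$. Crucially, Assumption~\ref{assum:error}(ii) makes $\{\bm{\varepsilon}_t\}$ a martingale-difference sequence with respect to $\cm{F}_t$ while $\bm{x}_t$ is $\cm{F}_t$-measurable, so for each fixed $\bm{D}$ the inner product is a sum of martingale differences with sub-Gaussian increments whose conditional variance is governed by $\sigma^2$, $\lambda_{\max}(\bm{\Sigma}_\varepsilon)$, and $\kappa_2$; a martingale sub-Gaussian tail bound, combined with an $\eta$-net over each support and a union bound over the $\binom{p_yq_x}{2Rs}$ supports, then yields a deviation of order $\sqrt{\sigma^2\kappa_2\lambda_{\max}(\bm{\Sigma}_\varepsilon)Rs\min\{\cdot,\cdot\}/T}$.

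Finally, inserting both bounds into the basic inequality gives $c\kappa_1\|\Delta\|_{\Fr}^2 \le 2\|\Delta\|_{\Fr}\cdot(\text{deviation bound})$; cancelling one factor of $\|\Delta\|_{\Fr}$ yields the estimation bound, and substituting it back into the lower bound yields the prediction bound, which explains the $\kappa_1^2$ versus $\kappa_1$ in the two denominators. The failure probability $5e^{-2Rs\min\{\cdot,\cdot\}}$ is the sum of the constantly many exceptional events from the restricted-strong-convexity and deviation steps, each controlled at the scale of the metric entropy $2Rs\min\{\cdot,\cdot\}$. I expect the main obstacle to be the uniform control in these two steps under temporal dependence: both rely on concentration for sub-Gaussian linear processes, and the covering must simultaneously absorb the combinatorial support selection (the $\binom{p_yq_x}{2Rs}$ factor) and the spectral-density constants $\kappa_1,\kappa_2$, rather than the i.i.d.-design tools available in the existing tensor-regression literature.
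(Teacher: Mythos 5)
Your proposal follows essentially the same route as the paper's proof: the basic inequality from optimality, the observation that $\vect(\widehat{\cm{B}}-\cm{B}^*)$ is $2Rs$-sparse, restricted strong convexity over the sparse set via a Hanson--Wright-type concentration for sub-Gaussian linear processes with a union bound over the $\binom{p_yq_x}{2Rs}$ supports, a martingale concentration bound for the cross term combined with a net over the sparse unit sphere (conditioning on the high-probability event that bounds the predictable quadratic variation by a multiple of $\kappa_2 T$), and the final cancellation of one factor of $\|\widehat{\cm{B}}-\cm{B}^*\|_\Fr$. The structure, the source of the $\min\{\log(p_yq_x),\log[ep_yq_x/(2Rs)]\}$ term, the role of the sample-size condition, and the explanation of the $\kappa_1^2$ versus $\kappa_1$ denominators all match the paper's argument.
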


When   $R, \kappa_1, \kappa_2$, and $\lambda_{\max}(\bm{\Sigma}_{\varepsilon})$ are fixed, Theorem \ref{thm:sparsetr} implies the estimation and prediction error rates $\|\widehat{\cm{B}} - \cm{B}^*\|_\Fr=  O_p(\sqrt{s \log(p_y q_x)/T})$ and $T^{-1}\sum_{t=1}^T\lVert\langle \widehat{\cm{B}} - \cm{B}^*, \cm{X}_{t}\rangle\rVert_\Fr^2= O_p(s\log(p_y q_x)/T)$. They are much sharper than the results for the non-sparse estimator in Theorem \ref{thm:nonsparsetr},  as $\log(p_y q_x) = \sum_{d=1}^n \log(q_d) + \sum_{d=1}^m \log(p_d)$ is much smaller than $\sum_{d=1}^m p_d  + \sum_{d=1}^n q_d$.

\begin{remark}[Sparse estimation without low-rank structures]\label{remark:lasso}
Due to the tensor structure of the data, the primary goal of our model and estimators is to capture and interpret the multi-dimensional structures across various modes of the response and predictors. This requires leveraging the tensor factorization and hence low-rank structures. An alternative dimension reduction approach is to simply vectorize the response and predictor tensors, and perform the sparse regression  $\vect(\cm{Y}_t) = \bm{B} \vect(\cm{X}_t)+\vect(\cm{E}_t)$, where $\bm{B}$ is the $p_y\times q_x$ matricization of the tensor $\cm{B}$ from \eqref{eq:model}, obtained by unfolding its first $m$ modes into rows and the last $n$ modes into columns. A Lasso estimator can then be applied to estimate $\bm{B}$, yielding the estimation error rate $\|\widehat{\bm{B}}_{\textrm{Lasso}} - \bm{B}^*\|_\Fr =O_p(\sqrt{S\log (p_yq_x)/T})$, assuming exact entrywise sparsity, where $S=\|\vect(\bm{B}^*)\|_0$. This rate is comparable to that in Theorem \ref{thm:sparsetr}, since $Rs \asymp S$.
However, this approach does not preserve any low-rank structure in the estimated tensor $\cm{B}$, and consequently, the inherent multi-dimensional information in the data will be lost.
\end{remark}

\subsection{Application to tensor autoregression}\label{subsec:tarest}

In this section, we consider the TAR($L$) model,  where $\cm{X}_t$ comprises $\cm{Y}_{t-1}, \dots, \cm{Y}_{t-L}$,  and  analyze the proposed non-sparse and sparse CP low-rank estimators.

Some notations and assumptions can be simplified  now that  $\bm{x}_t=(\bm{y}_{t-1}^\top, \dots, \bm{y}_{t-L}^\top)^\top$. Specifically, we have $q_d=p_d$ for $1\leq d\leq m$. If $L>1$, then  $n=m+1$, $N=2m+1$, and $q_{n}=L$; otherwise, $n=m$ and $N=2m$. In both cases, $q_x=p_y L$.  Assumption \ref{assum:error}(ii) is no longer needed, since it directly follows from the independence of $\bm{\xi}_t$ required in Assumption \ref{assum:error}(i), and $\cm{F}_t=\sigma\{ \bm{\xi}_{t-1},  \bm{\xi}_{t-2}, \dots\}$. 

Moreover, Assumption \ref{assum:stationary}(i) can be replaced by the stationarity condition provided in  Remark \ref{remark:stationary}. Furthermore, the stationarity implies that  $\{\bm{y}_t\}$ has the infinite-order moving average form, $\bm{y}_t=\bm{\varepsilon}_t+\sum_{j=1}^\infty \bm{\Psi}_j^* \bm{\varepsilon}_{t-j}$, where the coefficient matrices $\bm{\Psi}_j^*\in\mathbb{R}^{p_y\times p_y}$ depend on $\cm{B}^*$. Consequently, under Assumption \ref{assum:error}(i),  we can  verify that Assumption \ref{assum:stationary}(ii)  is automatically fulfilled, where $\bm{\epsilon}_t=\bm{\xi}_t$. Therefore, the corollaries below are stated without Assumption \ref{assum:stationary}. 

In addition, due to the specific definition of $\bm{x}_t$ for the TAR($L$) case and the removal of Assumption \ref{assum:stationary}, we can replace  $\kappa_1$ and $\kappa_2$ with specific quantities related to  $\bm{\Sigma}_\varepsilon$ and $\bm{\Psi}_j^*$'s. Denote $\kappa_1^{\textrm{AR}}=	\lambda_{\min}(\bm{\Sigma}_\varepsilon)\mu_{\min}(\bm{\Psi}_*)$  and  $\kappa_2^{\textrm{AR}}=\lambda_{\max}(\bm{\Sigma}_\varepsilon)\mu_{\max}(\bm{\Psi}_*)$. Here $\mu_{\min}(\bm{\Psi}_*) = \min_{|z|=1}\lambda_{\min}(\bm{\Psi}_*(z)\bm{\Psi}_*^{\HH}(z))$ and $\mu_{\max}(\bm{\Psi}_*) = \max_{|z|=1}\lambda_{\max}(\bm{\Psi}_*(z)\bm{\Psi}_*^{\HH}(z))$, where
$\bm{\Psi}_*(z) = \bm{I}_{p_y}+\sum_{j=1}^{\infty}\bm{\Psi}_j^* z^j$, and $\bm{\Psi}_*^{\HH}(z)$ is the conjugate transpose of $\bm{\Psi}_*(z)$, for  $z\in\mathbb{C}$. By techniques similar to those in \cite{Basu2015}, we can verify that $\kappa_1^{\textrm{AR}}>0$.

Corollaries \ref{cor:nonsparsetrTAR} and \ref{cor:sparsetrTAR} provide results for the estimation of TAR($L$) models analogous to Theorems \ref{thm:nonsparsetr} and \ref{thm:sparsetr}, respectively. 

\begin{corollary}[CP low-rank TAR estimator]\label{cor:nonsparsetrTAR}
	Suppose that the observed time series $\{\cm{Y}_t\}$ follow a stationary TAR($L$) model in the form of \eqref{eq:TensorAR}. Under Assumptions \ref{assum:unique} and \ref{assum:error}(i), if  $T \gtrsim (\kappa_2^\ar L/\kappa_1^\ar)^2  R(\sum_{d=1}^m p_d +L) \log N$, then
	\[
	\|\widetilde{\cm{B}} - \cm{B}^*\|_\Fr \lesssim   \sqrt{\frac{\sigma^2  \kappa_2^\ar \lambda_{\max}(\bm{\Sigma}_{\varepsilon}) R(\sum_{d=1}^m p_d + L) \log N }{(\kappa_1^{\ar})^2 T} }, \quad\text{and}
	\]
	\[
	\frac{1}{T}\sum_{t=1}^T\lVert\langle \widetilde{\cm{B}} - \cm{B}^*, \cm{X}_{t}\rangle\rVert_\Fr^2
	\lesssim  \frac{\sigma^2  \kappa_2^\ar \lambda_{\max}(\bm{\Sigma}_{\varepsilon}) R(\sum_{d=1}^m p_d + L) \log N}{\kappa_1^\ar T} 
	\] 
	with probability at least $1-5  e^{-2 R(2\sum_{d=1}^m p_d+L) \log\{6(N+1)\}}$.
\end{corollary}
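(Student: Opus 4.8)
The plan is to obtain the corollary by specializing the proof of Theorem \ref{thm:nonsparsetr} to the autoregressive design $\bm{x}_t=(\bm{y}_{t-1}^\top,\dots,\bm{y}_{t-L}^\top)^\top$, verifying that its hypotheses continue to hold in their simplified forms and, crucially, replacing the generic spectral constants $\kappa_1,\kappa_2$ by the AR-specific quantities $\kappa_1^\ar,\kappa_2^\ar$. First I would check the assumptions. Assumption \ref{assum:unique} is imposed directly, and Assumption \ref{assum:error}(i) is assumed. Under the AR design $\cm{F}_t=\sigma\{\bm\xi_{t-1},\bm\xi_{t-2},\dots\}$, so Assumption \ref{assum:error}(ii) holds automatically because $\bm\xi_t$ is independent of its own past. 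Stationarity (Assumption \ref{assum:stationary}(i)) is guaranteed by the spectral-radius condition $\rho(\underline{\bm B})<1$ of Remark \ref{remark:stationary}, which also yields the causal moving-average representation $\bm{y}_t=\bm\varepsilon_t+\sum_{j\geq1}\bm\Psi_j^*\bm\varepsilon_{t-j}$ with $\bm\varepsilon_t=\bm\Sigma_\varepsilon^{1/2}\bm\xi_t$ and exponentially decaying, hence summable, $\bm\Psi_j^*$. Stacking the lagged $\bm{y}_t$'s then expresses $\bm{x}_t=\sum_{j\geq1}\bm\Phi_j^x\bm\xi_{t-j}$ with summable $\bm\Phi_j^x$, so Assumption \ref{assum:stationary}(ii) is met with $\bm\epsilon_t=\bm\xi_t$.

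The heart of the argument is translating the spectral constants. Because $\bm{x}_t$ is a consecutive length-$L$ window of the process $\{\bm{y}_t\}$, its covariance $\bm\Sigma_x$ is the block-Toeplitz matrix $[\bm\Gamma_y(j-i)]_{i,j=1}^L$, whose eigenvalues are sandwiched by the extreme eigenvalues, over the unit circle, of the spectral density of $\bm{y}_t$ (the Toeplitz-type spectral bounds underlying \cite{Basu2015}). Since $2\pi$ times this spectral density equals $\bm\Psi_*(e^{-i\theta})\bm\Sigma_\varepsilon\bm\Psi_*^{\HH}(e^{-i\theta})$, the eigenvalue inequalities $\lambda_{\min}(ABA^{\HH})\geq\lambda_{\min}(B)\lambda_{\min}(AA^{\HH})$ and its max-counterpart give $\kappa_1^\ar\leq\lambda_{\min}(\bm\Sigma_x)\leq\lambda_{\max}(\bm\Sigma_x)\leq\kappa_2^\ar$. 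These are exactly the bounds that the proof of Theorem \ref{thm:nonsparsetr} extracts from Lemma \ref{lemma:eigen}: the lower bound supplies the restricted-strong-convexity curvature, and $\lambda_{\max}(\bm\Sigma_x)\leq\kappa_2^\ar$ controls the martingale deviation of the noise--predictor cross term, producing the factor $\kappa_2^\ar$ (with no $L$) in the error bounds.

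The one genuinely delicate point, and the main obstacle, is that the spectral density of the stacked process $\{\bm{x}_t\}$ is rank-deficient: writing $\bm{x}_t$ through the transfer matrix $\bm{G}(e^{-i\theta})=(e^{-i\theta}\bm I_{p_y},\dots,e^{-iL\theta}\bm I_{p_y})^\top$ gives $f_x(\theta)=\bm{G}f_y(\theta)\bm{G}^{\HH}$, whose minimum eigenvalue is identically zero, so the generic constant $\kappa_1$ of Theorem \ref{thm:nonsparsetr} vanishes and the theorem cannot be invoked verbatim. The resolution is that $\kappa_1$ enters the proof only through the lower bound on $\lambda_{\min}(\bm\Sigma_x)$, which we have just bounded below by $\kappa_1^\ar>0$ via the block-Toeplitz structure rather than the degenerate pointwise density; positivity of $\kappa_1^\ar$ follows from the $\bm\Psi_*$-based argument of \cite{Basu2015}. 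Correspondingly, the pointwise maximum $\max_{|z|=1}\lambda_{\max}(f_x)$ equals $L\max_{|z|=1}\lambda_{\max}(f_y)$, since the nonzero eigenvalues of $\bm{G}f_y\bm{G}^{\HH}$ are $L$ times those of $f_y$ (because $\bm{G}^{\HH}\bm{G}=L\bm I_{p_y}$); this factor $L$ governs the concentration of the sample Gram matrix $\widehat{\bm\Sigma}_x$ and is precisely what inflates the sample-size requirement to $(\kappa_2^\ar L/\kappa_1^\ar)^2$ while leaving the error bounds untouched. Carefully keeping these two distinct roles of the maximum-eigenvalue constant separate is the crux.

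Finally I would substitute the dimensions of the AR design. With $q_d=p_d$ for $1\leq d\leq m$, $q_{m+1}=L$, $N=2m+1$, and $q_x=p_yL$, the generic dimension sum $\sum_{d=1}^m p_d+\sum_{d=1}^n q_d$ becomes $2\sum_{d=1}^m p_d+L$, which is of the same order as $\sum_{d=1}^m p_d+L$. Substituting this together with $\kappa_1^\ar,\kappa_2^\ar$ into the conclusions of Theorem \ref{thm:nonsparsetr} then yields the stated estimation rate, prediction rate, sample-size condition, and the probability $1-5e^{-2R(2\sum_{d=1}^m p_d+L)\log\{6(N+1)\}}$.
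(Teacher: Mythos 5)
Your proposal is correct and follows essentially the same route as the paper: it specializes the proof of Theorem \ref{thm:nonsparsetr} to the lagged design, notes that Assumptions \ref{assum:error}(ii) and \ref{assum:stationary} hold automatically with $\bm{\epsilon}_t=\bm{\xi}_t$, and replaces $\kappa_1,\kappa_2$ via exactly the two bounds the paper isolates in Lemma \ref{lemma:eigenAR} --- namely $\kappa_1^{\ar}\leq\lambda_{\min}(\bm{\Sigma}_x)\leq\lambda_{\max}(\bm{\Sigma}_x)\leq\kappa_2^{\ar}$ from the block-Toeplitz structure of $\underline{\bm{\Sigma}}_y$, and $\lambda_{\max}(\underline{\bm{\Sigma}}_x)\leq\kappa_2^{\ar}L$ from the stacked transfer function with $\bm{G}^{\HH}\bm{G}=L\bm{I}_{p_y}$, which is what puts the extra $L$ into the Hanson--Wright rate and the sample-size condition but not the error bounds. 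Your identification of the degenerate pointwise spectral density of $\{\bm{x}_t\}$ and of the two distinct roles of the maximum-eigenvalue constant is precisely the content of the paper's argument.
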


\begin{corollary}[Sparse CP low-rank TAR estimator]\label{cor:sparsetrTAR}
Suppose that the observed time series $\{\cm{Y}_t\}$ follow a stationary TAR($L$) model in the form of \eqref{eq:TensorAR}. Under Assumptions \ref{assum:unique}, \ref{assum:error}(i), and \ref{assum:sparse},  if $T\gtrsim (\kappa_2^\ar L/\kappa_1^\ar)^2 Rs  \min\{\log(p_y^2 L), \log[21e p_y^2 L/(2Rs)]\}$, then 
\[
\|\widehat{\cm{B}} - \cm{B}^*\|_\Fr \lesssim   \sqrt{\frac{\sigma^2  \kappa_2^\ar \lambda_{\max}(\bm{\Sigma}_{\varepsilon}) Rs \min\{\log(p_y^2 L), \log[ep_y^2 L/(2Rs)]\} }{(\kappa_1^\ar)^2 T} }, \quad\text{and}
\]
\[
\frac{1}{T}\sum_{t=1}^T\lVert\langle \widehat{\cm{B}} - \cm{B}^*, \cm{X}_{t}\rangle\rVert_\Fr^2
\lesssim  \frac{\sigma^2  \kappa_2^\ar \lambda_{\max}(\bm{\Sigma}_{\varepsilon}) Rs \min\{\log(p_y^2 L), \log[ep_y^2 L/(2Rs)]\} }{\kappa_1^\ar T} 
\] 
with probability at least $1-5e^{- 2Rs\min\{\log(p_y^2 L), \log[e p_y^2 L/(2Rs)]\}}$, where $s=\prod_{d=1}^{N}s_d$.
\end{corollary}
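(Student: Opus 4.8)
The plan is to recognize that the stationary TAR($L$) model \eqref{eq:TensorAR} is a special case of the general tensor stochastic regression \eqref{eq:generalmodel}, in which $\cm{X}_t$ is the stack of the lagged responses $\cm{Y}_{t-1},\dots,\cm{Y}_{t-L}$, so that $\bm{x}_t=(\bm{y}_{t-1}^\top,\dots,\bm{y}_{t-L}^\top)^\top$, $q_d=p_d$ for $1\le d\le m$, $q_n=L$, $N=2m+1$ (or $N=2m$ when $L=1$), and $q_x=p_yL$. The proof then reduces to two tasks: verifying the hypotheses of Theorem \ref{thm:sparsetr} in this autoregressive design, and translating the generic spectral constants $\kappa_1,\kappa_2$ and the ambient dimension $q_x$ into their AR-specific counterparts $\kappa_1^{\ar},\kappa_2^{\ar}$ and $p_y^2L$.

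First I would check Assumptions \ref{assum:unique}--\ref{assum:sparse}. Assumptions \ref{assum:unique} and \ref{assum:sparse} are imposed directly and Assumption \ref{assum:error}(i) is assumed, while Assumption \ref{assum:error}(ii) now holds automatically: since $\bm{y}_{t-\ell}$ is measurable with respect to $\{\bm{\xi}_{t-\ell},\bm{\xi}_{t-\ell-1},\dots\}$ for each $\ell\ge1$, the filtration collapses to $\cm{F}_t=\sigma\{\bm{\xi}_{t-1},\bm{\xi}_{t-2},\dots\}$, and the i.i.d. property of $\{\bm{\xi}_t\}$ gives $\bm{\xi}_t\perp\cm{F}_t$. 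Stationarity (Assumption \ref{assum:stationary}(i)) is guaranteed by the condition $\rho(\bm{\underline{B}})<1$ of Remark \ref{remark:stationary}, which furnishes the causal representation $\bm{y}_t=\sum_{j\ge0}\bm{\Psi}_j^*\bm{\varepsilon}_{t-j}$ with $\bm{\Psi}_0^*=\bm{I}_{p_y}$ and $\sum_j\|\bm{\Psi}_j^*\|_\Fr<\infty$; writing $\bm{\varepsilon}_t=\bm{\Sigma}_\varepsilon^{1/2}\bm{\xi}_t$ and taking $\bm{\epsilon}_t=\bm{\xi}_t$, $q_\epsilon=p_y\le q_x$, the stacked predictor $\bm{x}_t$ inherits an infinite-order MA form with summable coefficients, so Assumption \ref{assum:stationary}(ii) is met as well.

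The technical core is to replace $\kappa_1,\kappa_2$ by their AR analogues. Collecting the MA coefficients block by block shows that the $\ell$th block of $\bm{\Phi}_x(z)$ equals $z^\ell\bm{\Psi}_*(z)\bm{\Sigma}_\varepsilon^{1/2}$, so $\bm{\Phi}_x(z)=\bm{a}(z)\otimes\bigl(\bm{\Psi}_*(z)\bm{\Sigma}_\varepsilon^{1/2}\bigr)$ with $\bm{a}(z)=(z,z^2,\dots,z^L)^\top$; since $\bm{a}^{\HH}(z)\bm{a}(z)=L$ on $|z|=1$, this gives the clean identity $\bm{\Phi}_x^{\HH}(z)\bm{\Phi}_x(z)=L\,\bm{\Sigma}_\varepsilon^{1/2}\bm{\Psi}_*^{\HH}(z)\bm{\Psi}_*(z)\bm{\Sigma}_\varepsilon^{1/2}$, exhibiting the multiplicative factor $L$ produced by the $L$-fold stacking. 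At the same time, $\bm{\Sigma}_x$ is block-Toeplitz with symbol $\bm{\Psi}_*(z)\bm{\Sigma}_\varepsilon\bm{\Psi}_*^{\HH}(z)$, so the standard block-Toeplitz eigenvalue bounds, together with $\lambda_{\min}(\bm{\Psi}_*\bm{\Sigma}_\varepsilon\bm{\Psi}_*^{\HH})\ge\lambda_{\min}(\bm{\Sigma}_\varepsilon)\mu_{\min}(\bm{\Psi}_*)$ and the matching upper bound, yield $\kappa_1^{\ar}\le\lambda_{\min}(\bm{\Sigma}_x)\le\lambda_{\max}(\bm{\Sigma}_x)\le\kappa_2^{\ar}$. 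Feeding these spectral bounds into the restricted-strong-convexity and deviation steps underlying Theorem \ref{thm:sparsetr}, with $q_x=p_yL$ (hence $\log(p_yq_x)=\log(p_y^2L)$) and $s=\prod_{d=1}^Ns_d$, produces the estimation bound, prediction bound, and probability of Corollary \ref{cor:sparsetrTAR} governed by $\kappa_1^{\ar},\kappa_2^{\ar}$, while the stacking inflates the sample-size threshold by the stated factor $L$.

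I expect the main obstacle to be this spectral bookkeeping rather than any new probabilistic argument. Because the autoregressive design makes $\bm{\Phi}_x(z)$ tall ($Lp_y\times p_y$) rather than square, the generic relation $\kappa_1\le\lambda_{\min}(\bm{\Sigma}_x)$ behind Lemma \ref{lemma:eigen} is unavailable and must be replaced by the block-Toeplitz argument above, after which one has to propagate the factor $L$ through the sample-size requirement while keeping the error and prediction rates free of $L$. The second delicate point is verifying $\kappa_1^{\ar}>0$: this requires that stationarity, i.e.\ $\rho(\bm{\underline{B}})<1$, forces $\mu_{\min}(\bm{\Psi}_*)=\min_{|z|=1}\lambda_{\min}(\bm{\Psi}_*(z)\bm{\Psi}_*^{\HH}(z))$ to remain bounded away from zero uniformly on the unit circle, which is precisely the stable-filter spectral lower bound obtained by the techniques of \cite{Basu2015}; combined with $\bm{\Sigma}_\varepsilon\succ0$ this gives $\kappa_1^{\ar}>0$ and hence a nondegenerate curvature constant.
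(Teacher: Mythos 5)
Your proposal is correct and follows essentially the same route as the paper: the corollary is proved by rerunning the proof of Theorem \ref{thm:sparsetr} with $\bm{x}_t=(\bm{y}_{t-1}^\top,\dots,\bm{y}_{t-L}^\top)^\top$, where your factorization $\bm{\Phi}_x(z)=(z\bm{I},\dots,z^L\bm{I})^\top\bm{\Psi}_*(z)\bm{\Sigma}_\varepsilon^{1/2}$ giving $\lambda_{\max}(\underline{\bm{\Sigma}}_x)\leq\kappa_2^{\ar}L$, and your identification of $\bm{\Sigma}_x$ with the block-Toeplitz matrix $\underline{\bm{\Sigma}}_y$ to recover the two-sided bound $\kappa_1^{\ar}\leq\lambda_{\min}(\bm{\Sigma}_x)\leq\lambda_{\max}(\bm{\Sigma}_x)\leq\kappa_2^{\ar}$, are precisely the content of the paper's Lemma \ref{lemma:eigenAR}. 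The verification that Assumption \ref{assum:error}(ii) and Assumption \ref{assum:stationary}(ii) hold automatically with $\bm{\epsilon}_t=\bm{\xi}_t$, and the propagation of the factor $L$ into the Hanson--Wright exponent and hence the sample-size threshold only, likewise match the paper's argument.
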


We can compare the estimation error rate in Corollary \ref{cor:nonsparsetrTAR} with that of the non-convex estimator, denoted $\widetilde{\cm{B}}_{\textrm{Tucker}}$, for the Tucker low-rank TAR model in \cite{wang2024high}, where the coefficient tensor $\cm{B}^*$ is assumed to have Tucker ranks $r_1, \dots, r_{N}$. The latter has the estimation error rate $\|\widetilde{\cm{B}}_{\textrm{Tucker}} - \cm{B}^*\|_\Fr=O_p(\sqrt{(\sum_{d=1}^m  r_d p_d + r_N L + \prod_{d=1}^{N} r_d)/T})$. By  Remark \ref{remark:CPtucker} in the supplementary file, $\cm{B}^*$ can be rewritten in a CP low-rank form with CP-rank $\prod_{d=1}^{N} r_d$. In this case, Corollary \ref{cor:nonsparsetrTAR} yields $\|\widetilde{\cm{B}} - \cm{B}^*\|_\Fr = O_p(\sqrt{\prod_{d=1}^{N} r_d(\sum_{d=1}^{m}p_d+L)\log N/T})$, which is less sharp due to the multiplicative factor $\prod_{d=1}^{N} r_d$. Thus, when $\cm{B}^*$ is Tucker low-rank, the proposed CP low-rank estimator $\widetilde{\cm{B}}$ remains consistent, albeit less efficient than the Tucker low-rank estimator. By contrast, suppose that $\cm{B}^*$ is CP low-rank, as assumed in Corollary  \ref{cor:nonsparsetrTAR}. As discussed in Remark  \ref{remark:CPtucker} in the supplementary file, it has the Tucker decomposition form with Tucker ranks $r_1=\cdots=r_N=R$. In this case, $\|\widetilde{\cm{B}}_{\textrm{Tucker}} - \cm{B}^*\|_\Fr=O_p(\sqrt{[R(\sum_{d=1}^m  p_d +  L) + R^N]/T})$, whereas
 $\|\widetilde{\cm{B}} - \cm{B}^*\|_\Fr = O_p(\sqrt{R(\sum_{d=1}^{m}p_d+L)\log N/T})$, which can be much sharper than the former as it does not involve $R^N$. Additionally, it is  worth noting that the TAR model in  \cite{wang2024high} does not allow sparse factor matrices, while $\widehat{\cm{B}}$ in Corollary \ref{cor:sparsetrTAR} accommodates such cases. Consequently, even if $\cm{B}^*$ is Tucker low-rank, as long as its factor matrices are sparse, $\widehat{\cm{B}}$  generally outperforms $\widetilde{\cm{B}}_{\textrm{Tucker}}$.

\begin{remark}[Sparse estimation in the special case with $m=1$]\label{remark:varest}
Consider the special case $m=1$ discussed in Remark \ref{remark:var}. Here, the proposed TAR model reduces to a CP low-rank counterpart of the high-dimensional VAR model for $\bm{y}_t \in \mathbb{R}^{p_y}$ in \cite{wang2022high}, which assumes the coefficient tensor $\cm{B}^* \in \mathbb{R}^{p_y \times p_y \times L}$ admits a sparse Tucker decomposition with ranks $(r_1, r_2, r_3)$ and $\ell_1$-penalized SHORR estimator achieving error rate $O_p(r\sqrt{s\log(p_y^2 L)/T})$, where $r = r_1 r_2 r_3$ and $s = s_1 s_2 s_3$. By Remark~\ref{remark:CPtucker} in the supplement, such $\cm{B}^*$ also satisfies our sparse CP low-rank assumption with rank $r$, yielding a sharper rate via Corollary \ref{cor:sparsetrTAR}:
\[
\|\widehat{\cm{B}} - \cm{B}^*\|_\Fr = O_p\left(\sqrt{r s \min\{\log(p_y^2 L), \log[21e p_y^2 L/(2rs)]\}/T}\right).
\]
Conversely, if the CP assumption holds with rank $R$, then SHORR has rate $O_p(R^3\sqrt{s\log(p_y^2 L)/T})$, while our estimator achieves a sharper rate with only $\sqrt{R}$ dependence. The improved efficiency of our method arises from the interplay between CP and Tucker structures and the use of $\ell_0$-constraints rather than $\ell_1$-penalties.
\end{remark}

\section{Simulation experiments}\label{sec:sim}
This section presents two simulation experiments, which respectively verify the consistency of the estimators in Section \ref{sec:est} and compare their estimation efficiency with the Tucker low-rank estimator in the context of TAR models. Both the proposed algorithm and an additional experiment demonstrating its computational efficiency are included in the supplementary file, in Sections \ref{sec:algo} and \ref{sec:sim3}, respectively.

The following two data generating processes (DGPs)  are considered:
\begin{itemize}
	\item DGP 1 (Tensor stochastic regression): $\cm{Y}_t=\langle\cm{B},\cm{X}_{t}\rangle +\cm{E}_t$, where $
\cm{Y}_t\in\mathbb{R}^{p_1\times p_2\times p_3}$, and  $\cm{X}_t=(\cm{X}_{i,j,k,t})\in\mathbb{R}^{q_1\times q_2\times q_3}$  is an exogenous predictor whose entries are generated independently from a stationary AR(1) process $\cm{X}_{i,j,k,t}=\alpha_{i,j,k} \cm{X}_{i,j,k,t-1}+\epsilon_{i,j,k,t}$, with $\epsilon_{i,j,k,t}\overset{i.i.d.}{\sim} N(0,1)$.
	
	\item DGP 2 (Tensor autoregression): $\cm{Y}_t=\langle\cm{B},\cm{Y}_{t-1}\rangle +\cm{E}_t$,  where $
	\cm{Y}_t\in\mathbb{R}^{p_1\times p_2\times p_3}$.
\end{itemize}
For both DGPs, $\vect(\cm{E}_t) \overset{i.i.d.}{\sim} N(\bm{0},\bm{I})$, and  we set $p_d=q_d:=p$ for $1\leq d\leq 3$. For DGP 1, the coefficients $\alpha_{i,j,k}$ are drawn from $\text{Unif}(-1,1)$. For both DGPs, we use the same coefficient tensor  $\cm{B}=\sum_{r=1}^R\omega_r\bm{\beta}_{r,1}\circ\bm{\beta}_{r,2}\circ\bm{\beta}_{r,3}\circ\bm{\beta}_{r,4}\circ\bm{\beta}_{r,5}\circ\bm{\beta}_{k,6}$, which is generated as follows for both non-sparse and sparse cases. First, all nonzero entries of $\bm{\beta}_{r,d}$'s are drawn from $\text{Unif}(-1,1)$, which are then normalized such that $\|\bm{\beta}_{r,d}\|_2=1$. Here, when $\bm{\beta}_{r,d}$'s are sparse, we let their sparsity level $s_d=\lVert \bm{\beta}_{r,d}\rVert_0:=s_0$ be identical for $1\leq d\leq 6$. Next, we set $\omega_r=0.6+ \mathbb{I}_{\{R>1\}}0.2(r-1)/(R-1)$ for $1\leq r\leq R$, where $\mathbb{I}_{\{R>1\}}=1$ if $R>1$ and zero otherwise. Then we rescale $\omega_r$'s such that $\rho(\bm{\underline{B}})=0.8$, where $\bm{\underline{B}}$ is defined as in Remark \ref{remark:stationary}. This ensures the stationarity of DGP 2. All results are reported based on averages over 500 replications for each setting.
 
\begin{figure}[!t]
	\centering
	\includegraphics[width=0.75\textwidth]{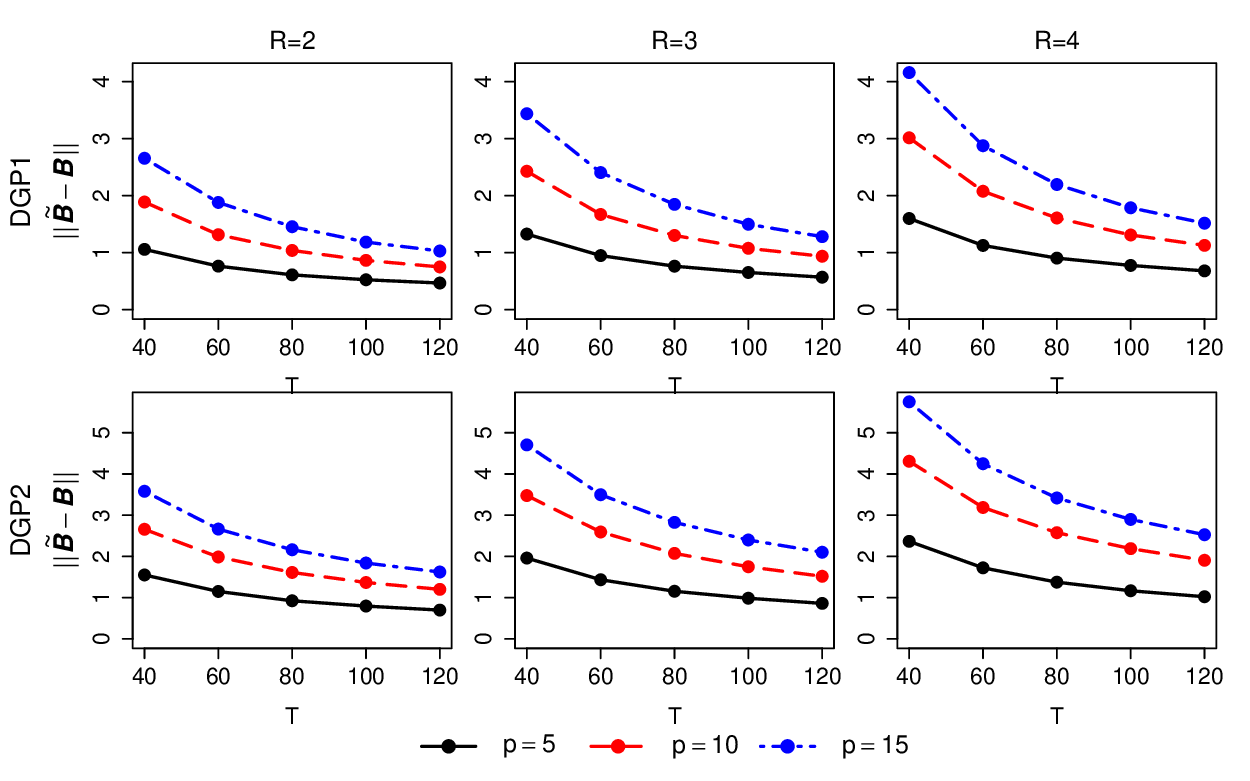}
	\caption{Estimation error $\|\widetilde{\cm{B}}-\cm{B}\|_{\Fr}$ against $T$ in the non-sparse case for different $p$ and $R$.}
	\label{Fig:sim01}
\end{figure}
\begin{figure}[!t]
	\centering
	\includegraphics[width=0.75\textwidth]{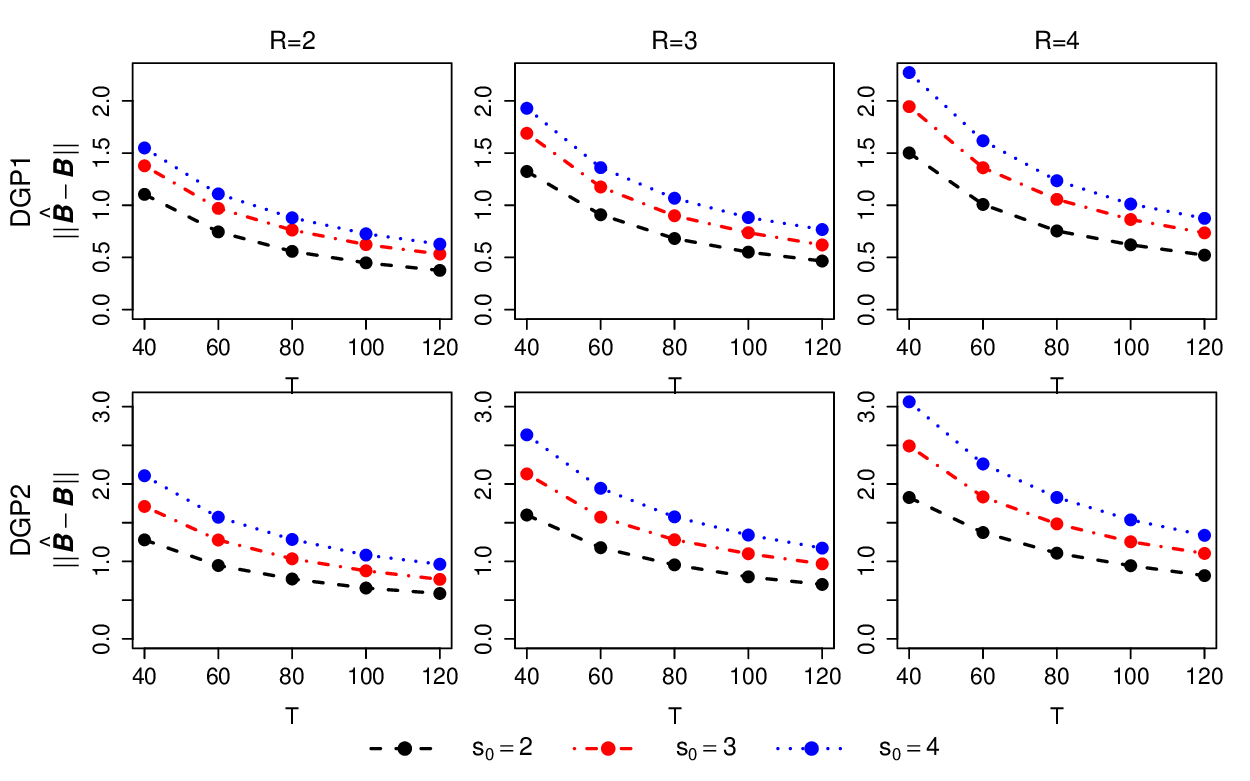}
	\caption{Estimation error $\|\widehat{\cm{B}}-\cm{B}\|_{\Fr}$ against $T$ in the sparse case for different $s_0$ and $R$.}
	\label{Fig:sim02}
\end{figure}

The first experiment aims to verify the consistency of the proposed estimators for the coefficient tensor $\cm{B}$ in both non-sparse and sparse cases. 
In the non-sparse case, we vary $p\in\{5,10,15\}$. In the sparse case, we vary $s_0\in\{2,3,4\}$ while fixing $p=10$. For all settings, we consider $T\in[40,120]$ and $R\in\{2,3,4\}$. Figure \ref{Fig:sim01} displays the estimation error  $\|\widetilde{\cm{B}}-\cm{B}\|_{\Fr}$
against $T$ for the non-sparse case. For both DGPs, we can observe that the estimation error decreases as $T$ increases, and increases as either $p$ or $R$ increases. Similarly, for  the sparse case, Figure \ref{Fig:sim02} shows that the estimation error  $\|\widehat{\cm{B}}-\cm{B}\|_{\Fr}$ decreases as $T$ increases, and increases as $s_0$ or $R$ increases. In sum, the consistency of the proposed estimators is verified in all cases,  aligning with the results presented in Theorems \ref{thm:nonsparsetr} and \ref{thm:sparsetr}. 

\begin{figure}[!t]
	\centering
	\includegraphics[width=0.8\textwidth]{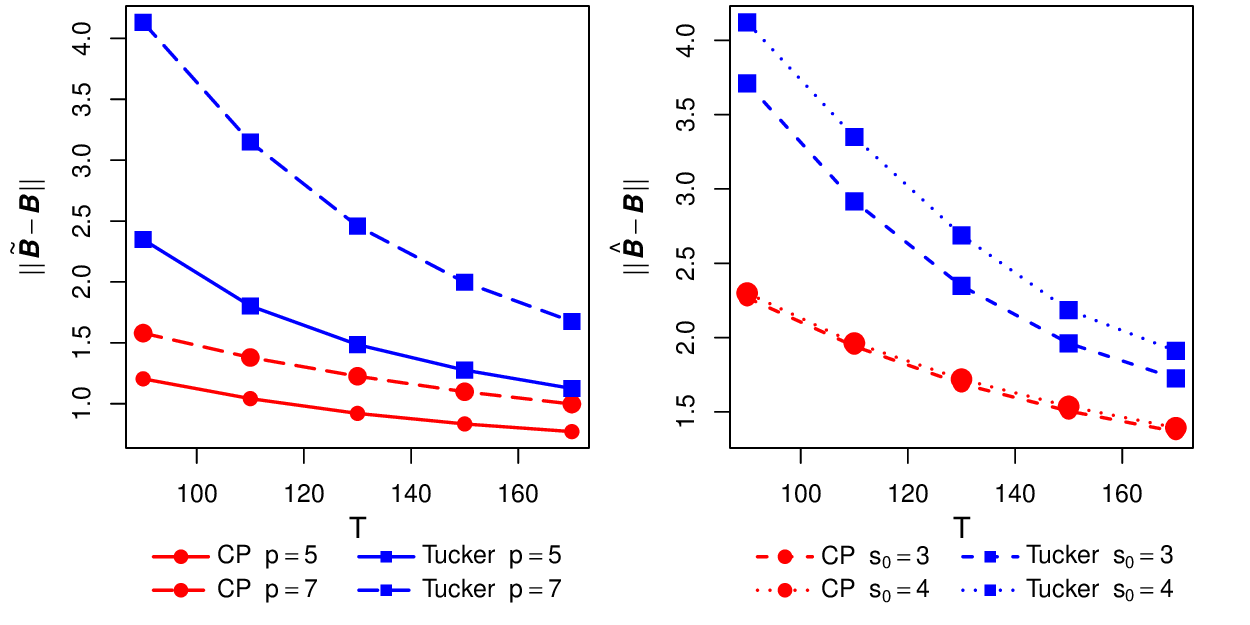}
	\caption{Estimation errors of the proposed estimators  in the non-sparse case ($\widetilde{\cm{B}}$, left panel) and the sparse case ($\widehat{\cm{B}}$, right panel), compared with the non-convex Tucker low-rank estimator for TAR models.}
	\label{Fig:sim2}
\end{figure}

In the second experiment, we compare the proposed estimators to the nonconvex Tucker low-rank estimator $\widetilde{\cm{B}}_{\textrm{Tucker}}$ in \cite{wang2024high} for TAR models. The data are generated from DGP 2 with a non-sparse or sparse coefficient tensor of CP rank  $R=4$. As noted in Remark \ref{remark:CPtucker} in the supplementary file, under this setting, $\cm{B}$ also admits a Tucker decomposition with Tucker ranks $r_d=4$ for $1\leq d\leq 6$. The left panel of  Figure \ref{Fig:sim2} displays the estimation errors  $\|\widetilde{\cm{B}}-\cm{B}\|_{\Fr}$ and  $\|\widetilde{\cm{B}}_{\text{Tucker}}-\cm{B}\|_{\Fr}$ against $T\in[90,170]$ for the  non-sparse case, where we vary $p\in\{5,7\}$. It can be seen that $\widetilde{\cm{B}}$ yields lower estimation errors than $\widetilde{\cm{B}}_{\text{Tucker}}$ in all settings, and the latter also appears to be more sensitive to the increase in $p$.
The right panel of  Figure \ref{Fig:sim2} displays the results for comparing $\widetilde{\cm{B}}_{\textrm{Tucker}}$ with  $\widehat{\cm{B}}$ for the sparse case, where we vary  $s_0\in\{3,4\}$ while fixing $p=10$. The advantage of $\widehat{\cm{B}}$ is clear, as $\widetilde{\cm{B}}_{\textrm{Tucker}}$ cannot take into account the sparsity structure.

\section{Empirical analysis}\label{sec:real}

\subsection{Mixed-frequency macroeconomic data}\label{subsec:macro}
We apply the proposed CP low-rank tensor stochastic regression to a mixed-frequency dataset consisting of all macroeconomic indicators  available since 1967 from the FRED-QD and FRED-MD databases at \url{https://www.stlouisfed.org/research/economists/mccracken/fred-databases}. The former includes $p_y=179$ quarterly variables  classified into eight main categories: (Q1) National Income and Product Accounts (NIPA), (Q2) Industrial Production, (Q3) Employment and Unemployment, (Q4) Housing, (Q5) Prices, (Q6) Interest Rates, (Q7) Money and Credit, and (Q8) others. The latter includes $112$ monthly variables  falling into seven categories including (M1) Output and Income, (M2) Trade (i.e., Consumption, Orders, and Inventories), (M3) Labor Market, (M4) Housing, (M5) Money and Credit, (M6) Interest and Exchange Rates, and (M7) Prices. We transformed all variables to stationarity according to \cite{McCracken2021} and \cite{McCracken2016}, and then standardized each series to have zero mean and unit variance. 

We use quarterly data as the response series, $\bm{y}_t\in\mathbb{R}^{179}$, where $t=1,\dots, T$ represents the quarter index. We arrange the 112 monthly variables   into a matrix $\bm{X}_t\in\mathbb{R}^{112\times3}$, where 3 corresponds to the number of months in each quarter. 
We apply the proposed model to predict the low-frequency outcomes using the high-frequency variables, i.e., $\bm{y}_t=\langle \cm{B}, \bm{X}_t\rangle +\bm{\varepsilon}_t$, where $\cm{B}\in\mathbb{R}^{179\times 112\times3}$ is estimated by the  CP low-rank estimator $\widetilde{\cm{B}}$. We use the data from Q1-1967 to Q4-2021 (i.e., $T=T_{\text{train}}+T_{\text{val}}=220$) to fit the model. The rank  $R=40$  is selected using time series cross-validation.
The validation set spans from Q1-2021 to Q4-2021 (i.e., $T_{\text{val}}=4$), and the selection is based on minimizing the average forecast error, $T_{\text{val}}^{-1}\sum_{i=1}^{T_{\text{val}}}\|\widehat{\bm{y}}_{T_\text{train}+i}-\bm{y}_{T_\text{train}+i}\|_2$, where the forecasts are computed using the same rolling forecast procedure applied to the test set, as we describe in Section \ref{sup:subsec21} of the supplementary file.

\begin{figure}[t]
	\centering 
	\includegraphics[width = \textwidth]{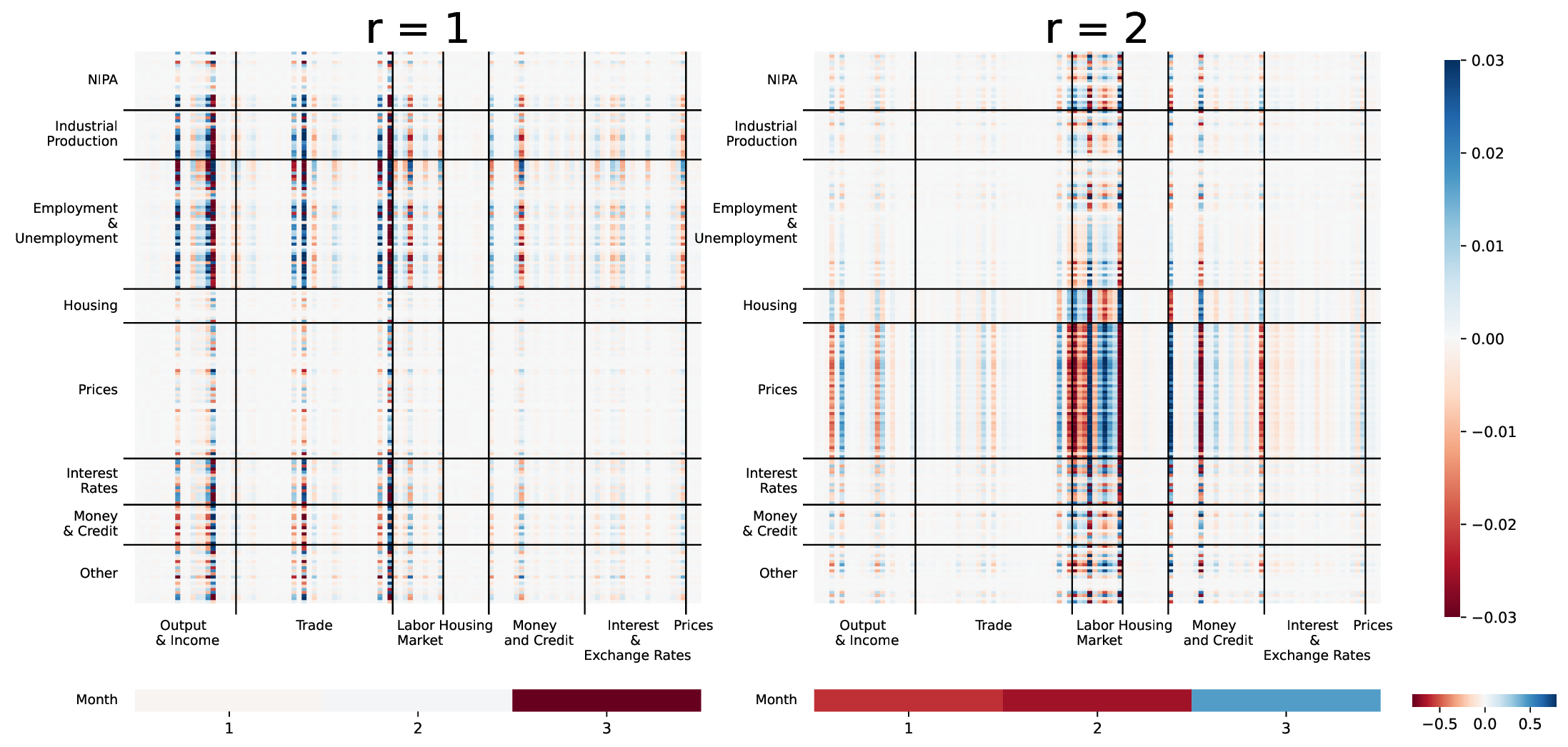}
	\caption{Heatmaps for estimates of $\bm{\beta}_{r,1}\circ\bm{\beta}_{r,2}$ (matrices at top) and $\bm{\beta}_{r,3}^\top$ (row vectors at bottom) for macroeconomic data, for $r=1$ (left panel) and 2 (right panel), where $r$ indexes $\omega_r$ in descending order. The rows and columns of $\bm{\beta}_{r,1}\circ \bm{\beta}_{r,2}$ correspond to quarterly responses and monthly predictors, respectively, and  $\bm{\beta}_{r,3}$ represents the temporal aggregation pattern.}
	\label{Fig:macro}
\end{figure}

The estimates of  $\omega_r$ for $1\leq r\leq 40$, sorted in descending order, are displayed in Figure \ref{Fig:macro1} in the supplementary file. It shows that the top two values, 95.1 and 74.1, are substantially larger than the others. Thus, we next take a closer look at the most important dynamic patterns revealed by the estimated factor vectors corresponding to $r=1$ and 2.
Specifically, we visualize the estimates of the matrices  $\bm{\beta}_{r,1}\circ\bm{\beta}_{r,2}$ and the vectors $\bm{\beta}_{r,3}^\top$ for $r=1$ and 2, obtained based on decomposition of $\widetilde{\cm{B}}$ in Figure \ref{Fig:macro}. 
As discussed in Section \ref{subsec:model}, $\bm{\beta}_{r,1}\circ \bm{\beta}_{r,2}$ captures the relationship between quarterly responses (rows) and monthly predictors (columns), despite their different sampling frequencies. Meanwhile, $\bm{\beta}_{r,3}\in\mathbb{R}^{3}$ represents the temporal aggregation pattern  that the model learns to summarize monthly observations to the quarterly scale.
Notably, the rows and columns of $\bm{\beta}_{r,1}\circ\bm{\beta}_{r,2}$  reveal interesting grouping patterns across the quarterly response and  monthly predictor variables, respectively. On the one hand, the top left panel of Figure \ref{Fig:macro} (i.e., $\bm{\beta}_{r,1}\circ\bm{\beta}_{r,2}$ for $r=1$) indicates that a small number of monthly variables within the categories (M1) Output and Income and (M2) Trade have particularly strong influence on many quarterly variables across all categories, except for (Q4) Housing, and to a lesser extent, (Q5) Prices. On the other hand, the top right panel of Figure \ref{Fig:macro} (i.e., $\bm{\beta}_{r,1}\circ \bm{\beta}_{r,2}$ for $r=2$) reveals some even more distinct block patterns. First, the monthly predictors within the (M3) Labor Market category generally exert a strong influence on nearly all quarterly variables across every category. 
Notably, the block at the intersection of (M3) Labor Market and (Q5) Prices stands out, suggesting that labor market conditions are a key driver of price movements.
Additionally, a few variables within (M1) Output and Income and (M5) Money and Credit also have a broad effect on the category (Q5) Prices. 

In addition, the bottom panel of Figure \ref{Fig:macro} displays estimates of the row vectors, $\bm{\beta}_{r,3}^\top$ for $r=1$ and 2. The result for $\bm{\beta}_{r,3}^\top$ with $r=1$  suggests that  the cross-sectional grouping pattern  revealed by the corresponding $\bm{\beta}_{r,1}\circ\bm{\beta}_{r,2}$ is only prominent in the last month of each quarter. By contrast,  the pattern revealed by $\bm{\beta}_{r,1}\circ\bm{\beta}_{r,2}$ with  $r=2$  is associated with all months within each quarter, with opposite effects observed between the first two months and the last month. For both $r=1$ and 2, the reason why the effect of the last month in each quarter differs from the other two may be that some indexes, such as interest rates, are updated in real time. Consequently, the month closest to the quarterly record date tends to have a stronger impact on the quarterly index series. This interesting finding demonstrates  the usefulness of the proposed model in uncovering temporal structure across mixed-frequency response and predictor variables. 

Moreover, we  assess the out-of-sample forecast accuracy for the high-dimensional quarterly response series based on a rolling procedure in Section \ref{sup:subsec21} of the supplementary file. It is shown that the proposed method achieves the lowest forecast error (FE) in five out of eight categories and the overall FE. The superior performance of our method over the benchmarks in predicting category (Q5) Prices is particularly noteworthy. Interestingly, the heatmap for $r=2$ in Figure \ref{Fig:macro} shows that (Q5) Prices are strongly associated with (M3) Labor Market. This finding further confirms the advantage of our approach in both prediction and interpretability. Additional  heatmaps to further demonstrate the favorable interpretability of our method compared to the vectorization-based sparse estimation are also provided in Section \ref{sup:subsec21} of the supplementary material.

\subsection{Taiwan air pollution data}\label{subsec:taiwan}

To evaluate the empirical performance of the proposed TAR, this section analyzes the Taiwan air pollution dataset complied by \cite{chen2021tensor} available at  \url{https://github.com/youlinchen/TCCA}, which contains monthly measurements of 7 pollutants across 12 monitoring stations in Taiwan from 2005 to 2017. The pollutants are sulfur dioxide (SO$_2$), carbon
monoxide (CO), ozone (O$_3$), particulate matter (PM$_{10}$), oxides of
nitrogen (NO$_x$), nitric oxide (NO), and nitrogen dioxide (NO$_{2}$).
The 12 monitoring stations can be  divided into northern (Guting, Tucheng, Taoyuan, and Hsinchu), southern (Erlin, Xinying, Xiaogang, and Meinong), and eastern (Yilan, Dongshan, Hualien, and Taitung) regions, with 4 stations in each region; see more details in the supplementary material for \cite{chen2021tensor}.

\begin{figure}[!t]
	\centering 
	\includegraphics[width=1.0\textwidth]{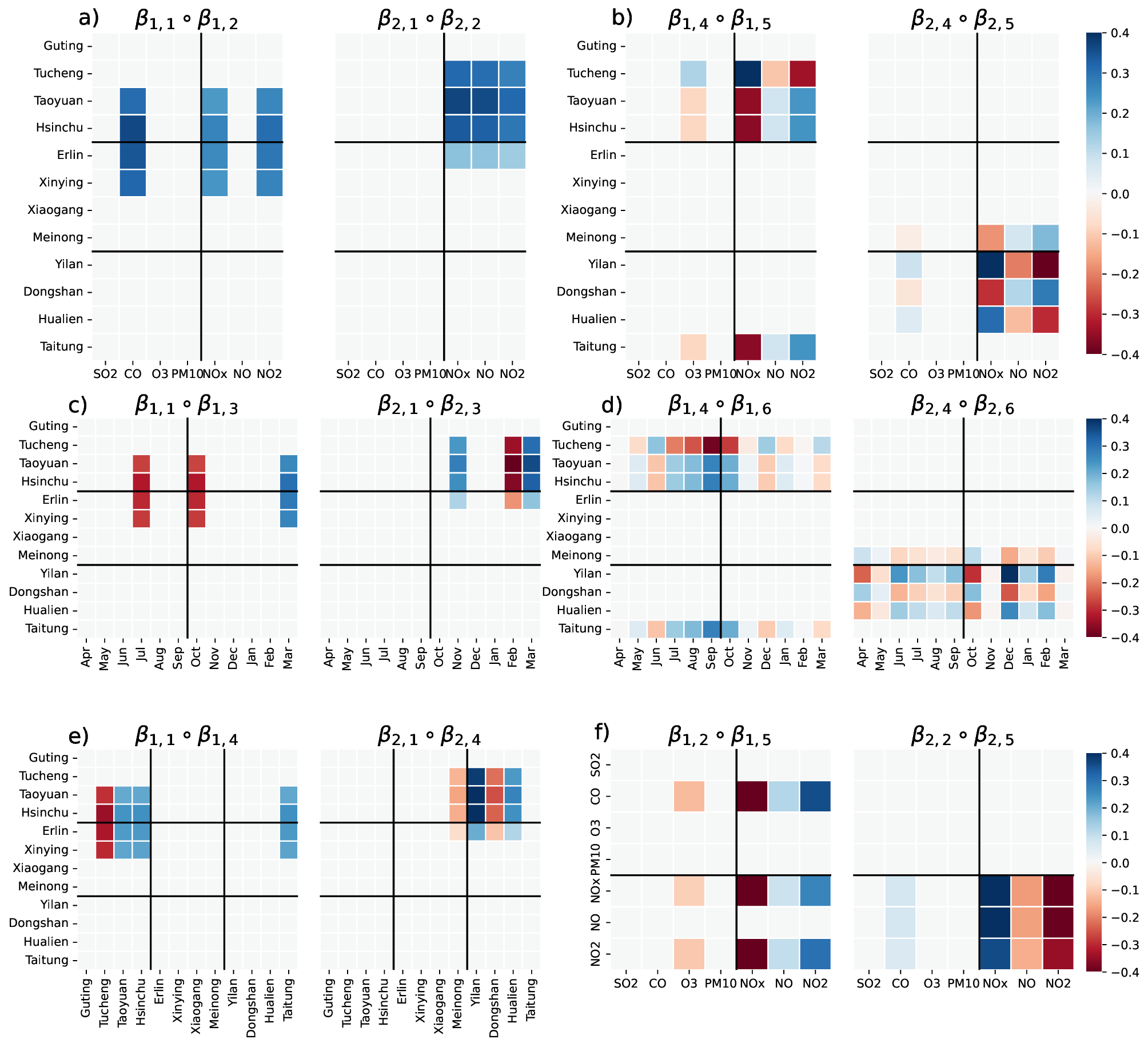}
	\caption{Heatmaps for estimates of $\bm{\beta}_{r,d_1}\circ\bm{\beta}_{r,d_2}$ for Taiwan air pollution data with $r=1$ and 2. The rows and columns of $\bm{\beta}_{r,d_1}\circ \bm{\beta}_{r,d_2}$ correspond to the $d_1$th and $d_2$th modes of $\cm{B}$, respectively.  The axis for monitoring stations is partitioned into three regions (from top to bottom): north, south, and east. The axis for pollutants is divided into two categories: nitrogen pollutants and others. The axis for months is grouped into two seasons: summer (April to September) and winter (October to March).}
	\label{Fig:TW_cross}
\end{figure}

To explore the monthly patterns, we further fold the data by treating month as a new dimension. This leads to a tensor-valued time series $\cm{Y}_t\in\mathbb{R}^{12\times7\times12}$ (monitoring stations $\times$ pollutants $\times $ months), where $t$ is the year index. We fit the proposed CP low-rank TAR of order one,  $\cm{Y}_t=\langle\cm{B},\cm{Y}_{t-1}\rangle +\cm{E}_t$, where $\cm{B}\in\mathbb{R}^{12\times7\times12\times 12\times7\times12}$, as using a higher lag order does not improve the forecasting performance for this dataset. Due to space limitations, we focus on the sparse estimator $\widehat{\cm{B}}$ in the main paper to highlight its greater interpretability; see additional results in the supplementary material. Based on the forecasting performance, we select $R=4$ and  $(s_1,\dots, s_6)=(4,3,3,4,4,12)$. The estimates $\widehat{\omega}_r$ for $1\leq r\leq 4$ in descending order are $26.6, 17.7, 6.5$, and  $5.5$. Since $\widehat{\omega}_1$ and $\widehat{\omega}_2$ carry much more weight in the coefficient tensor, similar to Section \ref{subsec:macro}, we highlight the results corresponding to  them.
Specifically, Figure \ref{Fig:TW_cross} displays heatmaps of various  $\bm{\beta}_{r,d_1}\circ \bm{\beta}_{r,d_2}$  across different mode combinations, where  $1\leq d_1\neq d_2\leq 6$, for $r=1$ and 2.  These visualizations illustrate the interactive relationships within the dynamic system, specifically, response-response (i.e., static relationships between outputs), predictor-predictor (i.e., static relationships between inputs), and response-predictor (i.e., output-input dynamics), as follows: 
\begin{itemize}
\item  \textit{Between  monitoring stations and pollutants:}  $\bm{\beta}_{r,1}\circ \bm{\beta}_{r,2}$ (monitoring stations response-pollutants response) and $\bm{\beta}_{r,4}\circ \bm{\beta}_{r,5}$ (monitoring stations predictor-pollutants predictor),  shown in panels (a) and (b) of Figure \ref{Fig:TW_cross}, respectively.
\item \textit{Between monitoring stations and months:} $\bm{\beta}_{r,1}\circ \bm{\beta}_{r,3}$ (monitoring stations response-months response) and $\bm{\beta}_{r,4}\circ \bm{\beta}_{r,6}$ (monitoring stations predictor-months predictor), shown in  panels (c) and (d) of Figure \ref{Fig:TW_cross}, respectively.
\item \textit{Output-input dynamics:} $\bm{\beta}_{r,1}\circ \bm{\beta}_{r,4}$ (dynamics for monitoring stations) and $\bm{\beta}_{r,2}\circ \bm{\beta}_{r,5}$ (dynamics for pollutants), shown in  panels (e) and (f) of Figure \ref{Fig:TW_cross}, respectively.
\end{itemize}

We summarize the main findings as follows. Panel (a) shows strong geographical clustering, indicating that the same pollutant is more likely to appear in nearby locations. Notably, nitrogen pollutants tend to concentrate in  the northern region, as indicated by the block in the top-right corner of the plot for $r = 2$. The dichotomy between nitrogen pollutants and most   other pollutants is also clear in panel (b), which reflects the grouping patterns in predictive signals. The heatmaps for $r = 1$ and $r = 2$ capture predictive signals originating from different regions and both exhibit geographical clustering.
Panel (c) indicates that a strong association between pollution in the northern region and the winter months  is reflected in the current year's responses. On the other hand, panel (d) shows that pollution levels across different regions from nearly all months of the previous year are predictive of the current year.
Panel (e) reveals that pollution in the northern region in the current year is primarily attributable to pollution from the north and east in the previous year. Finally, panel (f) suggests that nitrogen pollutants have a strong year-over-year influence on one another, but a much weaker influence on other pollutants, except for CO. In short, our findings highlight the multidimensional interpretability enabled by the proposed CP low-rank TAR model, an attractive feature not achievable with a Tucker low-rank structure.

In Section \ref{sup:subsec22} of the supplementary material, we further consider several alternative estimation methods for comparison of interpretability and forecast accuracy, including the nonconvex Tucker low-rank estimator for TAR models proposed by \cite{wang2024high}, and some benchmark estimators for high-dimensional VAR models based on vectorized data. The competitive forecasting performance of the proposed methods is also confirmed for this dataset.

\section{Conclusion and discussion}\label{sec:conclude}
This paper proposes high-dimensional  tensor stochastic regression and TAR models via CP decomposition, which include vector,  matrix, or tensor responses and predictors as special cases. It further leads to new vector, matrix, and tensor autoregressive (AR) models of general lag orders via CP decomposition. 
Unlike existing AR models based on Tucker decomposition, CP decomposition  enables the exploration of interactive relationship across various dimensions, as well as input-output dynamic mechanisms. 
Additionally, this general framework provides an efficient and interpretable approach to mixed-frequency regression with high-dimensional responses.
For the proposed models, CP low-rank and sparse CP low-rank estimators are introduced, for which non-asymptotic estimation error bounds are established, and an efficient alternating minimization algorithm is developed.  Simulation studies and empirical analyses of mixed-frequency macroeconomic data and spatio-temporal air pollution data demonstrate the advantages of the proposed methods in terms of both efficiency and interpretability.

To the best of our knowledge, this is the first work to exploit CP decomposition in general tensor stochastic regression. There are several important questions  that require future research. First, we use time series cross-validation to select the rank and sparsity parameters due to its practical simplicity and favorable performance in general model settings. It is worth exploring consistent estimation methods for the tuning parameters, such as the high-dimensional information criteria \citep[e.g.,][]{zheng2025} or a ridge-type ratio estimator \citep{wang2022high}. Second, while we study the estimation error bounds, the convergence analysis of the algorithm, similar to that in  \cite{sun2017store}, has not been established. The main challenge arises from the stochastic nature of the design matrix, whereas the former assumes a fixed design matrix.  We leave this challenging problem for future investigation.

\putbib[CPAR]
\end{bibunit}

\newpage
\renewcommand{\thesection}{S\arabic{section}}
\renewcommand{\thesubsection}{S\arabic{section}.\arabic{subsection}}
\renewcommand{\theequation}{S\arabic{equation}}
\renewcommand{\thetable}{S\arabic{table}}
\renewcommand{\thefigure}{S\arabic{figure}}
\renewcommand{\theremark}{S\arabic{remark}}
\renewcommand{\thelemma}{S.\arabic{lemma}}
\setcounter{lemma}{0}
\setcounter{section}{0}
\setcounter{table}{0}
\setcounter{remark}{0}
\setcounter{figure}{0}

\begin{bibunit}[apalike]
	\vspace*{10pt}	
	\begin{center}
		{\Large \bf Supplement for ``Tensor Stochastic Regression for High-dimensional Time Series via CP Decomposition''}
	\end{center}
	\vspace{10pt}
	
\begin{abstract}
This supplementary file contains five sections. Section \ref{sec:tucker} discusses the connections and differences between the proposed model and the Tucker low-rank TAR model.
Section \ref{sec:algo} introduces an efficient alternating minimization algorithm for implementing the proposed estimation.  Section \ref{sec:sim3} provides a simulation experiment to demonstrate the computational efficiency of the proposed algorithm. Section \ref{sup:sec2} presents additional empirical results for the mixed-frequency macroeconomic  data and the Taiwan air pollution data. Section \ref{sup:sec1} provides technical proofs for all  theorems and corollaries in the main paper. 
\end{abstract}

\section{Comparison with Tucker low-rank models}\label{sec:tucker}

The proposed TAR model and its VAR special case (see Remark \ref{remark:var} in Section \ref{subsec:model} of the main paper) are reminiscent of the high-dimensional TAR and VAR models introduced in  \cite{wang2022high} and \cite{wang2024high}, respectively. However, the latter are based on the Tucker decomposition of the coefficient tensor $\cm{B}$, rather than the CP decomposition. 

If an $N$th-order tensor $\cm{B}\in\mathbb{R}^{p_1\times\cdots\times p_N}$ has Tucker ranks $(r_1, \dots, r_N)$, then it admits the Tucker decomposition,
\[
\cm{B} = \cm{G}\times_{d=1}^N\bm{U}_d,\]
where $\cm{G}\in\mathbb{R}^{r_1\times\cdots\times r_N}$ is called the core tensor, and $\bm{U}_d\in\mathbb{R}^{p_d\times r_d}$ for $1\leq d \leq N$ are called  factor matrices \citep{Kolda2009}. 

Our motivation for considering the CP decomposition instead of the above Tucker decomposition  is twofold:
\begin{itemize}
\item[(i)] The identifiability of the Tucker decomposition requires much more complex constraints than the CP decomposition \citep{Kolda2009,Han2024}.
\item[(ii)] The CP decomposition  provides valuable insights into   interactions across different cross-sectional or temporal modes, a capability that the Tucker decomposition lacks.
\end{itemize}

First, unlike the CP decomposition,  which easily achieves uniqueness in practice (see Remark  \ref{remark:CPunique} in Section \ref{subsec:notation} of the main paper), the Tucker decomposition suffers from  rotational indeterminacy. Namely, for any nonsingular matrices $\bm{O}_d$, it holds $\cm{G}\times_{d=1}^N \bm{U}_d = (\cm{G}\times_{d=1}^N \bm{O}_d)\times_{i=d}^N(\bm{O}_d^{-1}\bm{U}_d)$. A common approach to addressing  this problem is to consider the higher-order singular value decomposition (HOSVD) \citep{de2000multilinear}. This requires orthogonality constraints on $\bm{U}_d$'s and $\cm{G}$; in particular, $\bm{U}_d^\top \bm{U}_d =\bm{I}_{r_d}$. However, these constraints lead to difficulties in not only algorithmic development but also  the interpretation of factor loadings, especially when the small sample size necessitates further sparsity assumptions on $\bm{U}_d$. 
On the one hand, it is challenging to simultaneously ensure both orthogonality  and sparsity of $\bm{U}_d$'s in the optimization algorithm \citep{wang2024high}. On the other hand, it may not be reasonable to believe that the orthonormal matrices $\bm{U}_d$'s are truly sparse, even though they could be sparse up to a rotation. Thus, the estimates of $\bm{U}_d$'s may not be interpretable. 

Second, in time series modeling, the use of Tucker decomposition is especially motivated from the factor modeling perspective, as $\bm{U}_d$'s naturally generalize the loading matrix in vector-valued factor models to the tensor-valued setting; see \cite{wang2022high} and \cite{wang2024high} for detailed discussions on the connection with factor models \citep{Bai2016}. Thus,  based on each loading matrix $\bm{U}_d$ in the Tucker decomposition, the factor structure for each  mode can be interpreted. 
Nonetheless, individually interpreting the loading matrices  $\bm{U}_d$'s  cannot offer insights into interactions across different cross-sectional or temporal modes. In contrast,  as discussed in Section \ref{subsec:model} of the main paper, the CP decomposition enables us to conveniently explore interactive patterns for any pair of modes $(d_1, d_2)$ by visualizing the matrix $\bm{\beta}_{r,d_1}\circ \bm{\beta}_{r,d_2}$.

\begin{remark}[Connection between CP and Tucker decompositions]\label{remark:CPtucker}
Suppose  $\cm{B}\in\mathbb{R}^{p_1\times\cdots\times p_N}$ has the Tucker decomposition, 
$\cm{B} = \cm{G}\times_{d=1}^N\bm{U}_d$, where $\bm{U}_d=(\bm{u}_{1,d},\dots, \bm{u}_{r_d, d})\in\mathbb{R}^{p_d\times r_d}$, and $\cm{G}\in\mathbb{R}^{r_1\times\cdots\times r_N}$.  Then it can be written in the form of the CP decomposition, $\cm{B} =\sum_{i_1=1}^{r_1} \cdots \sum_{i_N=1}^{r_N}g_{i_1,\dots, i_N}\bm{u}_{i_1,1}\circ\cdots\circ\bm{u}_{i_N,N}:=\sum_{\bm{i}}\omega_{\bm{i}} \bm{\beta}_{{\bm{i}},1}\circ\cdots\circ\bm{\beta}_{{\bm{i}},N}$, which is the sum of $\prod_{d=1}^N r_d$ rank-one tensors, where $\bm{i}:=(i_1,\dots, i_N)$,  $\omega_{\bm{i}}:=g_{i_1,\dots, i_N}$ is the $(i_1,\dots, i_N)$-th entry of $\cm{G}$, and $\bm{\beta}_{\bm{i},d}:=\bm{u}_{i_d,d}$, for $1\leq i_d\leq r_d$ and $1\leq d\leq N$. Nonetheless, note that the actual rank $R$ of $\cm{B}$ is usually smaller than $\prod_{d=1}^N r_d$, i.e., a more parsimonious CP decomposition may exist. Conversely, if $\cm{B}$ has the CP decomposition in \eqref{eq:CP} with rank $R$ (see Section \ref{subsec:notation} of the main paper), then it can be written in the Tucker decomposition form where the core tensor $\cm{G}$ is superdiagonal (i.e., $g_{i_1,\dots, i_N}=0$ unless $i_1=\cdots=i_N$), and the Tucker ranks, $r_1=\cdots=r_N=R$, cannot be further reduced. 
\end{remark}

%%%%%%%%%%%%%%%%%%%%%%%%%%%%%%%%%%%%%%%%%%%%%%%%%%%%%%%%%%%%%%%%%%%%%%%%%%%%%%%%%%%%%%%%%%%%%%%%%%%%%%%%%%%%%%%%%%%%%%%%%

\section{Alternating minimization algorithm}\label{sec:algo}
In this section, we develop an efficient alternating minimization algorithm for the proposed CP low-rank estimators in  Sections \ref{subsec:est1} and \ref{subsec:est2} of the main paper by adapting  the  alternating update  algorithm  in \cite{sun2017store}.
At each iteration, our algorithm consists of two main steps. In Step 1, we fix the parameters $\{\bm{\beta}_{r,m+d}: 1\leq d\leq n, 1\leq r\leq R\}$ and update the parameters $\{\omega_r, \bm{\beta}_{r,d}: 1\leq d\leq m, 1\leq r\leq R\}$.  Note that the first group of parameters corresponds to loading vectors associated with predictor tensor $\cm{X}_t\in\mathbb{R}^{q_1\times \cdots\times q_n}$, while the second group includes those associated with the response tensor $\cm{Y}_t\in\mathbb{R}^{p_1\times \cdots\times p_m}$.  Then in Step 2, we fix the latter and update the former.

\cite{sun2017store} focused on the sparse CP low-rank  regression for $i.i.d.$ data with a tensor-valued response  and a vector-valued predictor, i.e., $n=1$. In addition, the entrywise sparsity was imposed only on the loading vectors corresponding to the response, i.e.,  $\{\bm{\beta}_{r,d}: 1\leq d\leq m, 1\leq r\leq R\}$. We extend their approach in two directions: (i) We allow the predictor to be an $n$-th order tensor for $n\geq 1$; and (ii) the loading vectors corresponding to the predictor, i.e.,  $\{\bm{\beta}_{r,m+d}: 1\leq d\leq n, 1\leq r\leq R\}$, can also be sparse.

Denote $f_{r,t}=\cm{X}_t\times_{d=1}^{n}\bm{\beta}_{r,m+d}$ and $\cm{R}_{r,t}=\cm{Y}_t-\sum_{r^\prime\neq r}\omega_{r^\prime}f_{r^\prime, t}\bm{\beta}_{r^\prime,1}\circ\cdots\circ\bm{\beta}_{r^\prime,m}$. Since the  parameters $\{\bm{\beta}_{r,m+d}: 1\leq d\leq n, 1\leq r\leq R\}$ are fixed in Step 1, $f_{r,t}$ and $\cm{R}_{r,t}$ for $1\leq r\leq R$ are also fixed. The minimization of the loss function with respect to $\{\omega_r, \bm{\beta}_{r,d}: 1\leq d\leq m, 1\leq r\leq R\}$ is equivalent to the following minimization that is conducted separately for each $1\leq r\leq R$:
\begin{equation}\label{eq:p1lossfunc}
\underset{\omega_r, \bm{\beta}_{r,1},\ldots,\bm{\beta}_{r,m} }{\min}\frac{1}{T}\sum_{t=1}^T f_{r,t}^2\Big\lVert \cm{S}_{r,t} - \omega_{r}\bm{\beta}_{r,1}\circ\cdots\circ\bm{\beta}_{r,m}\Big\rVert_\Fr^2,
\end{equation}
where  $\cm{S}_{r,t}=f_{r,t}^{-1}\cm{R}_{r,t}$.
The constraints for \eqref{eq:p1lossfunc}  are $\lVert \bm{\beta}_{r,d} \rVert_2=1$ for the CP low-rank estimation in Section \ref{subsec:est1}, or  $\lVert \bm{\beta}_{r,d} \rVert_2=1$  and  $\lVert\bm{\beta}_{r,d}\rVert_0\leq s_d$ for the sparse CP low-rank estimation in Section \ref{subsec:est2}, where $1\leq d\leq m$. Note that \eqref{eq:p1lossfunc} is equivalent to the rank-one tensor decomposition,
\[
\underset{\omega_r, \bm{\beta}_{r,1},\ldots,\bm{\beta}_{r,m} }{\min} \left \lVert  \frac{\bar{\cm{R}}_r}{\sum_{t=1}^T f_{r,t}^2} - \omega_{r}\bm{\beta}_{r,1}\circ\cdots\circ\bm{\beta}_{r,m}\right\rVert_\Fr^2,
\]
where $\bar{\cm{R}}_r = \sum_{t=1}^T f_{r,t}^2 \cm{R}_{r,t}$. Using alternating minimization with respect to each $1\leq d\leq m$, under the constraint  $\lVert \bm{\beta}_{r,d} \rVert_2=1$, the solution is given by $\widetilde{\bm{\beta}}_{r,d}=\textrm{Norm}(\bar{\cm{R}}_r\times_{d^\prime=1, d^\prime\neq d}^{m}\bm{\beta}_{r,d^\prime})$, and then $\widetilde{\omega}_{r}=\bar{\cm{R}}_{r}\times_{d=1}^{m}\widetilde{\bm{\beta}}_{r,d}$. If the constraints  $\lVert\bm{\beta}_{r,d}\rVert_0\leq s_d$ are further imposed, the sparse rank-one tensor decomposition can be solved by a truncation-based procedure \citep{sun2017provable, sun2017store}: the solution to the alternating minimization is  $\widehat{\bm{\beta}}_{r,d} = \textrm{Norm}(\textrm{Truncate}(\widetilde{\bm{\beta}}_{r,d}, s_d))=\textrm{Norm}(\textrm{Truncate}(\bar{\cm{R}}_r\times_{d^\prime=1, d^\prime\neq d}^{m}\bm{\beta}_{r,d^\prime}))$, for $1\leq d\leq m$, and then $\widehat{\omega}_{r}=\bar{\cm{R}}_{r}\times_{d=1}^{m}\widehat{\bm{\beta}}_{r,d}$.

\begin{algorithm}[t]
\caption{Sparse CP low-rank tensor stochastic regression}
\label{alg:spralg}
\textbf{Input:}  Rank $R$, regularization parameters $s_d$, initialization $\omega_{r}^{(0)}$, $\bm{\beta}_{r,d}^{(0)}$, for $1\leq d\leq N$,  $1\leq r\leq R$.\\
\textbf{repeat} $i=0,1,2,\dots$\\\vspace{2mm}
\hspace*{3mm}Step 1: \textbf{for} $r\in\{1,\dots,R\}$ \textbf{do}\\\vspace{2mm}
\hspace*{25mm}\textbf{for} $d\in\{1,\ldots, m\}$ \textbf{do} \\\vspace{2mm}
\hspace*{30mm}$\widecheck{\bm{\beta}}_{r,d}^{(i+1)}=\bar{\cm{R}}_{r,t}^{(i)}\times_{d^\prime=1, d^\prime\neq d}^{m}\bm{\beta}_{r,d^\prime}^{(i)}$\\\vspace{2mm}
\hspace*{30mm}$\bm{\beta}_{r,d}^{(i+1)}=\textrm{Norm}(\textrm{Truncate}(\widecheck{\bm{\beta}}_{r,d}^{(i+1)}, s_d))$\\\vspace{2mm}
\hspace*{25mm}\textbf{end for} \\\vspace{2mm}
\hspace*{25mm}$\omega_{r}^{(i+1)}=\bar{\cm{R}}_{r,t}^{(i)}\times_{d=1}^{m}\bm{\beta}_{r,d}^{(i+1)}$\\\vspace{2mm}
\hspace*{19mm}\textbf{end for} \\\vspace{2mm}
\hspace*{3mm} Step 2: \textbf{for} $r\in\{1,\dots,R\}$ \textbf{do}\\\vspace{2mm}
\hspace*{25mm}\textbf{for} $d\in\{1,\ldots, n\}$ \textbf{do}\\\vspace{2mm}
\hspace*{30mm}$\widecheck{\bm{\beta}}_{r,m+d}^{(i+1)}=\big(\sum_{t=1}^T\bm{z}_{r,d,t}^{(i)}\bm{z}_{r,d,t}^{(i)\top} \big)^{-1} \lVert\cm{A}_{r}^{(i)} \rVert_\Fr^{-2} \sum_{t=1}^T\langle\cm{R}_{r,t}^{(i)}, \cm{A}_{r}^{(i)}\rangle\bm{z}_{r,d,t}^{(i)}$\\\vspace{2mm}
\hspace*{30mm}$\bm{\beta}_{r,m+d}^{(i+1)}=\text{Norm}(\textrm{Truncate}(\widecheck{\bm{\beta}}_{r,m+d}^{(i+1)}, s_d))$\\\vspace{2mm}
\hspace*{25mm}\textbf{end for} \\\vspace{2mm}
\hspace*{19mm}\textbf{end for} \\\vspace{2mm}
\textbf{until convergence}
\end{algorithm}

In Step 2, as the parameters $\{\omega_r, \bm{\beta}_{r,d}: 1\leq d\leq m, 1\leq r\leq R\}$ are fixed, due to the  bi-convex structure of the loss function, the alternating  update of $\bm{\beta}_{r,m+d}$  for $1\leq d\leq n$   reduces to an ordinary least squares regression, which has a closed-form solution as follows:
\begin{align*}
\bm{\beta}_{r,m+d} = \left (\sum_{t=1}^T\bm{z}_{r,d,t}\bm{z}_{r,d,t}^\top\right)^{-1}\frac{\sum_{t=1}^T\langle\cm{R}_{r,t}, \cm{A}_{r}\rangle\bm{z}_{r,d,t}}{\lVert\cm{A}_{r} \rVert_\Fr^2},
\end{align*}
where $\cm{A}_{r}=\omega_{r}\bm{\beta}_{r,1}\circ\cdots\circ\bm{\beta}_{r,m}$ and  $\bm{z}_{r,d,t}=\cm{X}_t\times_{d^\prime=1, d^\prime\neq d}^{n}\bm{\beta}_{r,m+d^\prime}$; see also \cite{sun2017store}. Then, for $1\leq d\leq n$, we let $\widetilde{\bm{\beta}}_{r,m+d}=\textrm{Norm}(\bm{\beta}_{r,m+d})$ for the CP low-rank estimation in Section \ref{subsec:est1}, and $\widehat{\bm{\beta}}_{r,m+d} = \textrm{Norm}(\textrm{Truncate}(\widetilde{\bm{\beta}}_{r,m+d}, s_d))$ for the sparse CP low-rank estimation in Section \ref{subsec:est2}.
Note that the equivalent response in Step 2 is a scalar, $\langle\cm{R}_{r,t}, \cm{A}_{r}\rangle$ for $t=1,\ldots,T$. This significantly reduces the computational burden compared to a tensor-valued response problem.
We present the algorithm for sparse CP low-rank estimation in Algorithm \ref{alg:spralg}.  Here $\bar{\cm{R}}_{r,t}^{(i)}$, $\bm{z}_{r,d,t}^{(i)}$ and $\cm{A}_{r}^{(i)}$ are all computed using parameters obtained from the $i$th iteration. For the non-sparse case, the same algorithm can be used by simply omitting the truncation operations in lines 6 and 13 of Algorithm \ref{alg:spralg}.

\begin{remark}[Computational complexities]\label{remark:compute}
Algorithm \ref{alg:spralg} has a per-iteration computational complexity of $O(R(mTp_y+nTq_x+\sum_{d=1}^{n}q_d^3))\asymp O(R[T(p_y+q_x)+q_{\max}^3)])$, if $m$ and $n$ are fixed, where $q_{\max}=\max_{1\leq d\leq n}q_d$. 
As a comparison, consider the Lasso estimator for sparse regression with vectorized response and predictor, which uses the $\ell_1$-regularization and does not assume any low-rank structure; see Remark \ref{remark:lasso} in Section \ref{subsec:est2} of the main paper. The  commonly used iterative shrinkage-thresholding algorithm (ISTA)  \citep{Beck2009} for the Lasso has a per-iteration computational complexity of $O(Tp_yq_x)$, which is generally much higher than Algorithm \ref{alg:spralg}, since $p_y=\prod_{d=1}^m p_d$ and $q_x=\prod_{d=1}^n q_d$ can rapidly increase when $\cm{Y}_t$ and $\cm{X}_t$ are third- or higher-order tensors, where $R, m$ and $n$ are much smaller.   The advantage of Algorithm \ref{alg:spralg} is two-fold: In Step 1, the rank-one tensor decomposition is highly efficient, and in Step 2, rather than having $\cm{X}_t$ interact with all dimensions of $\cm{Y}_t$, we only need to solve a regression problem with a scalar response. On the other hand, if we consider the TAR($L$) model, the non-convex Tucker low-rank estimator introduced in \cite{wang2024high} serves as a baseline. For TAR($L$) models, the per-iteration computational complexity of our algorithm can be simplified to
$O(R  (T p_y L+p_{\max}^3))$, whereas that for  the gradient descent algorithm in \cite{wang2024high}  is $O((T+r_{\max})p_y^2 L+r_{\max}^3)$, if $m$ is fixed, where $p_{\max}=\max_{1\leq d\leq m} p_d$, $r_{\max}=\max_{1\leq d\leq 2m+1} r_d$, and $r_d$ for $1\leq d\leq 2m+1$ are the Tucker ranks. Since the latter rate scales quadratically rather than linearly with  $p_y$, the algorithm for the Tucker low-rank estimator in \cite{wang2024high} is generally much slower than our method for TAR models; this is also confirmed by  our simulation study in Section \ref{sec:sim3}.
\end{remark}

\begin{remark}[Initialization]
Algorithm \ref{alg:spralg} requires predetermined values of $R$ and $s_d$'s. Note that the estimation theory in Section \ref{sec:est} of the main paper does not require these quantities to be minimal;  rather, it suffices that the true rank and sparsity levels are bounded by them. In practice, we may consider a range of $R$ and $s_d$'s, and select them via time-series cross-validation; i.e., we partition the sample into training and validation sets while preserving the temporal order, and select these parameters based on the out-of-sample forecast accuracy on the validation set; see, e.g., \cite{Wilms2023}. In addition, for the initial values $\omega_{r}^{(0)}$ and $\bm{\beta}_{r,d}^{(0)}$'s, we can first obtain an initial estimate of $\cm{B}$ via a convex optimization, and then conduct a  CP decomposition for $\cm{B}^{(0)}$ for any given $R$. Due to the high dimensionality of most practical problems, we recommend using the Lasso estimator for sparse regression with vectorized response and predictor to obtain $\cm{B}^{(0)}$.
\end{remark}

%%%%%%%%%%%%%%%%%%%%%%%%%%%%%%%%%%%%%%%%%%%%%%%%%%%%%%%%%%%%%%%%%%%%%%%%%%%%%%%%%%%%%%%%%%%%%%%%%%%%%%%
\section{Additional simulation experiment}\label{sec:sim3}

\begin{figure}[!t]
\centering 
\includegraphics[width = \textwidth]{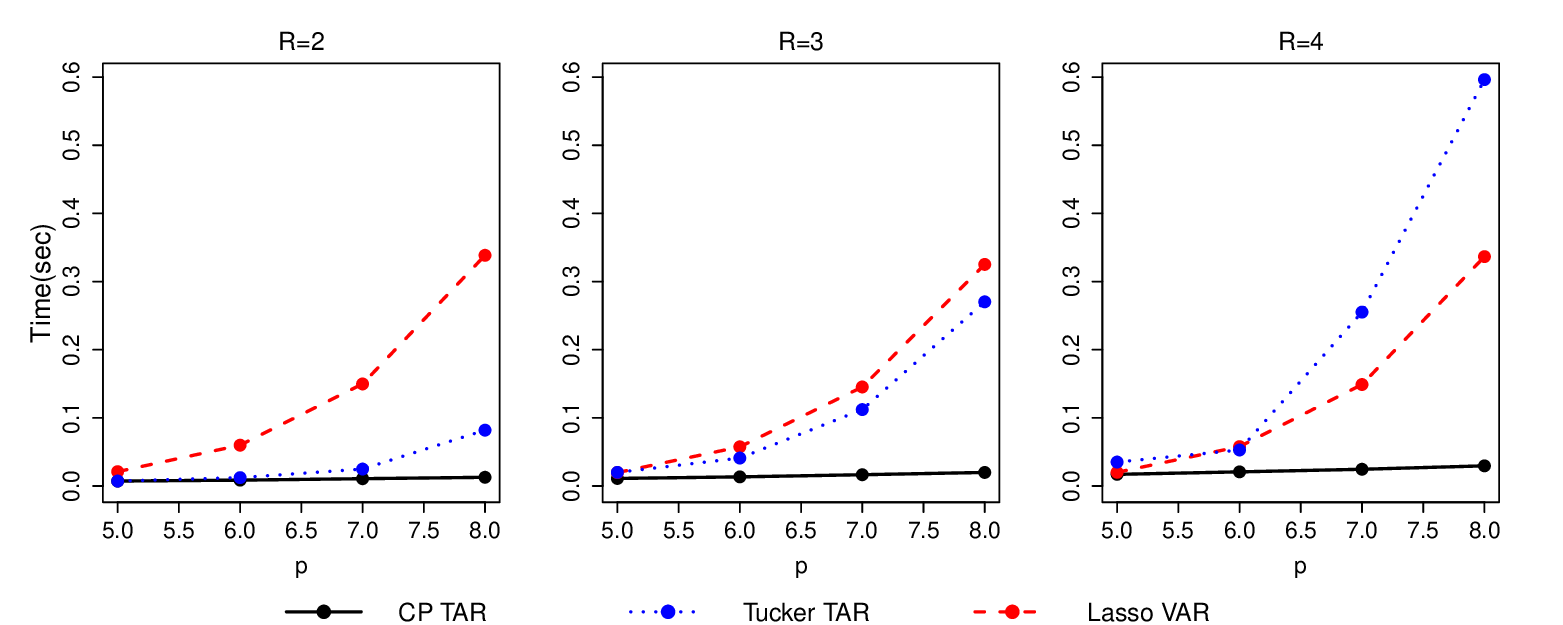}
\caption{Average computation time (in seconds) until convergence for the proposed sparse CP low-rank estimator (CP TAR) and nonconvex Tucker low-rank estimator (Tucker TAR) applied to the TAR model, along with the ISTA algorithm for the Lasso estimator of the VAR model with vectorized data (Lasso VAR).}
\label{Fig:sim3}
\end{figure}

We present an additional  experiment to assess the computation efficiency of the proposed algorithm. We generate data from DGP 2, the TAR(1) model with $T=100$, $s_0=4$, $R\in\{2,3,4\}$, and $p\in\{5,6,7,8\}$, and compare the computation time of  (i) Algorithm \ref{alg:spralg} for the proposed sparse CP low-rank estimator against two benchmark methods: (ii) the nonconvex Tucker low-rank estimator $\widetilde{\cm{B}}_{\textrm{Tucker}}$  in \cite{wang2024high} with Tucker ranks $r_1=\cdots= r_6=R$, which is implemented by a gradient descent algorithm, and (iii) the Lasso estimator for the sparse VAR model \citep{Basu2015} based on vectorized data, $\vect(\cm{Y}_t)=\bm{B}\vect(\cm{Y}_{t-1})+\bm{\varepsilon}_t$, which is implemented via  ISTA; see relevant discussions in Remark  \ref{remark:lasso} in Section \ref{subsec:est2} of the main paper, as well as Remarks \ref{remark:CPtucker} and  \ref{remark:compute} in supplement.

Figure  \ref{Fig:sim3} displays the average computation time required for each algorithm to converge.  It can be observed that, compared to the other two methods, the computation time of the proposed algorithm is much less sensitive to increases in $p$. In particular, the computation time of the Tucker low-rank estimator increases more dramatically with $p$, especially for larger values of $R$. As implied by  Remark \ref{remark:compute}, under the setting of this experiment, the per-iteration computational complexities  are (i) $O(R T p^3)$ for the proposed CP low-rank estimator, (ii) $O((T+R)p^6+R^3)$ for the Tucker low-rank estimator,  and (iii) $O(Tp^6)$ for the Lasso. Thus, the results in Figure  \ref{Fig:sim3} confirm the computational advantage of the proposed algorithm, as its  cost remains low as $R$ and $p$ increase.

%%%%%%%%%%%%%%%%%%%%%%%%%%%%%%%%%%%%%%%%%%%%%%%%%%%%%%%%%%%%%%%%%%%%%%%%%%%%%%%%%%%%%%%%%%%%%%%%%%%%%%%
\section{Additional empirical results} \label{sup:sec2}

\subsection{Mixed-frequency macroeconomic data}\label{sup:subsec21}

This section provides additional results for the mixed-frequency macroeconomic data in Section \ref{subsec:macro} of the main paper. Figure \ref{Fig:macro1} displays  the estimates of  $\omega_r$ for $1\leq r\leq 40$, sorted in descending order. We can observe that the top two values, 95.1 and 74.1, are substantially larger than the others. This suggests that the most important dynamic patterns are revealed by the estimated factor vectors corresponding to $r=1$ and 2.

\begin{figure}[t]
\centering 
\includegraphics[width =0.4 \textwidth]{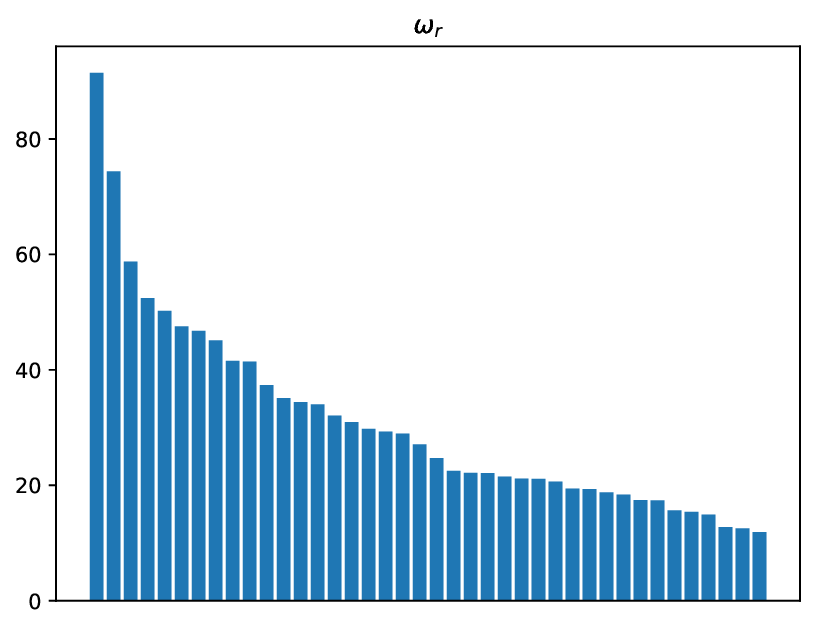}
\caption{Estimates of $\omega_r$ for $1\leq r\leq 40$, arranged in descending order, for macroeconomic data.}
\label{Fig:macro1}
\end{figure}

Moreover, we  assess the out-of-sample forecast accuracy for the high-dimensional quarterly response series based on a rolling procedure, with the test set spanning from Q1-2022 to Q4-2022 (i.e., $T_{\text{test}}=4$): first, we obtain the one-step-ahead forecast for Q1-2022 based on the   model fitted with training data from Q1-1957 to Q4-2021; next, we refit the model by adding the actual data for Q1-2022 into the training set, and then compute the forecast for Q2-2022; this procedure is repeated until we reach the end of the  test set. We compute the average  forecast error (FE) for all quarterly response variables over the test period, i.e., $T_{\text{test}}^{-1}\sum_{i=1}^{T_{\text{test}}}\|\widehat{\bm{y}}_{T+i}-\bm{y}_{T+i}\|_2$, as well as the FE for each of the eight categories of response variables. 
For comparison, we also consider  the  regression based on  vectorizing the data:
$\bm{y}_t=\langle \bm{B}, \vect(\bm{X}_t)\rangle +\bm{\varepsilon}_t$, where  $\vect(\bm{X}_t)\in\mathbb{R}^{336}$  collapses  the predictors and months into a single dimension, and $\bm{B}\in\mathbb{R}^{179\times 336}$ is the mode-one matricization of the tensor $\cm{B}$. While many dimension reduction methods for $\bm{B}$ are available, we consider the two most representative ones as benchmarks: $\bm{B}$ is either entrywise sparse or low-rank, and is estimated  via the Lasso \citep{Tibshirani1996}  or low-rank matrix factorization \citep{Chi2019}, respectively. It is worth noting that, unlike the tensor $\cm{B}\in\mathbb{R}^{179\times 112\times3}$, the vectorization approach makes it  difficult to clearly interpret the grouping patterns across the 112 monthly variables (or across the three months) from   $\bm{B}\in\mathbb{R}^{179\times 336}$, as they are confounded by the months.

\begin{table}[t]
\centering
\begin{tabularx}{\linewidth}{lZZZZZZZZc}\toprule
	& Q1 & Q2 & Q3 & Q4 & Q5  &  Q6 &  Q7  & Q8 & Overall\\\midrule
	Vectorization +  Lasso & 3.64 & \B 2.12 & 5.67 & 2.72 & 9.76 & 4.79 & 3.27 & 4.27 & 14.51\\
	Vectorization +  Low-rank & 3.61 & 2.27 & 4.18 & \B 2.34 & 6.36 & 4.60 & 4.09 & \B 3.42 & 11.69\\
	Proposed  CP &	\B 3.27 & 2.27 & \B 3.72 & 2.84 & \B 5.83 & \B 3.35 & \B 3.16 & 3.66 & \B 10.57\\\bottomrule
\end{tabularx}
\caption{Out-of-sample forecast errors  across eight categories of quarterly response variables for macroeconomic data. The smallest value in each column is highlighted in bold.}
\label{Tab:errorMacro}
\end{table}

Table \ref{Tab:errorMacro} reports the FEs based on the proposed CP low-rank method and the two benchmarks mentioned above, for all quarterly outcomes and for each category. The proposed method achieves the lowest FE in five out of eight categories and the overall FE. The superior performance of our method over the benchmarks in predicting category (Q5) Prices is particularly noteworthy. Interestingly, the heatmap for $r=2$ in Figure \ref{Fig:macro} shows that (Q5) Prices are strongly associated with (M3) Labor Market. This finding further confirms the advantage of our approach in both prediction and interpretability.

Additionally, we provide  heatmaps to further demonstrate the favorable interpretability of our method compared to the vectorization-based sparse estimation: $\bm{y}_t=\langle \bm{B}, \vect(\bm{X}_t)\rangle +\bm{\varepsilon}_t$, where  $\vect(\bm{X}_t)\in\mathbb{R}^{336}$  collapses  the predictors and months into a single dimension, and $\bm{B}\in\mathbb{R}^{179\times 336}$ is the mode-one matricization of the tensor $\cm{B}\in\mathbb{R}^{179\times 112\times3}$. 
The left panel of Figure \ref{Fig:MacroMatrix_Lasso} shows the Lasso estimate of $\bm{B}$ based on the vectorization approach. As shown, the matrix contains numerous weak and unstructured signals, and the collapse of the predictor and month dimensions into a single column dimension results in a cluttered and dense structure that is virtually impossible to interpret. In contrast, our tensor-based approach preserves the multidimensional structure of the data, allowing for a more
meaningful investigation of features across each mode. 
The CP decomposition enables the extraction of dominant signal patterns from the parameter matrix, resulting in a clearer and more interpretable representation. This is also reflected by the much more structured patterns in the  matricization obtained by flattening the estimate $\widetilde{\cm{B}}$ based on  the proposed CP low-rank estimation, as shown in the right panel of Figure \ref{Fig:MacroMatrix_Lasso}.

\begin{figure}[t]
\centering
\begin{tabular}{ll}
	\small{(a) Vectorization + Lasso} & \small{(b) Proposed CP}\\
	\includegraphics[width=0.48\linewidth]{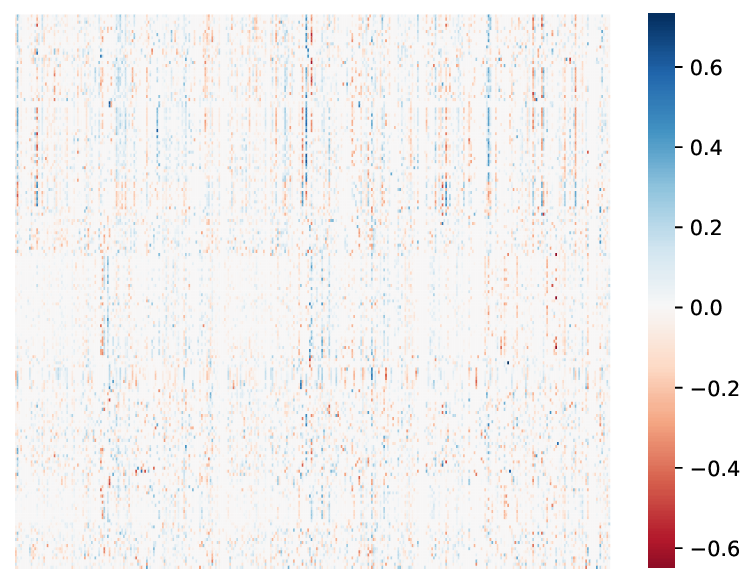} & \includegraphics[width=0.48\linewidth]{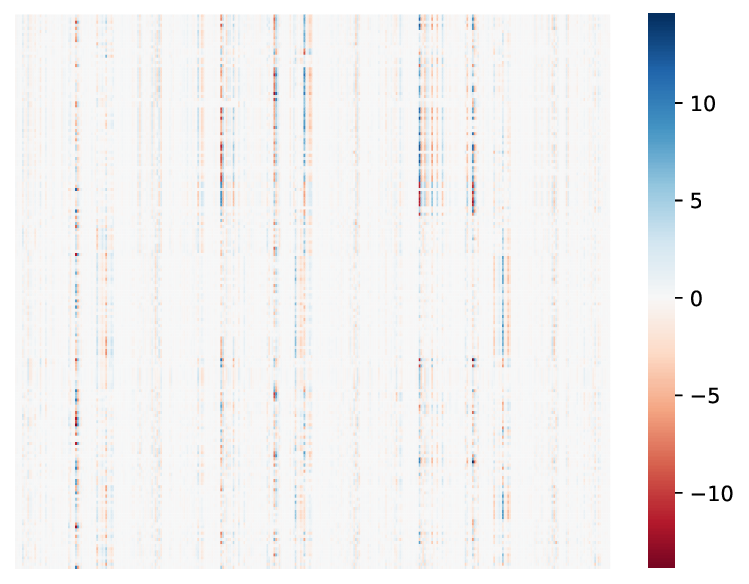}
\end{tabular}
\caption{Heatmaps of $\bm{B}$ estimated via Lasso regression based on the vectorization approach (left) and the corresponding matrix obtained by flattening the estimate $\widetilde{\cm{B}}$ based on  the proposed CP low-rank estimation (right), for mixed-frequency macroeconomic data.}
\label{Fig:MacroMatrix_Lasso}
\end{figure}

\subsection{Taiwan air pollution data}\label{sup:subsec22}

For the Taiwan air pollution data in Section \ref{subsec:taiwan} of the main paper, we further consider several alternative  methods for comparison of interpretability and forecast accuracy, including 
\begin{itemize}
\item VAR model based on vectorized data $\bm{y}_t=\vect(\cm{Y}_t)$:
$\bm{y}_t=\bm{B} \bm{y}_{t-1}+\bm{\varepsilon}_t$, where    $\bm{B}$ is the $p_y\times p_y$ matricization of $\cm{B}$, and $\bm{\varepsilon}_t=\vect(\cm{E}_t)$. We consider two estimators, the Lasso \citep{Tibshirani1996} for entrywise sparse matrix $\bm{B}$, and the low-rank estimator for $\bm{B}$ estimated via  low-rank matrix factorization \citep{Chi2019}.

\item Tucker low-rank TAR model: the nonconvex Tucker low-rank estimator proposed by \cite{wang2024high} for the TAR model,  $\cm{Y}_t=\langle\cm{B},\cm{Y}_{t-1}\rangle +\cm{E}_t$,  where  $\cm{B}= \cm{G}\times_{d=1}^N\bm{U}_d$, with  $\cm{G}\in\mathbb{R}^{r_1\times\cdots\times r_6}$ , $\bm{U}_d\in\mathbb{R}^{p_d\times r_d}$ for $1\leq d \leq 6$, and $(p_1,\dots, p_6)=(12,7,12,12,7,12)$. The selected Tucker ranks are $(r_1,\dots, r_6)=(3,2,5,1,2,3)$. 
\end{itemize}
We consider both the proposed CP low-rank estimator $\widetilde{\cm{B}}$ and sparse CP low-rank estimator $\widehat{\cm{B}}$.  For the Lasso estimator, we examine two penalty levels: one chosen to match the number of nonzero parameters in the fitted sparse CP low-rank TAR model, and another selected to optimize forecast performance, which results in a dense estimate; these are referred to as Lasso (sparse) and Lasso (dense) in Table  \ref{Tab:errorTW}. We conduct the rolling forecast procedure as in Section \ref{sup:subsec21}, using the data in 2017 as the test set. 

\begin{table}[!t]
\begin{tabularx}{0.95\linewidth}{lrrrYYY}
	\toprule
	& \multicolumn{3}{c}{VAR} & \multicolumn{3}{c}{TAR} \\
	\cmidrule(lr){2-4} \cmidrule(lr){5-7}
	Station      & Lasso (dense) & Lasso (sparse) & Low-rank & Tucker     & \multicolumn{1}{c}{CP}        & Sparse CP \\ \midrule
	Guting       & 8.97         & 8.98        & 9.26     & 8.81          & \textbf{8.33} & 9.00      \\
	Tucheng      & 10.25        & 10.29       & 10.59    & 9.29          & \textbf{7.99} & 10.27     \\
	Taoyuan      & 9.00         & 9.02        & 9.29     & \textbf{8.03} & 8.27          & 9.22      \\
	Hsinchu      & 11.27        & 11.27       & 11.02    & 9.91          & \textbf{9.05} & 11.34     \\
	Erlin        & 11.68        & 11.72       & 11.14    & 10.85         & \textbf{9.42} & 11.94     \\
	Xinying      & 10.88        & 10.88       & 10.62    & 11.10         & \textbf{9.93} & 11.15     \\
	Xiaogang     & 10.79        & 10.81       & 10.35    & \textbf{9.16} & 11.36         & 10.81     \\
	Meinong      & 10.35        & 10.37       & 9.76     & \textbf{8.15} & 10.05         & 10.37     \\
	Yilan        & 10.75        & 10.74       & 10.85    & 9.65          & \textbf{7.93} & 10.67     \\
	Dongshan     & 11.41        & 11.42       & 11.08    & 10.50         & \textbf{9.30} & 11.42     \\
	Hualien      & 9.94         & 9.93        & 9.56     & 10.98         & \textbf{8.63} & 9.93      \\
	Taitung      & 10.10        & 10.09       & 9.65     & 10.26         & \textbf{9.17} & 10.09     \\
	All          & 36.31        & 36.35       & 35.64    & 33.87         & \textbf{31.76} & 36.56    \\ \bottomrule
\end{tabularx}%
\caption{Out-of-sample forecast errors across 12 monitoring stations of response variables for Taiwan air pollution data.    The smallest value in each column is highlighted in bold.}
\label{Tab:errorTW}
\end{table}

Table \ref{Tab:errorTW} shows that the proposed CP low-rank TAR model (in the non-sparse case) achieves the smallest overall forecast error (FE) and yields the lowest FE in 9 out of 12 monitoring stations. The larger FEs from the Lasso estimators and the sparse version of our model suggest that the coefficient tensor may not be sparse.
Meanwhile, the Tucker low-rank estimator produces the second smallest overall FE and achieves the lowest FE in 3 out of 12 stations. These results corroborate the presence of an inherent low-rank structure in the data.  The superior forecast accuracy of our model compared to the Tucker low-rank TAR model may be attributed to overparameterization in the latter. Specifically, the Tucker model is equivalent to a CP low-rank TAR model with an effective rank of  $R=\prod_{d=1}^6r_d=180$, whereas our non-sparse model uses a much smaller CP rank of just $R=11$.

\begin{figure}[!t]
\centering
\includegraphics[width=0.8\textwidth]{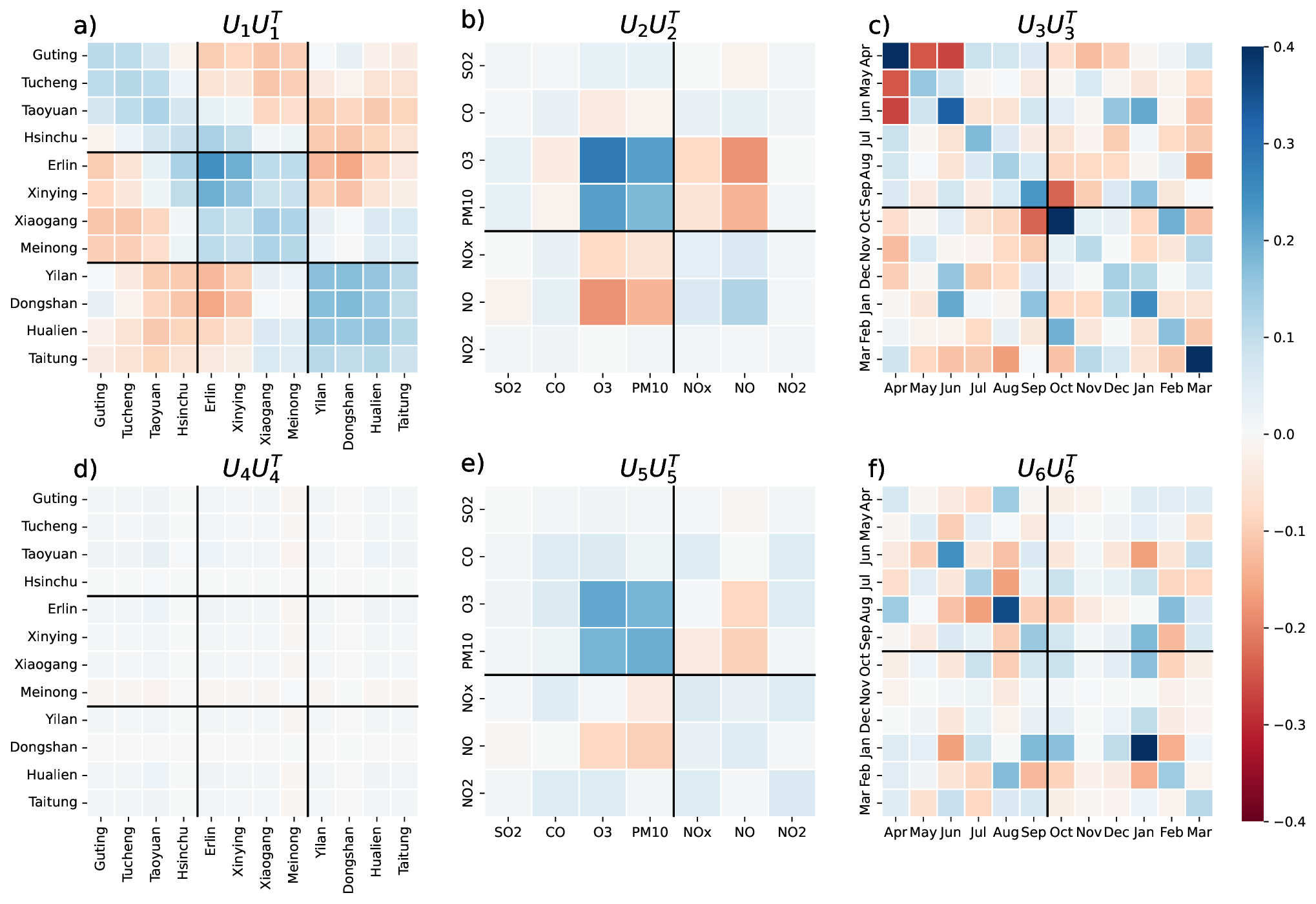}
\caption{Projection matrices from the fitted Tucker low-rank TAR model. $\boldsymbol{U}_1\boldsymbol{U}_1^\top$ corresponds to predictor loadings for monitoring stations; $\boldsymbol{U}_2\boldsymbol{U}_2^\top$ for pollutants; and $\boldsymbol{U}_3\boldsymbol{U}_3^\top$ for  months. Similarly, $\boldsymbol{U}_4\boldsymbol{U}_4^\top$, $\boldsymbol{U}_5\boldsymbol{U}_5^\top$, and $\boldsymbol{U}_6\boldsymbol{U}_6^\top$ represent the response loadings for monitoring stations, pollutants, and months, respectively.}
\label{Fig:TWnc}
\end{figure}

Figure \ref{Fig:TWnc} presents the projection matrices $\bm{U}_d \bm{U}_d^\top$ for $1 \leq d \leq 6$, obtained from the fitted Tucker low-rank TAR model. Overall, the patterns are generally difficult to interpret. For instance, $\bm{U}_4\bm{U}_4^\top$, which corresponds to the predictor loadings for the monitoring stations, has uniformly small entries and lacks clear structure. The same mode (i.e., the fourth mode of $\cm{B}$) corresponds to $\bm{\beta}_{r,4}$ in the proposed model. Unlike the uninformative projection in $\bm{U}_4\bm{U}_4^\top$, the loading vector $\bm{\beta}_{r,4}$ reveals distinct geographical groupings; see the rows of panels (b) and (d), and the columns of panel (e) in Figure \ref{Fig:TW_cross} in Section \ref{subsec:taiwan} of the main paper. On the other hand, $\bm{U}_1\bm{U}_1^\top$ represents the response loadings for the monitoring stations. 
While its diagonal blocks reveal three distinct groups that align with the three geographical regions, compared to an analysis of $\bm{\beta}_{r,1}$ for different $r$, it falls short in highlighting the areas most affected by air pollution; see the rows of panels (a), (c), and (e) in Figure \ref{Fig:TW_cross} in Section \ref{subsec:taiwan} of the main paper. In addition, the information from  $\bm{U}_2\bm{U}_2^\top$ and  $\bm{U}_5\bm{U}_5^\top$ is also limited; these can be compared to $\bm{\beta}_{r,2}$ and $\bm{\beta}_{r,5}$, respectively. From both projection matrices, we observe two important blocks, one for O$_3$ and PM$_{10}$, and the other for NO and NO$_x$. Moreover, the patterns of $\bm{U}_3\bm{U}_3^\top$ and  $\bm{U}_6\bm{U}_6^\top$ are too scattered to interpret clearly.

\begin{figure}[t]
\centering
\begin{tabular}{ll}
	\small{(a) Sparse VAR} & \small{(b) Sparse CP low-rank TAR}\\
	\includegraphics[width=0.48\linewidth]{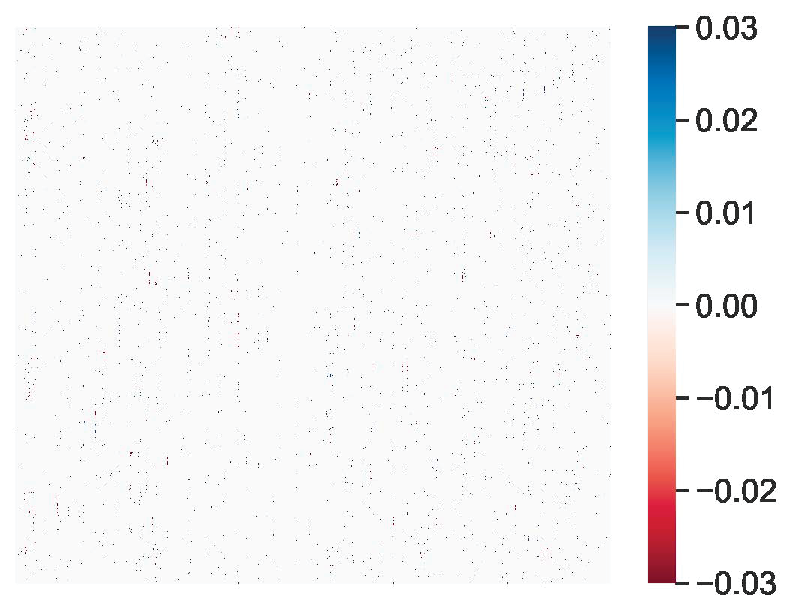} &
	\includegraphics[width=0.48\linewidth]{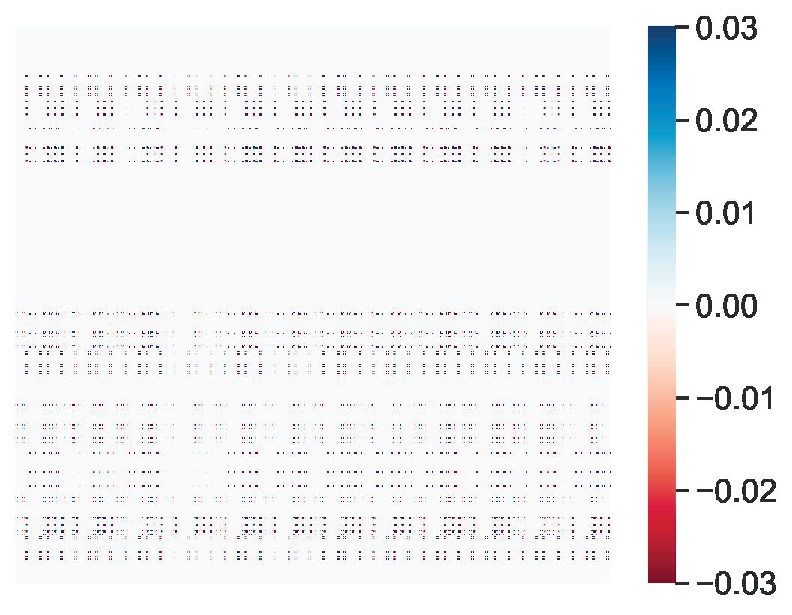}
\end{tabular}
\caption{Heatmap of $\bm{B}$ estimated via Lasso based on the VAR model (left) and the corresponding matrix obtained by flattening the estimate  $\widehat{\cm{B}}$ based on the proposed sparse CP low-rank TAR model, for Taiwan air pollution data.}
\label{Fig:TWlasso}
\end{figure}

Finally, Figure \ref{Fig:TWlasso} presents heatmaps of $\bm{B}$ estimated via Lasso under the VAR model, together with the corresponding matrix obtained by flattening the estimate $\widehat{\cm{B}}$ from the proposed sparse CP low-rank TAR model. The Lasso estimate is clearly difficult to interpret, as the signals appear overly diffuse due to the loss of tensor structure. In contrast, our method incorporates dimension reduction in a structurally informed way, as reflected by the more coherent grouping patterns in the latter.

\section{Technical proofs}
\label{sup:sec1}

\subsection{Notations}
Let
$\pazocal{S}^{p-1}=\{\bm{\beta}\in\mathbb{R}^p: \lVert\bm{\beta}\rVert_2=1\}$ denote the unit sphere in $\mathbb{R}^p$. For any positive integer $K$, denote $[K]=\{1,2, \dots, K\}$. Let $p_y=\prod_{d=1}^m p_d$ and $q_x=\prod_{d=1}^n q_d$.
For a tensor $\cm{B}\in\mathbb{R}^{p_1\times\cdots\times p_m\times q_1\times \cdots \times q_n}$, denote by  $\cm{B}_{[m]}$ the $p_y$-by-$q_x$ matrix  (a.k.a., multi-mode matricization) obtained by collapsing the first $m$ modes of $\cm{B}$ into rows and the last $n$ modes into columns, where the ordering of indices follows the same convention as in tensor vectorization. Denote $\bm{x}_t=\vect(\cm{X}_t)$,  $\bm{y}_t=\vect(\cm{Y}_t)$, and $\bm{\varepsilon}_t=\vect(\cm{E}_t)$.
%and $\widehat{\bm{\Sigma}}_x=T^{-1}\sum_{t=1}^T\bm{x}_t\bm{x}_{t}^\top$

\subsection{Auxiliary lemmas}
This section provides six auxiliary  lemmas used in the proofs of  theorems and corollaries.

\begin{lemma}\label{lemma:eigen}
Suppose that Assumption \ref{assum:stationary}(ii) holds for $\{\bm{x}_t\}$.
%Suppose that $\bm{x}_t =\sum_{j=1}^{\infty}\bm{\Phi}_j^x \bm{\epsilon}_{t-j}$, where $\bm{\epsilon}_t\in\mathbb{R}^{q_\epsilon}$, and $\bm{\Phi}_j^x\in\mathbb{R}^{q_x\times q_\epsilon}$, with $\sum_{j=1}^{\infty}\|\bm{\Phi}_j^x\|_\Fr <\infty$ and $q_\epsilon\leq q_x$. Moreover, suppose that $\{\bm{\epsilon}_t\}$ is a sequence of $i.i.d.$ random vectors with zero mean and $\textrm{var}(\bm{\epsilon}_t)=\bm{I}_{q_\epsilon}$, and the  coordinates $\epsilon_{it}$ are mutually independent and $\sigma_x^2$-sub-Gaussian with zero mean and unit variance.  
Let   $\bm{\Sigma}_x(\ell)=\mathbb{E}(\bm{x}_t\bm{x}_{t-\ell}^\top)$ for $\ell\in\mathbb{Z}$, and $\bm{\Sigma}_x :=\bm{\Sigma}_x(0)= \mathbb{E}(\bm{x}_t\bm{x}_{t}^\top)$. In addition, let $\underline{\bm{\Sigma}}_{x} = \mathbb{E}(\underline{\bm{x}}_T \underline{\bm{x}}_T^\top)=[\bm{\Sigma}_x(\ell-k)]_{1\leq \ell,k\leq T}$, where $\underline{\bm{x}}_T=(\bm{x}_T^\top, \bm{x}_{T-1}^\top\dots, \bm{x}_1^\top)^\top$.  Then it holds 
\begin{equation}\label{eq:eigen1}
	\kappa_1\leq \lambda_{\min}(\bm{\Sigma}_x) \leq \lambda_{\max}(\bm{\Sigma}_x) \leq \kappa_2 
\end{equation}
and 
\begin{equation}\label{eq:eigen2}
	\kappa_1\leq \lambda_{\min}(\underline{\bm{\Sigma}}_x) \leq \lambda_{\max}(\underline{\bm{\Sigma}}_x) \leq \kappa_2,
\end{equation}
where $\kappa_1 = \min_{|z|=1}\lambda_{\min}(\bm{\Phi}_x^{\HH}(z)\bm{\Phi}_x(z))$ and $\kappa_2 = \max_{|z|=1}\lambda_{\max}(\bm{\Phi}_x^{\HH}(z)\bm{\Phi}_x(z))$, with
$\bm{\Phi}_x^{\HH}(z)$ being the conjugate transpose of $\bm{\Phi}_x(z) = \sum_{j=1}^{\infty}\bm{\Phi}_j^x z^j$,  for $z\in\mathbb{C}$.
\end{lemma}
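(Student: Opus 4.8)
The plan is to route both eigenvalue statements through the spectral density of $\{\bm{x}_t\}$, reducing each to a pointwise bound on the unit circle. First I would extract the autocovariances from the moving-average form. Since $\var(\bm{\epsilon}_t)=\bm{I}_{q_\epsilon}$ and the $\bm{\epsilon}_t$ are independent across $t$, substituting $\bm{x}_t=\sum_{j\geq1}\bm{\Phi}_j^x\bm{\epsilon}_{t-j}$ gives $\bm{\Sigma}_x(\ell)=\sum_{j}\bm{\Phi}_j^x(\bm{\Phi}_{j-\ell}^x)^\top$, where the absolute summability $\sum_j\|\bm{\Phi}_j^x\|_\Fr<\infty$ from Assumption \ref{assum:stationary}(ii) guarantees convergence and justifies interchanging sums and integrals below. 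Introducing the spectral density $f_x(\theta)=(2\pi)^{-1}\sum_{\ell}\bm{\Sigma}_x(\ell)e^{-i\ell\theta}$ and re-indexing the double sum via $k=j-\ell$ yields the factorization $2\pi f_x(\theta)=\bm{\Phi}_x(e^{-i\theta})\bm{\Phi}_x^{\HH}(e^{-i\theta})$, with $\bm{\Phi}_x(z)=\sum_{j\geq1}\bm{\Phi}_j^x z^j$. Because $\bm{\Phi}_x^{\HH}(z)\bm{\Phi}_x(z)$ and $\bm{\Phi}_x(z)\bm{\Phi}_x^{\HH}(z)$ share the same nonzero eigenvalues, this identifies $\max_\theta 2\pi\lambda_{\max}(f_x(\theta))=\kappa_2$ and, in the nondegenerate regime, $\min_\theta 2\pi\lambda_{\min}(f_x(\theta))=\kappa_1$.

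Next I would establish \eqref{eq:eigen1}. Using the inversion formula $\bm{\Sigma}_x=\bm{\Sigma}_x(0)=\int_{-\pi}^{\pi}f_x(\theta)\,d\theta$, for any unit vector $\bm{v}\in\mathbb{R}^{q_x}$ I would write $\bm{v}^\top\bm{\Sigma}_x\bm{v}=\int_{-\pi}^{\pi}\bm{v}^\top f_x(\theta)\bm{v}\,d\theta$ and sandwich the integrand between $\min_\theta\lambda_{\min}(f_x(\theta))$ and $\max_\theta\lambda_{\max}(f_x(\theta))$. Integrating over $[-\pi,\pi]$ and invoking the identifications from the first step converts these into $\kappa_1$ and $\kappa_2$, which gives \eqref{eq:eigen1}.

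For the block-Toeplitz matrix $\underline{\bm{\Sigma}}_x$ I would run the same argument frequency by frequency. Writing $\bm{\Sigma}_x(h)=\int_{-\pi}^{\pi}f_x(\theta)e^{ih\theta}\,d\theta$ and grouping the blocks of $\underline{\bm{v}}=(\bm{v}_1^\top,\dots,\bm{v}_T^\top)^\top$ into the trigonometric polynomial $\bm{w}(\theta)=\sum_{a=1}^{T}e^{-ia\theta}\bm{v}_a\in\mathbb{C}^{q_x}$, the quadratic form collapses to
\[
\underline{\bm{v}}^\top\underline{\bm{\Sigma}}_x\underline{\bm{v}}=\int_{-\pi}^{\pi}\bm{w}(\theta)^{\HH}f_x(\theta)\bm{w}(\theta)\,d\theta .
\]
Bounding the integrand by $\lambda_{\min/\max}(f_x(\theta))\|\bm{w}(\theta)\|_2^2$ and applying the Parseval identity $\int_{-\pi}^{\pi}\|\bm{w}(\theta)\|_2^2\,d\theta=2\pi\|\underline{\bm{v}}\|_2^2$ (orthogonality of the exponentials) delivers \eqref{eq:eigen2} with the \emph{same} constants $\kappa_1,\kappa_2$, since the factor of $2\pi$ cancels exactly as in the previous step.

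The re-indexing in the first step and the Parseval computation are routine. The delicate point, where I would spend the most care, is the lower bound: the spectral density is the $q_x\times q_x$ Gram matrix $\bm{\Phi}_x(z)\bm{\Phi}_x^{\HH}(z)$, whereas $\kappa_1$ is defined through the $q_\epsilon\times q_\epsilon$ Gram matrix $\bm{\Phi}_x^{\HH}(z)\bm{\Phi}_x(z)$. Their maximal eigenvalues always coincide, so the upper bounds are automatic, but their minimal eigenvalues agree only when $\bm{\Phi}_x(z)$ has full rank on $|z|=1$, i.e.\ no spurious zero eigenvalue is introduced when passing to the larger matrix; this is precisely the regime in which $\kappa_1>0$ is meaningful. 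I would therefore justify the $\lambda_{\min}$ identification explicitly, as it is the only nontrivial step in transferring the pointwise spectral bounds to the claimed constants.
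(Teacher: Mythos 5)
Your argument is necessarily a different route from the paper's, because the paper does not actually prove this lemma: it disposes of it in one line by citing Lemma S18(i) of \cite{zheng2025}. Your self-contained spectral-density derivation is the natural one, and most of it is sound: the autocovariance computation, the factorization $2\pi f_x(\theta)=\bm{\Phi}_x(e^{-i\theta})\bm{\Phi}_x^{\HH}(e^{-i\theta})$, the sandwich $\bm{v}^\top\bm{\Sigma}_x\bm{v}=\int_{-\pi}^{\pi}\bm{v}^\top f_x(\theta)\bm{v}\,d\theta$, and the Parseval computation for the block-Toeplitz matrix are all correct. They deliver both upper bounds unconditionally, and the full lemma in the square case $q_\epsilon=q_x$.

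The gap sits exactly at the point you flag, and your proposed resolution of it is wrong. Your lower bounds are in terms of $\min_{|z|=1}\lambda_{\min}(\bm{\Phi}_x(z)\bm{\Phi}_x^{\HH}(z))$, the $q_x\times q_x$ Gram matrix, whereas $\kappa_1$ is defined through the $q_\epsilon\times q_\epsilon$ Gram matrix. You claim the two minimal eigenvalues agree ``when $\bm{\Phi}_x(z)$ has full rank,'' but full rank of a $q_x\times q_\epsilon$ matrix means rank $q_\epsilon$; whenever $q_\epsilon<q_x$ strictly (which Assumption \ref{assum:stationary}(ii) explicitly permits), the matrix $\bm{\Phi}_x(z)\bm{\Phi}_x^{\HH}(z)$ has rank at most $q_\epsilon<q_x$, so its smallest eigenvalue is identically zero on $|z|=1$ even though $\kappa_1>0$. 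Your lower bound then degenerates to $\lambda_{\min}(\bm{\Sigma}_x)\geq 0$ and does not yield $\kappa_1\leq\lambda_{\min}(\bm{\Sigma}_x)$; the condition ``$\kappa_1>0$'' (full \emph{column} rank) is not the condition you need ($q_\epsilon=q_x$). Moreover, this is not a defect you can argue around: take $q_\epsilon=1$, $q_x=2$, $\bm{\Phi}_1^x=(1,0)^\top$, $\bm{\Phi}_2^x=(0,1)^\top$ and all other coefficients zero, so that $\bm{x}_t=(\epsilon_{t-1},\epsilon_{t-2})^\top$ and $\bm{\Sigma}_x=\bm{I}_2$, while $\bm{\Phi}_x^{\HH}(z)\bm{\Phi}_x(z)=|z|^2+|z|^4=2$ on $|z|=1$, giving $\kappa_1=2>1=\lambda_{\min}(\bm{\Sigma}_x)$. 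So the lower bounds in \eqref{eq:eigen1}--\eqref{eq:eigen2} cannot hold in the generality stated, and your proof is complete only under the additional hypothesis $q_\epsilon=q_x$ (which is, in fact, the only regime in which the paper's later proofs invoke the lower bound, e.g.\ in Lemma \ref{lemma:eigenAR}). You should either add that hypothesis or note that for $q_\epsilon<q_x$ the constant $\kappa_1$ would have to be redefined through $\bm{\Phi}_x(z)\bm{\Phi}_x^{\HH}(z)$.
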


\begin{proof}[Proof of Lemma \ref{lemma:eigen}]
The results of this lemma are provided by  Lemma S18(i) in the supplementary material for \cite{zheng2025}.
\end{proof}

\begin{lemma}\label{lemma:eigenAR}
Suppose that $\cm{Y}_t=\sum_{\ell=1}^{L}\langle\cm{B}_\ell,\cm{Y}_{t-\ell}\rangle+\cm{E}_t$ is a stationary TAR($L$) model, and Assumption \ref{assum:error}(i) holds for $\{\cm{E}_t\}$.
Let $\bm{x}_t=(\bm{y}_{t-1}^\top, \dots, \bm{y}_{t-L}^\top)^\top$, where $\bm{y}_t=\vect(\cm{Y}_t)$. Then it holds 
\begin{equation}\label{eq:eigenAR1}
	\kappa_1^{\textrm{AR}}\leq \lambda_{\min}(\bm{\Sigma}_x) \leq \lambda_{\max}(\bm{\Sigma}_x) \leq \kappa_2^{\textrm{AR}} 
\end{equation}
and 
\begin{equation}\label{eq:eigenAR2}
	\lambda_{\max}(\underline{\bm{\Sigma}}_x) \leq \kappa_2^{\textrm{AR}} L,
\end{equation}
where $\kappa_1^{\textrm{AR}}=	\lambda_{\min}(\bm{\Sigma}_\varepsilon)\mu_{\min}(\bm{\Psi}_*)$  and  $\kappa_2^{\textrm{AR}}=\lambda_{\max}(\bm{\Sigma}_\varepsilon)\mu_{\max}(\bm{\Psi}_*)$, respectively. Here $\mu_{\min}(\bm{\Psi}_*) = \min_{|z|=1}\lambda_{\min}(\bm{\Psi}_*(z)\bm{\Psi}_*^{\HH}(z))$ and $\mu_{\max}(\bm{\Psi}_*) = \max_{|z|=1}\lambda_{\max}(\bm{\Psi}_*(z)\bm{\Psi}_*^{\HH}(z))$, where
$\bm{\Psi}_*(z) = \sum_{j=0}^{\infty}\bm{\Psi}_j^* z^j$, with $\bm{\Psi}_0^*=\bm{I}$, and $\bm{\Psi}_*^{\HH}(z)$ is the conjugate transpose of $\bm{\Psi}_*(z)$, for  $z\in\mathbb{C}$.
\end{lemma}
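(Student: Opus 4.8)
The plan is to exploit the MA($\infty$) representation of $\{\bm{y}_t\}$ and reduce both claims to spectral-density eigenvalue bounds for the $\bm{y}$ process. First I would record that, under stationarity and Assumption \ref{assum:error}(i), the series admits $\bm{y}_t = \sum_{j=0}^\infty \bm{\Psi}_j^*\bm{\varepsilon}_{t-j}$ with $\bm{\Psi}_0^* = \bm{I}$ and $\bm{\varepsilon}_t = \bm{\Sigma}_\varepsilon^{1/2}\bm{\xi}_t$ (as already noted in Section \ref{subsec:tarest}), so that its spectral density is $f_y(\theta) = (2\pi)^{-1}\bm{\Psi}_*(e^{-i\theta})\bm{\Sigma}_\varepsilon\bm{\Psi}_*^{\HH}(e^{-i\theta})$. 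Using $\lambda_{\min}(\bm{\Sigma}_\varepsilon)\,\bm{A}^{\HH}\bm{A} \preceq \bm{A}^{\HH}\bm{\Sigma}_\varepsilon\bm{A} \preceq \lambda_{\max}(\bm{\Sigma}_\varepsilon)\,\bm{A}^{\HH}\bm{A}$ with $\bm{A} = \bm{\Psi}_*^{\HH}(z)$, and noting that $\bm{\Psi}_*^{\HH}(z)\bm{\Psi}_*(z)$ shares eigenvalues with $\bm{\Psi}_*(z)\bm{\Psi}_*^{\HH}(z)$, the extreme eigenvalues of $2\pi f_y$ over $|z|=1$ lie in $[\kappa_1^{\textrm{AR}}, \kappa_2^{\textrm{AR}}]$.

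For \eqref{eq:eigenAR1}, I would observe that $\bm{\Sigma}_x = \mathbb{E}(\bm{x}_t\bm{x}_t^\top)$ is, up to a block permutation (a similarity that preserves eigenvalues), the covariance matrix of the length-$L$ segment $(\bm{y}_1^\top, \dots, \bm{y}_L^\top)^\top$ of the stationary process $\{\bm{y}_t\}$. For any unit $\bm{v} = (\bm{v}_1^\top, \dots, \bm{v}_L^\top)^\top$, writing $\bm{a}(\theta) = \sum_{\ell} \bm{v}_\ell e^{i\ell\theta}$ and using the Cramér representation gives $\bm{v}^\top\bm{\Sigma}_x\bm{v} = \int_{-\pi}^\pi \bm{a}(\theta)^{\HH} f_y(\theta)\bm{a}(\theta)\,d\theta$; since orthogonality of the exponentials yields $\int_{-\pi}^\pi\|\bm{a}(\theta)\|_2^2\,d\theta = 2\pi\sum_\ell\|\bm{v}_\ell\|_2^2 = 2\pi$, this quadratic form is sandwiched between $\min_\theta\lambda_{\min}(2\pi f_y) \ge \kappa_1^{\textrm{AR}}$ and $\max_\theta\lambda_{\max}(2\pi f_y) \le \kappa_2^{\textrm{AR}}$. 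This is the multivariate Toeplitz eigenvalue bound and could alternatively be cited from Basu--Michailidis-type results.

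For \eqref{eq:eigenAR2}, I would invoke Lemma \ref{lemma:eigen} applied to $\{\bm{x}_t\}$ itself: since $\bm{x}_t = (\bm{y}_{t-1}^\top, \dots, \bm{y}_{t-L}^\top)^\top$ is a linear process in the identity-covariance innovations $\bm{\xi}_t$ with transfer function obtained by vertically stacking the shifted blocks $z^\ell\bm{\Psi}_*(z)\bm{\Sigma}_\varepsilon^{1/2}$ for $\ell = 1, \dots, L$, Assumption \ref{assum:stationary}(ii) holds with $\bm{\epsilon}_t = \bm{\xi}_t$. On $|z| = 1$ the shift factors satisfy $|z^\ell|^2 = 1$, so $\bm{\Phi}_x^{\HH}(z)\bm{\Phi}_x(z) = L\,\bm{\Sigma}_\varepsilon^{1/2}\bm{\Psi}_*^{\HH}(z)\bm{\Psi}_*(z)\bm{\Sigma}_\varepsilon^{1/2}$, whence the constant in Lemma \ref{lemma:eigen} satisfies $\kappa_2 = \max_{|z|=1}\lambda_{\max}(\bm{\Phi}_x^{\HH}(z)\bm{\Phi}_x(z)) \le L\kappa_2^{\textrm{AR}}$, and \eqref{eq:eigen2} then delivers $\lambda_{\max}(\underline{\bm{\Sigma}}_x)\le\kappa_2\le L\kappa_2^{\textrm{AR}}$. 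Equivalently and more transparently, one may write $\underline{\bm{x}}_T = \bm{S}\bm{w}$ for a block-selection matrix $\bm{S}$ and a vector $\bm{w}$ of $T+L-1$ consecutive $\bm{y}$'s; since each $\bm{y}$-block is selected at most $L$ times, $\lambda_{\max}(\bm{S}\bm{S}^\top)\le L$, and combining this with the segment bound $\lambda_{\max}(\mathrm{Cov}(\bm{w}))\le\kappa_2^{\textrm{AR}}$ from the previous paragraph yields the same conclusion.

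The main obstacle I anticipate is pinning down the constants in the two regimes rather than any deep argument: the single-time covariance $\bm{\Sigma}_x$ must be bounded \emph{without} the spurious factor $L$ that a naive application of Lemma \ref{lemma:eigen} to $\{\bm{x}_t\}$ would introduce through the $\sum_{\ell=1}^L$, and this is exactly what the orthogonality of the exponentials in the segment bound removes, whereas the stacked covariance $\underline{\bm{\Sigma}}_x$ genuinely retains the factor $L$ from the $L$-fold overlap of the windows. I would also verify the secondary technical points that the filter coefficients $\bm{\Psi}_j^*$ are absolutely summable (from stability of the VAR companion form implied by stationarity, per Remark \ref{remark:stationary}) and that $q_\epsilon = p_y \le q_x = Lp_y$, so that Lemma \ref{lemma:eigen} is applicable in the second step.
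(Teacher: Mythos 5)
Your proof is correct and follows essentially the same route as the paper: both bounds rest on identifying $\bm{\Sigma}_x$ with the covariance of an $L$-block window of the stationary $\{\bm{y}_t\}$ process (so no factor $L$ appears in \eqref{eq:eigenAR1}), and on applying Lemma \ref{lemma:eigen} to $\bm{x}_t$ with the stacked transfer function $(z\bm{I},\dots,z^L\bm{I})^\top\bm{\Psi}_*(z)\bm{\Sigma}_\varepsilon^{1/2}$, whose $|z^\ell|^2=1$ terms produce exactly the factor $L$ in \eqref{eq:eigenAR2}. The only cosmetic difference is that for \eqref{eq:eigenAR1} you re-derive the multivariate Toeplitz eigenvalue bound from the Cram\'er representation, whereas the paper obtains it by re-invoking \eqref{eq:eigen2} of Lemma \ref{lemma:eigen} with $T=L$ applied to $\bm{y}_{t-1}$ viewed as an MA($\infty$) in $\bm{\xi}_t$; these are the same inequality.
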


\begin{proof}[Proof of Lemma \ref{lemma:eigenAR}]
Denote
$\bm{\Sigma}_y(\ell)=\mathbb{E}(\bm{y}_t\bm{y}_{t-\ell}^\top)$ for $\ell\in\mathbb{Z}$, and  $\underline{\bm{\Sigma}}_y = \mathbb{E}(\underline{\bm{y}}_L \underline{\bm{y}}_L^\top)=[\bm{\Sigma}_y(\ell-k)]_{1\leq \ell,k\leq L}$, where $\underline{\bm{y}}_L=(\bm{y}_L^\top, \bm{y}_{L-1}^\top\dots, \bm{y}_1^\top)^\top$. Thus, $\bm{\Sigma}_x=\underline{\bm{\Sigma}}_y$.  It can be verified that  Assumption \ref{assum:stationary}(ii)  holds for the process $\bm{y}_{t-1}=\sum_{j=1}^{\infty}\bm{\Phi}_j^x \bm{\epsilon}_{t-j}=\bm{\Phi}_x(B) \bm{\epsilon}_{t}= \sum_{j=1}^{\infty}\bm{\Psi}_{j-1}^* \bm{\Sigma}_{\varepsilon}^{1/2} \bm{\xi}_{t-j}=\bm{\Psi}_*(B)B \bm{\Sigma}_{\varepsilon}^{1/2} \bm{\xi}_t$, by setting $\bm{\epsilon}_{t}=\bm{\xi}_t$ and the MA characteristic polynomial $\bm{\Phi}_x(z)=z\bm{\Psi}_*(z)\bm{\Sigma}_{\varepsilon}^{1/2}$, i.e., $\bm{\Phi}_j^x=\bm{\Psi}_{j-1}^*\bm{\Sigma}_{\varepsilon}^{1/2}$ for $j\geq1$. Moreover, note that \eqref{eq:eigen2}  is applicable to any positive integer $T$. As a result, we can apply Lemma \ref{lemma:eigen} by treating  $\bm{y}_{t-1}$ as the process $\bm{x}_t$ therein.  Then we can show that \eqref{eq:eigen2} holds for $\underline{\bm{\Sigma}}_y$ with the bounds $\kappa_i$ replaced by $\kappa_i^{\textrm{AR}}$ for $i=1,2$. Thus, \eqref{eq:eigenAR1} is proved.

To prove  \eqref{eq:eigenAR2}, the key is to write $\bm{x}_t=(\bm{y}_{t-1}^\top, \dots, \bm{y}_{t-L}^\top)^\top = (B^0, B^1,  \dots, B^{L-1})^\top \bm{y}_{t-1}=\bm{\Phi}_x(B) \bm{\epsilon}_{t}$, by setting $\bm{\epsilon}_{t}=\bm{\xi}_t$ and the MA characteristic polynomial 
\[
\bm{\Phi}_x(z)=(z \bm{I}, z^2 \bm{I}, \dots, z^L\bm{I})^\top \bm{\Psi}_*(z)\bm{\Sigma}_{\varepsilon}^{1/2}
\]
Now we  apply Lemma \ref{lemma:eigen} by regarding $(\bm{y}_{t-1}^\top, \dots, \bm{y}_{t-L}^\top)^\top$ as the process $\bm{x}_t$ therein. This leads to 
\begin{equation*}
	\lambda_{\max}(\underline{\bm{\Sigma}}_x) \leq \max_{|z|=1}\lambda_{\max}(\bm{\Phi}_x^{\HH}(z)\bm{\Phi}_x(z)) \leq \lambda_{\max}(\bm{\Sigma}_\varepsilon)\mu_{\max}(\bm{\Psi}_*) L = \kappa_2^{\textrm{AR}} L,
\end{equation*}
i.e., \eqref{eq:eigenAR2}. The proof of this lemma is complete.
\end{proof}

\begin{lemma}[Hanson-Wright inequality for stationary time series]\label{lemma:hansonw}
%	Suppose that Assumption \ref{assum:stationary}(ii) holds for $\{\bm{x}_t\}$. 
Suppose that $\bm{x}_t =\sum_{j=1}^{\infty}\bm{\Phi}_j^x \bm{\epsilon}_{t-j}$, where $\bm{\epsilon}_t\in\mathbb{R}^{q_\epsilon}$,  $\bm{\Phi}_j^x\in\mathbb{R}^{q_x\times q_\epsilon}$, $\sum_{j=1}^{\infty}\|\bm{\Phi}_j^x\|_\Fr <\infty$, and $q_\epsilon\leq q_x$. Moreover, suppose that $\{\bm{\epsilon}_t\}$ is a sequence of $i.i.d.$ random vectors with zero mean and $\textrm{var}(\bm{\epsilon}_t)=\bm{I}_{q_\epsilon}$, and the  coordinates $\epsilon_{it}$ are mutually independent and $\sigma_x^2$-sub-Gaussian.  
Then, for any $\bm{M}\in\mathbb{R}^{K\times q_x}$ and any $\eta>0$,  where $K$ is a positive integer, it holds
\begin{equation*}
	\mathbb{P}\left \{\left |\frac{1}{T}\sum_{t=1}^{T}\|\bm{M}\bm{x}_t\|_2^2 - \mathbb{E}\left (\|\bm{M}\bm{x}_t\|_2^2\right )\right | \geq \eta \sigma_x^2 \lambda_{\max}(\underline{\bm{\Sigma}}_x)\|\bm{M}\|_{\Fr}^2\right \} \leq 2e^{-c_{\HW}\min(\eta, \eta^2)T},
\end{equation*}
where $c_{\HW}>0$ is an absolute constant. In particular, if  $\bm{x}_t=(\bm{y}_{t-1}^\top, \dots, \bm{y}_{t-L}^\top)^\top$, where $\bm{y}_t=\vect(\cm{Y}_t)$ is from a TAR($L$) process $\{\cm{Y}_t\}$ satisfying the conditions of Lemma \ref{lemma:eigenAR},  then the above inequality holds with $\sigma_x^2=\sigma^2$.
%	\end{itemize}
\end{lemma}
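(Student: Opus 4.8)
The plan is to rewrite the empirical sum as a quadratic form in the underlying i.i.d. sub-Gaussian innovations and invoke the classical Hanson–Wright inequality, with the linear-filter norms converted into $\lambda_{\max}(\underline{\bm{\Sigma}}_x)$. First I would stack the predictors into $\underline{\bm{x}}_T=(\bm{x}_T^\top,\dots,\bm{x}_1^\top)^\top$ and set $\widetilde{\bm{M}}=\bm{I}_T\otimes\bm{M}$, so that $\sum_{t=1}^T\|\bm{M}\bm{x}_t\|_2^2=\|\widetilde{\bm{M}}\underline{\bm{x}}_T\|_2^2$. Because $\{\bm{x}_t\}$ is the linear process $\bm{x}_t=\sum_{j\geq1}\bm{\Phi}_j^x\bm{\epsilon}_{t-j}$, there is a linear filter $\bm{L}$ with $\underline{\bm{x}}_T=\bm{L}\underline{\bm{\epsilon}}$, where $\underline{\bm{\epsilon}}$ collects the relevant innovations $\{\bm{\epsilon}_s:s\leq T-1\}$; crucially $\bm{L}\bm{L}^\top=\var(\underline{\bm{x}}_T)=\underline{\bm{\Sigma}}_x$ since $\var(\underline{\bm{\epsilon}})=\bm{I}$. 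Thus $\sum_{t=1}^T\|\bm{M}\bm{x}_t\|_2^2=\underline{\bm{\epsilon}}^\top\bm{A}\underline{\bm{\epsilon}}$ with $\bm{A}=\bm{L}^\top\widetilde{\bm{M}}^\top\widetilde{\bm{M}}\bm{L}$ symmetric and positive semidefinite, whose mean is $\trace(\bm{A})=\trace(\widetilde{\bm{M}}^\top\widetilde{\bm{M}}\underline{\bm{\Sigma}}_x)$, matching $\mathbb{E}(\sum_t\|\bm{M}\bm{x}_t\|_2^2)$ by stationarity.

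The heart of the argument is to control the two relevant norms of $\bm{A}$. For the operator norm, $\|\bm{A}\|_{\op}=\|\widetilde{\bm{M}}\bm{L}\|_{\op}^2\leq\|\bm{M}\|_{\op}^2\|\bm{L}\|_{\op}^2=\|\bm{M}\|_{\op}^2\lambda_{\max}(\underline{\bm{\Sigma}}_x)\leq\|\bm{M}\|_{\Fr}^2\lambda_{\max}(\underline{\bm{\Sigma}}_x)$, using $\|\bm{L}\|_{\op}^2=\|\bm{L}\bm{L}^\top\|_{\op}=\lambda_{\max}(\underline{\bm{\Sigma}}_x)$ and $\|\widetilde{\bm{M}}\|_{\op}=\|\bm{M}\|_{\op}$. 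For the Frobenius norm I would apply the elementary bound $\|\bm{N}^\top\bm{N}\|_{\Fr}\leq\|\bm{N}\|_{\op}\|\bm{N}\|_{\Fr}$ with $\bm{N}=\widetilde{\bm{M}}\bm{L}$, combined with $\|\widetilde{\bm{M}}\bm{L}\|_{\op}\leq\|\bm{M}\|_{\op}\sqrt{\lambda_{\max}(\underline{\bm{\Sigma}}_x)}$ and $\|\widetilde{\bm{M}}\bm{L}\|_{\Fr}^2=\trace(\widetilde{\bm{M}}^\top\widetilde{\bm{M}}\underline{\bm{\Sigma}}_x)\leq\lambda_{\max}(\underline{\bm{\Sigma}}_x)\trace(\widetilde{\bm{M}}^\top\widetilde{\bm{M}})=T\lambda_{\max}(\underline{\bm{\Sigma}}_x)\|\bm{M}\|_{\Fr}^2$, which together give $\|\bm{A}\|_{\Fr}\leq\sqrt{T}\,\|\bm{M}\|_{\Fr}^2\lambda_{\max}(\underline{\bm{\Sigma}}_x)$.

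Finally I would apply the Hanson–Wright inequality to $\underline{\bm{\epsilon}}^\top\bm{A}\underline{\bm{\epsilon}}$ at deviation level $u=T\eta\sigma_x^2\lambda_{\max}(\underline{\bm{\Sigma}}_x)\|\bm{M}\|_{\Fr}^2$, i.e. $\eta$ times the scale appearing in the normalized average. Substituting the two norm bounds gives $u/(\sigma_x^2\|\bm{A}\|_{\op})\geq T\eta$ and $u^2/(\sigma_x^4\|\bm{A}\|_{\Fr}^2)\geq T\eta^2$, so the exponent $\min\{\cdot,\cdot\}\geq T\min(\eta,\eta^2)$ and the stated bound follows with $c_{\HW}$ the absolute Hanson–Wright constant. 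The final assertion for the TAR$(L)$ case is then immediate: the representation in the proof of Lemma \ref{lemma:eigenAR} identifies $\bm{\epsilon}_t=\bm{\xi}_t$, which is $\sigma^2$-sub-Gaussian by Assumption \ref{assum:error}(i), so the general inequality applies verbatim with $\sigma_x^2=\sigma^2$.

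The main obstacle is that $\underline{\bm{\epsilon}}$ is infinite-dimensional because the moving-average filter is of infinite order, whereas the classical Hanson–Wright inequality is stated for finite-dimensional random vectors. I would resolve this by truncating the filter at lag $J$, applying the finite-dimensional inequality to $\bm{x}_t^{(J)}=\sum_{j=1}^J\bm{\Phi}_j^x\bm{\epsilon}_{t-j}$ with the norm bounds above holding uniformly in $J$ (since $\lambda_{\max}(\underline{\bm{\Sigma}}_x^{(J)})$ stays bounded by Lemma \ref{lemma:eigen} and the summability $\sum_j\|\bm{\Phi}_j^x\|_{\Fr}<\infty$), and then passing to the limit $J\to\infty$ via $L^2$-convergence of the truncated quadratic forms. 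This approximation step is routine but slightly delicate and is where the bulk of the careful bookkeeping lies.
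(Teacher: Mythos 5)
Your argument is correct, but it is worth noting that the paper does not actually prove this lemma: its ``proof'' is a one-line citation to Lemma S16 in the supplementary material of \cite{zheng2025}. What you have written is a self-contained derivation of that outsourced result, and it follows the standard route (cf.\ Proposition 2.4 of \cite{Basu2015}): stack the sample into $\underline{\bm{x}}_T=\bm{L}\underline{\bm{\epsilon}}$ with $\bm{L}\bm{L}^\top=\underline{\bm{\Sigma}}_x$, express $\sum_t\|\bm{M}\bm{x}_t\|_2^2$ as the quadratic form $\underline{\bm{\epsilon}}^\top\bm{A}\underline{\bm{\epsilon}}$ with $\bm{A}=\bm{L}^\top(\bm{I}_T\otimes\bm{M}^\top\bm{M})\bm{L}$, and apply the classical Hanson--Wright inequality. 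Your two norm bounds check out: $\|\bm{A}\|_{\op}\leq\|\bm{M}\|_{\op}^2\lambda_{\max}(\underline{\bm{\Sigma}}_x)$ and, via $\|\bm{N}^\top\bm{N}\|_{\Fr}\leq\|\bm{N}\|_{\op}\|\bm{N}\|_{\Fr}$ together with $\|\bm{N}\|_{\Fr}^2=\trace(\widetilde{\bm{M}}^\top\widetilde{\bm{M}}\underline{\bm{\Sigma}}_x)\leq T\lambda_{\max}(\underline{\bm{\Sigma}}_x)\|\bm{M}\|_{\Fr}^2$, you get $\|\bm{A}\|_{\Fr}\leq\sqrt{T}\|\bm{M}\|_{\op}\|\bm{M}\|_{\Fr}\lambda_{\max}(\underline{\bm{\Sigma}}_x)$; both are then weakened to $\|\bm{M}\|_{\Fr}^2$, which matches the (slightly loose) form of the stated bound, and the deviation level $u=T\eta\sigma_x^2\lambda_{\max}(\underline{\bm{\Sigma}}_x)\|\bm{M}\|_{\Fr}^2$ yields the exponent $c_{\HW}T\min(\eta,\eta^2)$ exactly as claimed. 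The truncation of the infinite-order filter is the right fix for the infinite-dimensionality of $\underline{\bm{\epsilon}}$; since $\sum_j\|\bm{\Phi}_j^x\|_{\Fr}<\infty$ forces $\underline{\bm{\Sigma}}_x^{(J)}\to\underline{\bm{\Sigma}}_x$ in operator norm and the truncated quadratic forms converge in $L^1$, the tail bound passes to the limit along closed events. The TAR$(L)$ specialization via $\bm{\epsilon}_t=\bm{\xi}_t$ is consistent with the paper's Lemma \ref{lemma:eigenAR}. In short, you have supplied the proof the paper delegates to a reference, and it is sound.
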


\begin{proof}[Proof of Lemma \ref{lemma:hansonw}]
This lemma follows immediately from Lemma S16 in the supplementary material of \cite{zheng2025}.
\end{proof}

\begin{lemma}[Martingale concentration bound]\label{lemma:martgl} 
Suppose that Assumptions \ref{assum:error} and \ref{assum:stationary}(ii)  hold for $\{\bm{\varepsilon}_t\}$ and $\{\bm{x}_t\}$. 	Then, for any $\bm{M}\in\mathbb{R}^{K\times q_x}$ and any $a,b>0$,  where $K$ is a positive integer, it holds
\begin{equation*}
\mathbb{P}\left \{ \sum_{t=1}^{T}\langle\bm{M}\bm{x}_t, \bm{\varepsilon}_t \rangle\geq a, \; \sum_{t=1}^{T}\lVert \bm{M}\bm{x}_t \rVert_2^2 \leq b\right \} \leq  \exp\left \{-\frac{a^2}{2\sigma^2 \lambda_{\max}(\bm{\Sigma}_{\varepsilon})b}\right \}.
\end{equation*}
\end{lemma}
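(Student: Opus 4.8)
The plan is to read $\sum_{t=1}^{T}\langle\bm{M}\bm{x}_t,\bm{\varepsilon}_t\rangle$ as a sum of martingale differences and to control it by a self-normalized exponential-supermartingale argument of Freedman type. The appealing feature of this route is that the random quadratic bound $\sum_t\|\bm{M}\bm{x}_t\|_2^2\le b$ appearing inside the probability can be absorbed into the supermartingale itself, so no discretization (peeling) over the value of $b$ is needed. First I would fix the filtration $\{\cm{F}_t\}$ of Assumption \ref{assum:error}(ii), write $\bm{\varepsilon}_t=\bm{\Sigma}_{\varepsilon}^{1/2}\bm{\xi}_t$, and set $D_t=\langle\bm{M}\bm{x}_t,\bm{\varepsilon}_t\rangle=(\bm{M}\bm{x}_t)^\top\bm{\Sigma}_{\varepsilon}^{1/2}\bm{\xi}_t$. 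Since $\bm{x}_t$ is $\cm{F}_t$-measurable while $\bm{\xi}_t$ is independent of $\cm{F}_t$ with mean zero, we obtain $\mathbb{E}[D_t\mid\cm{F}_t]=0$; as $\{\cm{F}_t\}$ is increasing and $\bm{x}_s,\bm{\xi}_s$ for $s<t$ are $\cm{F}_t$-measurable, $\{D_t\}$ is a martingale difference sequence.

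The key estimate is a conditional moment generating function bound. Conditionally on $\cm{F}_t$ the vector $\bm{v}_t:=\bm{M}\bm{x}_t$ is fixed, and $D_t=(\bm{\Sigma}_{\varepsilon}^{1/2}\bm{v}_t)^\top\bm{\xi}_t$ is a linear combination of the mutually independent $\sigma^2$-sub-Gaussian coordinates of $\bm{\xi}_t$ from Assumption \ref{assum:error}(i). Hence, for every $\lambda>0$,
\[
\mathbb{E}\left[e^{\lambda D_t}\mid\cm{F}_t\right]\le\exp\left(\frac{\lambda^2\sigma^2}{2}\|\bm{\Sigma}_{\varepsilon}^{1/2}\bm{v}_t\|_2^2\right)\le\exp\left(\frac{\lambda^2\sigma^2\lambda_{\max}(\bm{\Sigma}_{\varepsilon})}{2}\|\bm{M}\bm{x}_t\|_2^2\right),
\]
where the last step uses $\bm{v}_t^\top\bm{\Sigma}_{\varepsilon}\bm{v}_t\le\lambda_{\max}(\bm{\Sigma}_{\varepsilon})\|\bm{v}_t\|_2^2$.

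Next, writing $c:=\sigma^2\lambda_{\max}(\bm{\Sigma}_{\varepsilon})$, $S_t=\sum_{s\le t}D_s$, and $V_t=\sum_{s\le t}\|\bm{M}\bm{x}_s\|_2^2$, I would define $Z_t=\exp\!\left(\lambda S_t-\tfrac{\lambda^2 c}{2}V_t\right)$ and verify $\mathbb{E}[Z_t\mid\cm{F}_t]\le Z_{t-1}$ by factoring out the $\cm{F}_t$-measurable quantities $Z_{t-1}$ and $\|\bm{M}\bm{x}_t\|_2^2$ and applying the conditional bound above; this yields $\mathbb{E}[Z_T]\le Z_0=1$. On the event $\{S_T\ge a,\ V_T\le b\}$ one has $Z_T\ge\exp\!\left(\lambda a-\tfrac{\lambda^2 c}{2}b\right)$, so Markov's inequality gives $\mathbb{P}\{S_T\ge a,\ V_T\le b\}\le\exp\!\left(-\lambda a+\tfrac{\lambda^2 c}{2}b\right)$, and the choice $\lambda=a/(cb)$ produces the claimed bound $\exp\!\left(-a^2/(2cb)\right)=\exp\{-a^2/(2\sigma^2\lambda_{\max}(\bm{\Sigma}_{\varepsilon})b)\}$.

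The main obstacle is largely careful bookkeeping rather than a deep estimate: one must confirm that $Z_{t-1}$ and $\|\bm{M}\bm{x}_t\|_2^2$ are both $\cm{F}_t$-measurable while the randomness of $\bm{\xi}_t$ entering $D_t$ stays independent of $\cm{F}_t$. This is exactly what Assumption \ref{assum:error}(ii) guarantees, and it is what legitimizes the supermartingale step. The only conceptual subtlety is that the random normalizer $V_T$ sits inside the probability; folding it into the definition of $Z_t$ is precisely the device that lets the argument close in one pass and deliver the self-normalized tail bound.
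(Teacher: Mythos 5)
Your proof is correct. Note that the paper itself does not prove this lemma: its entire ``proof'' is a citation of Lemma S.9 in an external reference, so there is no internal argument to compare against. Your self-contained derivation is the standard one that such a cited lemma would rest on: the martingale-difference structure of $D_t=\langle\bm{M}\bm{x}_t,\bm{\varepsilon}_t\rangle$ follows from Assumption 2(ii) exactly as you say; the conditional moment generating function bound correctly uses the mutual independence and $\sigma^2$-sub-Gaussianity of the coordinates of $\bm{\xi}_t$ together with $\bm{v}_t^\top\bm{\Sigma}_\varepsilon\bm{v}_t\le\lambda_{\max}(\bm{\Sigma}_\varepsilon)\|\bm{v}_t\|_2^2$; the supermartingale $Z_t=\exp(\lambda S_t-\tfrac{\lambda^2c}{2}V_t)$ satisfies $\mathbb{E}[Z_t\mid\cm{F}_t]\le Z_{t-1}$ because $Z_{t-1}$ and $\|\bm{M}\bm{x}_t\|_2^2$ are both $\cm{F}_t$-measurable while $\bm{\xi}_t$ is independent of $\cm{F}_t$; and the choice $\lambda=a/(cb)$ after Markov's inequality yields exactly the stated constant $2$ in the exponent. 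The self-normalized device of folding $V_T$ into the supermartingale, rather than peeling over the value of the quadratic variation, is precisely what makes the bound hold jointly with the event $\{V_T\le b\}$ in a single pass, and is the right way to obtain this inequality.
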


\begin{proof}[Proof of Lemma \ref{lemma:martgl}]
This lemma follows  immediately  from  Lemma S.9  in \cite{sarma2024}.
\end{proof}

\begin{lemma}[Covering number and discretization for CP low-rank tensors]\label{lem:4}
For a given rank $R$, let $\bm{\Pi}(R)=\{\cm{T}\in\mathbb{R}^{p_1\times \cdots\times p_N}:\lVert\cm{T}\rVert_\Fr = 1, \mathrm{rank}(\cm{T})\leq R\}$. For any $0<\epsilon<1$, let $\mathcal{N}(R,\epsilon)$ be a minimal $\epsilon$-net of $\bm{\Pi}(R)$ in the Frobenius norm.
\begin{itemize}
\item [(i)] The cardinality of $\mathcal{N}(R,\epsilon)$ satisfies $\log|\mathcal{N}(R,\epsilon)| \leq R(\sum_{d=1}^N p_d+1) \log\{3(N+1)/\epsilon\}$.
\item [(ii)] For any $\cm{T}\in\mathbb{R}^{p_1\times \cdots\times p_N}$, let $\cm{T}_{[m]}$ be the $p_x$-by-$q_y$ multi-mode matricization obtained by collapsing the first $m$ modes of $\cm{T}$ into rows and the last $n$ modes into columns, where $N=m+n$, $p_x=\prod_{d=1}^{m}p_d$, and $q_y=\prod_{d=m+1}^{N}p_d$. For any $\cm{Z}\in \mathbb{R}^{p_1\times \cdots\times p_N}$ and $\bm{X}\in \mathbb{R}^{q_y\times T}$, it holds \[
\sup_{\cmt{T}\in\bm{\Pi}(R)}\langle \cm{T}, \cm{Z} \rangle \leq (1-\sqrt{2}\epsilon)^{-1} \max_{\cmt{T}\in \mathcal{N}(R,\epsilon)} \langle \cm{T}, \cm{Z} \rangle\]
and
\[
\sup_{\cmt{T}\in\bm{\Pi}(R)}\lVert \cm{T}_{[m]} \bm{X}\rVert_\Fr\leq (1-\sqrt{2}\epsilon)^{-1} \max_{\cmt{T}\in \mathcal{N}(R,\epsilon)}\lVert \cm{T}_{[m]} \bm{X}\rVert_\Fr.\]
\end{itemize}
\end{lemma}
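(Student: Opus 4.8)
The plan is to prove (i) by building an explicit net on the CP factors, and (ii) by the standard ``approximate the supremum by a net point'' argument, with the rank-doubling of the residual handled through the matricization. For part (i), I would parametrize any $\cm{T}\in\bm{\Pi}(R)$ by its CP factors, writing $\cm{T}=\sum_{r=1}^R\omega_r\,\bm{\beta}_{r,1}\circ\cdots\circ\bm{\beta}_{r,N}$ with $\bm{\beta}_{r,d}\in\pazocal{S}^{p_d-1}$ and weights $\bm{\omega}=(\omega_r)_{r}$ taken in a bounded set. The engine is the telescoping bound for rank-one terms: for unit vectors,
\[
\Big\lVert \bm{a}_1\circ\cdots\circ\bm{a}_N-\bm{a}_1'\circ\cdots\circ\bm{a}_N'\Big\rVert_\Fr\le\sum_{d=1}^N\lVert\bm{a}_d-\bm{a}_d'\rVert_2 ,
\]
obtained by replacing one factor at a time and using that every partial outer product of unit vectors has unit Frobenius norm. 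Summing over $r$ and separating the weight perturbation from the factor perturbation gives $\lVert\cm{T}-\cm{T}'\rVert_\Fr\le\sum_r|\omega_r-\omega_r'|+\sum_r|\omega_r'|\sum_d\lVert\bm{\beta}_{r,d}-\bm{\beta}_{r,d}'\rVert_2$. I would then take an $\epsilon/(N+1)$-net of each sphere $\pazocal{S}^{p_d-1}$, of cardinality at most $(3(N+1)/\epsilon)^{p_d}$, for every pair $(r,d)$, together with an $\epsilon/(N+1)$-net of the weight set. Allocating the budget $\epsilon$ equally across the $N$ factor blocks and the single weight block, the displayed perturbation bound certifies that the induced product set is an $\epsilon$-net; multiplying the $R\sum_{d}p_d$ sphere coordinates and the $R$ weight coordinates and taking logarithms yields $\log|\mathcal{N}(R,\epsilon)|\le R(\sum_{d=1}^N p_d+1)\log\{3(N+1)/\epsilon\}$.

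For part (ii), both inequalities follow the same template, since the two functionals $\Phi_1(\cm{T})=\langle\cm{T},\cm{Z}\rangle$ and $\Phi_2(\cm{T})=\lVert\cm{T}_{[m]}\bm{X}\rVert_\Fr$ are positively homogeneous and subadditive and depend on $\cm{T}$ only through the matricization $\cm{T}_{[m]}$. Writing $S=\sup_{\cm{T}\in\bm{\Pi}(R)}\Phi(\cm{T})$ and $M=\max_{\cm{T}\in\mathcal{N}(R,\epsilon)}\Phi(\cm{T})$, I would take a near-maximizer $\cm{T}^*\in\bm{\Pi}(R)$ and a net point $\cm{T}_0$ with $\lVert\cm{T}^*-\cm{T}_0\rVert_\Fr\le\epsilon$, and use subadditivity to obtain $\Phi(\cm{T}^*)\le M+\Phi(\cm{T}^*-\cm{T}_0)$. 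The residual $\cm{D}=\cm{T}^*-\cm{T}_0$ is the crux: because CP rank at most $R$ forces the matricization to have matrix rank at most $R$, $\cm{D}_{[m]}$ has matrix rank at most $2R$, so a matrix SVD splits it orthogonally as $\cm{D}_{[m]}=\bm{M}_1+\bm{M}_2$ with each $\bm{M}_i$ of rank at most $R$ and $\lVert\bm{M}_1\rVert_\Fr^2+\lVert\bm{M}_2\rVert_\Fr^2=\lVert\cm{D}\rVert_\Fr^2\le\epsilon^2$. Bounding $\Phi$ on each reshaped piece by $\lVert\bm{M}_i\rVert_\Fr\,S$ and applying $\lVert\bm{M}_1\rVert_\Fr+\lVert\bm{M}_2\rVert_\Fr\le\sqrt{2}\,(\lVert\bm{M}_1\rVert_\Fr^2+\lVert\bm{M}_2\rVert_\Fr^2)^{1/2}\le\sqrt{2}\,\epsilon$ gives $\Phi(\cm{D})\le\sqrt{2}\,\epsilon\,S$. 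Hence $S\le M+\sqrt{2}\,\epsilon\,S$, which rearranges to $S\le(1-\sqrt{2}\epsilon)^{-1}M$, yielding both displayed inequalities at once.

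I expect two points to require the most care. In part (i), one must ensure the weight vector can be taken in a fixed bounded set, so that the weight-net has controlled cardinality and $|\omega_r'|$ stays bounded in the perturbation estimate; this is delicate because a unit-norm tensor may admit CP representations with large, nearly cancelling weights, and I would address it by fixing a balanced normalization of the factors (absorbing $\omega_r^{1/N}$ into each mode) so that all coordinates live in a bounded ball. In part (ii), the genuine obstacle is the rank-doubling of $\cm{D}$ together with the need for an \emph{orthogonal} split into admissible pieces that $S$ can control; this is exactly why the statement and the functionals are phrased through $\cm{T}_{[m]}$, as matricizing before splitting is what makes the SVD-based orthogonal decomposition available while preserving the rank-$\le R$ admissibility at the matrix level, so the bound closes with the factor $\sqrt{2}$ rather than a cruder constant.
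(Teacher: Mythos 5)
Your part~(i) is essentially the paper's own argument: a product of $\epsilon/(N+1)$-nets on the factor spheres and on the weight vector, combined with a one-factor-at-a-time telescoping bound, yielding the stated cardinality. Your worry about whether the weight vector of a unit-Frobenius-norm CP tensor can be confined to a bounded set is a real subtlety (the paper simply takes $\bm{\omega}\in\pazocal{S}^{R-1}$ without comment), but it does not change the route.

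Part~(ii), however, contains a genuine gap. You split the residual $\cm{D}=\cm{T}^*-\cm{T}_0$ by taking the SVD of the matricization and writing $\cm{D}_{[m]}=\bm{M}_1+\bm{M}_2$ with each $\bm{M}_i$ of matrix rank at most $R$, and then bound $\Phi(\bm{M}_i)\leq\lVert\bm{M}_i\rVert_\Fr\, S$ with $S=\sup_{\cmt{T}\in\bm{\Pi}(R)}\Phi(\cm{T})$. That last step requires $\bm{M}_i/\lVert\bm{M}_i\rVert_\Fr$, reshaped back into a tensor, to lie in $\bm{\Pi}(R)$, i.e.\ to have CP rank at most $R$. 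This fails for $N\geq 3$: while CP rank at most $R$ does force the matricization to have matrix rank at most $R$, the converse is false --- the singular vectors produced by the SVD of $\cm{D}_{[m]}$ are arbitrary vectors in $\mathbb{R}^{p_x}$ and $\mathbb{R}^{q_y}$ with no Kronecker structure, so the reshaped $\bm{M}_i$ is generically not a scalar multiple of any element of $\bm{\Pi}(R)$, and $S$ gives no control over $\Phi(\bm{M}_i)$. The self-bounding inequality $S\leq M+\sqrt{2}\epsilon S$ therefore does not close. The paper avoids this by splitting at the level of the CP decomposition itself: $\cm{T}^*-\cm{T}_0$ is a sum of at most $2R$ rank-one tensors, which is partitioned into two groups of at most $R$ rank-one terms each, so each piece normalizes into $\bm{\Pi}(R)$ by construction. (That route pays for admissibility with the orthogonality claim $\lVert\cm{T}_1\rVert_\Fr^2+\lVert\cm{T}_2\rVert_\Fr^2=\lVert\cm{T}_1+\cm{T}_2\rVert_\Fr^2$, which is itself not automatic for a CP split; but the essential point for your proposal is that the two pieces must remain inside $\bm{\Pi}(R)$, and an SVD-based split does not keep them there.)
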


\begin{proof}[Proof of Lemma \ref{lem:4}]
Claim (i) can be proved by a method similar to that for Lemma 6 in \cite{raskutti2011minimax}. Specifically,  any $\cm{T}\in \bm{\Pi}(R)$ has the CP decomposition  $\cm{T}=\sum_{r=1}^R\omega_r\boldsymbol{\beta}_{r,1}\circ\cdots\circ\boldsymbol{\beta}_{r,N}$, where $\boldsymbol{\omega}=(\omega_1,\ldots,\omega_R)^\top \in \pazocal{S}^{R-1}$, and $\bm{\beta}_{r,d}\in \pazocal{S}^{p_d-1}$ for  $1\leq r\leq R$ and $1\leq d\leq N$. Thus, the problem of covering $ \bm{\Pi}(R)$ can be converted to those of covering $\pazocal{S}^{R-1}$ and $\pazocal{S}^{p_d-1}$'s. Denote by $\bar{\pazocal{S}}^{p-1}(\epsilon_1)$  a minimal $\epsilon_1$-net of $\pazocal{S}^{p-1}$ in the Euclidean norm for any positive integer $p$ and $\epsilon_1>0$. Note that $|\bar{\pazocal{S}}^{p-1}(\epsilon_1)|\leq (3/\epsilon_1)^p$. Let $\bar{\bm{\Pi}}(R,\epsilon_1)=\{\bar{\cm{T}}=\sum_{r=1}^R\bar{\omega}_r\bar{\boldsymbol{\beta}}_{r,1}\circ\cdots\circ\bar{\boldsymbol{\beta}}_{r,N}:\bar{\bm{\omega}}\in\bar{\pazocal{S}}^{R-1}(\epsilon_1), \bar{\boldsymbol{\beta}}_{r,d}\in \bar{\pazocal{S}}^{p_d-1}(\epsilon_1), 1\leq r\leq R, 1\leq d\leq N\}$. As a result, $|\bar{\bm{\Pi}}(R,\epsilon_1)| \leq (3/\epsilon_1)^{R(\sum_{d=1}^N p_d+1)}$.

Moreover, for any $\cm{T}\in\bm{\Pi}(R)$, there exists $\bar{\cm{T}}\in \bar{\bm{\Pi}}(R,\epsilon)$ such that $\lVert\boldsymbol{\omega}-\bar{\boldsymbol{\omega}}\rVert_2^2 \leq \epsilon_1$, and $\lVert\boldsymbol{\beta}_{r,d}-\bar{\boldsymbol{\beta}}_{r,d}\rVert_2^2 \leq \epsilon_1$ for  $1\leq r\leq R$ and $1\leq d\leq N$.  Then we have 
\begin{align*}
\lVert \cm{T}-\bar{\cm{T}}\rVert_F^2\leq & \sum_{r=1}^R\lVert\omega_r\boldsymbol{\beta}_{r,1}\circ\cdots\circ\boldsymbol{\beta}_{r,N}- \bar{\omega}_r\bar{\boldsymbol{\beta}}_{r,1}\circ\cdots\circ\bar{\boldsymbol{\beta}}_{r,N}\rVert_\Fr^2 \\
\leq & \sum_{r=1}^R\Bigg\{\Big\lVert\lvert \omega_r-\bar{\omega}_r\rvert\boldsymbol{\beta}_{r,1}\circ\cdots\circ\boldsymbol{\beta}_{r,N}\Big\rVert_\Fr^2+\Big\lVert\bar{\omega}_r\lVert\boldsymbol{\beta}_{r,1}-\bar{\boldsymbol{\beta}}_{r,1}\rVert_2\circ\cdots\circ\boldsymbol{\beta}_{r,N}\Big\rVert_\Fr^2+\cdots\\
&\qquad +\Big\lVert\bar{\omega}_r\bar{\boldsymbol{\beta}}_{r,1}\circ\cdots\circ\lVert\boldsymbol{\beta}_{r,N}-\bar{\boldsymbol{\beta}}_{r,N}\rVert_2\Big\rVert_\Fr^2 \Bigg\}\\
\leq & \epsilon_1+  N \epsilon_1\sum_{r=1}^R \bar{\omega}_r^2 = \epsilon,
\end{align*}
by setting $\epsilon_1=\epsilon/(N+1)$. This shows that $\bar{\bm{\Pi}}(R,\epsilon_1)$ is an $\epsilon$-net of $\bm{\Pi}(R)$, and thus $\log|\mathcal{N}(R,\epsilon)| \leq \log |\bar{\bm{\Pi}}(R,\epsilon_1)| \leq R(\sum_{d=1}^N p_d+1) \log \{3(N+1)/\epsilon\}$. %The proof of (i) is complete. 

Next we prove claim (ii). For any $\cm{T}=\omega_r\boldsymbol{\beta}_{r,1}\circ\cdots\circ\boldsymbol{\beta}_{r,N}\in\bm{\Pi}(R)$,  there exists $\bar{\cm{T}}= \bar{\omega}_r\bar{\boldsymbol{\beta}}_{r,1}\circ\cdots\circ\bar{\boldsymbol{\beta}}_{r,N}\in \bar{\bm{\Pi}}(R,\epsilon)$ such that $\lVert\cm{T}-\bar{\cm{T}}\rVert_\Fr^2 \leq \epsilon$. Moreover, we can partition $\cm{T}-\bar{\cm{T}}=
\cm{T}_1+\cm{T}_2$ such that  $\cm{T}_i/\|\cm{T}_i\|_\Fr\in \bm{\Pi}(R)$ for $i=1,2$. As a result, $\|\cm{T}-\bar{\cm{T}}\|_\Fr^2=
\|\cm{T}_1\|_\Fr^2+\|\cm{T}_2\|_\Fr^2$. Then, by the Cauchy-Schwarz inequality, $\|\cm{T}_1\|_\Fr+\|\cm{T}_2\|_\Fr \leq \sqrt{2}\|\cm{T}-\bar{\cm{T}}\|_\Fr\leq \sqrt{2}\epsilon$. Since 
\[
\langle \cm{T}, \cm{Z} \rangle  = \langle \bar{\cm{T}}, \cm{Z} \rangle + \sum_{i=1}^{2}\|\cm{T}_i\|_\Fr  \langle \cm{T}_i/\|\cm{T}_i\|_\Fr, \cm{Z} \rangle,
\]
we have 
\[
\gamma:=\sup_{\cmt{T}\in\bm{\Pi}(R)}\langle \cm{T}, \cm{Z} \rangle \leq \max_{\cmt{T}\in \mathcal{N}(R,\epsilon)} \langle \cm{T}, \cm{Z} \rangle + \sqrt{2}\epsilon \gamma.
\]
Rearranging the terms, we obtain the first result in claim (ii). The second result can be readily proved along the same lines. The proof of this lemma is complete.
\end{proof}

\begin{lemma}[Covering number and discretization for $\ell_0$-balls]\label{lemma:cover1}
For a given sparsity level $K$, let $\cm{K}(K)=\{\bm{\Delta}\in\mathbb{R}^{p_1\times \cdots\times p_m\times q_1\times\cdots\times q_n}:\lVert\bm{\Delta}\rVert_\Fr = 1, \lVert\vect(\bm{\Delta})\rVert_0\leq K\}$. For any $0<\epsilon<1$, let $\pazocal{N}(K,\epsilon)$ be a minimal $\epsilon$-net of $\cm{K}(K)$ in the Frobenius norm.
\begin{itemize}
\item [(i)] The cardinality of $\pazocal{N}(K,\epsilon)$ satisfies $\log|\pazocal{N}(K,\epsilon)| \leq {p_yq_x\choose K} + K\log(1/\epsilon)$.
\item [(ii)] For any $\cm{Z}\in \mathbb{R}^{p_1\times \cdots\times p_m\times q_1\times\cdots\times q_n}$, it holds \[
\sup_{\bm{\Delta}\in\cmt{K}(K)}\langle \bm{\Delta}, \cm{Z} \rangle \leq (1-\sqrt{2}\epsilon)^{-1} \max_{\bm{\Delta}\in \pazocal{N}(K,\epsilon)} \langle \bm{\Delta}, \cm{Z} \rangle.\]
\end{itemize}
\end{lemma}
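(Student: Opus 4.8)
The plan is to treat this as a standard covering-number computation over the union of low-dimensional spheres indexed by sparsity patterns, followed by a discretization argument that parallels the one already carried out for CP low-rank tensors in Lemma \ref{lem:4}. Throughout I would identify each tensor $\bm{\Delta}$ with its vectorization $\vect(\bm{\Delta})\in\mathbb{R}^{p_yq_x}$, so that $\cm{K}(K)$ is exactly the set of unit-Frobenius-norm vectors in $\mathbb{R}^{p_yq_x}$ with at most $K$ nonzero coordinates.

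For claim (i), I would first fix a support set $S\subseteq\{1,\dots,p_yq_x\}$ with $|S|=K$. The vectors in $\cm{K}(K)$ supported on $S$ form a copy of the unit sphere $\pazocal{S}^{K-1}$ in $\mathbb{R}^K$, which by the standard bound $|\bar{\pazocal{S}}^{K-1}(\epsilon)|\leq(3/\epsilon)^K$ (already invoked in the proof of Lemma \ref{lem:4}) admits an $\epsilon$-net of cardinality at most $(3/\epsilon)^K$. Taking the union of these nets over all ${p_yq_x\choose K}$ choices of $S$ gives an $\epsilon$-net of $\cm{K}(K)$, and hence
\[
\log|\pazocal{N}(K,\epsilon)|\leq\log{p_yq_x\choose K}+K\log(3/\epsilon),
\]
which is the claimed bound.

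For claim (ii), I would set $\gamma:=\sup_{\bm{\Delta}\in\cm{K}(K)}\langle\bm{\Delta},\cm{Z}\rangle$ and fix any $\bm{\Delta}\in\cm{K}(K)$, choosing $\bar{\bm{\Delta}}\in\pazocal{N}(K,\epsilon)$ with $\|\bm{\Delta}-\bar{\bm{\Delta}}\|_\Fr\leq\epsilon$. The key structural point is that both $\bm{\Delta}$ and $\bar{\bm{\Delta}}$ are $K$-sparse, so their difference has at most $2K$ nonzero coordinates; splitting this support into two disjoint blocks of size at most $K$ yields a decomposition $\bm{\Delta}-\bar{\bm{\Delta}}=\bm{\Delta}_1+\bm{\Delta}_2$ with $\bm{\Delta}_i/\|\bm{\Delta}_i\|_\Fr\in\cm{K}(K)$ and orthogonal supports, whence $\|\bm{\Delta}_1\|_\Fr^2+\|\bm{\Delta}_2\|_\Fr^2=\|\bm{\Delta}-\bar{\bm{\Delta}}\|_\Fr^2\leq\epsilon^2$ and, by Cauchy--Schwarz, $\|\bm{\Delta}_1\|_\Fr+\|\bm{\Delta}_2\|_\Fr\leq\sqrt{2}\epsilon$. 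The chain of inequalities I would then write is
\[
\langle\bm{\Delta},\cm{Z}\rangle=\langle\bar{\bm{\Delta}},\cm{Z}\rangle+\sum_{i=1}^{2}\|\bm{\Delta}_i\|_\Fr\,\langle\bm{\Delta}_i/\|\bm{\Delta}_i\|_\Fr,\cm{Z}\rangle\leq\max_{\bm{\Delta}\in\pazocal{N}(K,\epsilon)}\langle\bm{\Delta},\cm{Z}\rangle+\sqrt{2}\epsilon\,\gamma,
\]
and taking the supremum over $\bm{\Delta}\in\cm{K}(K)$ and rearranging gives $\gamma\leq(1-\sqrt{2}\epsilon)^{-1}\max_{\bm{\Delta}\in\pazocal{N}(K,\epsilon)}\langle\bm{\Delta},\cm{Z}\rangle$, as required.

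The argument is essentially routine, and the only point I expect to need genuine care is the support-splitting step in (ii): I must verify that the two pieces $\bm{\Delta}_1,\bm{\Delta}_2$ lie, after normalization, in $\cm{K}(K)$ itself rather than in a larger $\ell_0$-ball, since this is precisely what lets me bound each normalized inner product by $\gamma$. This is in fact cleaner here than in the CP low-rank case of Lemma \ref{lem:4}, because the difference of two $K$-sparse vectors is automatically $2K$-sparse and splits into orthogonally-supported halves, so the Pythagorean identity $\|\bm{\Delta}_1\|_\Fr^2+\|\bm{\Delta}_2\|_\Fr^2=\|\bm{\Delta}-\bar{\bm{\Delta}}\|_\Fr^2$ holds exactly and no rank-additivity subtlety arises.
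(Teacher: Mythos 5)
Your proof of part (ii) is exactly the paper's argument: the same splitting of the $2K$-sparse difference into two orthogonally supported $K$-sparse pieces, the Pythagorean identity, Cauchy--Schwarz giving $\sqrt{2}\epsilon$, and the rearrangement of $\gamma\leq \max+\sqrt{2}\epsilon\gamma$. For part (i) the paper simply cites Lemma 6 of \cite{raskutti2011minimax}, and your union-over-supports derivation is precisely the standard argument behind that citation; note only that it yields $\log{p_yq_x\choose K}+K\log(3/\epsilon)$, which is the form the paper actually uses downstream (the displayed bound in the lemma, with $K\log(1/\epsilon)$ and without the logarithm on the binomial coefficient, appears to be a typo rather than a discrepancy in your argument).
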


\begin{proof}[Proof of Lemma \ref{lemma:cover1}]
Claim (i) is implied directly from the proof of Lemma 6 in \cite{raskutti2011minimax}. We prove claim (ii) as follows. For any $\bm{\Delta}\in\cm{K}(2K)$, there exists $\bm{\Delta}^\prime\in\pazocal{N}(2K,\epsilon)\subset \cm{K}(2K)$ such that $\|\bm{\Delta}^\prime-\bm{\Delta}\|_\Fr \leq \epsilon$. Note that $\|\vect(\bm{\Delta}^\prime-\bm{\Delta})\|_0\leq 2K$. Thus, we can partition $\bm{\Delta}^\prime-\bm{\Delta}=
\bm{\Delta}_1+\bm{\Delta}_2$ such that   $\vect(\bm{\Delta}_i)$ for $i=1,2$ have non-overlapping supports with $\|\vect(\bm{\Delta}_i)\|_0\leq K$. As a result, $\|\bm{\Delta}^\prime-\bm{\Delta}\|_\Fr^2=
\|\bm{\Delta}_1\|_\Fr^2+\|\bm{\Delta}_2\|_\Fr^2$. Then, by the Cauchy-Schwarz inequality, $\|\bm{\Delta}_1\|_\Fr+\|\bm{\Delta}_2\|_\Fr \leq \sqrt{2}\|\bm{\Delta}^\prime-\bm{\Delta}\|_\Fr\leq \sqrt{2}\epsilon$. Since 
\[
\langle \bm{\Delta}, \cm{Z} \rangle  = \langle \bm{\Delta}^\prime, \cm{Z} \rangle + \sum_{i=1}^{2}\|\bm{\Delta}_i\|_\Fr  \langle \bm{\Delta}_i/\|\bm{\Delta}_i\|_\Fr, \cm{Z} \rangle,
\]
we have 
\[
\gamma:=\sup_{\bm{\Delta}\in\cmt{K}(K)}\langle \bm{\Delta}, \cm{Z} \rangle \leq \max_{\bm{\Delta}\in \pazocal{N}(K,\epsilon)} \langle \bm{\Delta}, \cm{Z} \rangle + \sqrt{2}\epsilon \gamma.
\]
Rearranging the terms, the proof of claim (ii) is complete.
\end{proof}

\subsection{Proofs of Theorems \ref{thm:nonsparsetr} and \ref{thm:sparsetr}}

\begin{proof}[Proof of Theorem \ref{thm:nonsparsetr}]
Let $\bm{\Pi}(R)=\{\cm{B}\in\mathbb{R}^{p_1\times \cdots  p_N}:\lVert\cm{B}\rVert_\Fr\leq  1, \mathrm{rank}(\cm{B})\leq R\}$, where $p_{m+d}:=q_d$ for $1\leq d\leq n$, and $N=m+n$. Denote $\bm{\Delta}=\cm{B}-\cm{B}^*$ and $\widetilde{\bm{\Delta}}=\widetilde{\cm{B}}-\cm{B}^*$. Then $\bm{\Delta}/\|\bm{\Delta}\|_\Fr, \widetilde{\bm{\Delta}}/\|\widetilde{\bm{\Delta}}\|_\Fr \in \bm{\Pi}(2R)$. Since the loss function is minimized at $\widetilde{\cm{B}}$ over the parameter space $\bm{\Gamma}(R, \bm{s})$, we have $	\sum_{t=1}^T\lVert\cm{Y}_{t}-\langle\widetilde{\cm{B}}, \cm{X}_{t}\rangle\rVert_\Fr^2\leq \sum_{t=1}^T\lVert\cm{Y}_{t}-\langle \cm{B}^*, \cm{X}_{t}\rangle\rVert_\Fr^2$, which implies
\begin{equation}\label{eq:optim1}
\frac{1}{T}\sum_{t=1}^T\lVert\langle\widetilde{\bm{\Delta}}, \cm{X}_{t}\rangle\rVert_\Fr^2\leq \frac{2}{T}\sum_{t=1}^T\big\langle \widetilde{\bm{\Delta}}, \cm{E}_{t}\circ\cm{X}_{t}\big\rangle.
\end{equation} 

First consider the left side of \eqref{eq:optim1}. Let $\bm{\Delta}_{[m]}$ be the multi-mode matricization of $\bm{\Delta}$ obtained by collapsing its first $m$ modes into rows, and then further define the vectorization $\bm{\delta}_{[m]}=\vect(\bm{\Delta}_{[m]})$.
Similarly we can define $\widetilde{\bm{\Delta}}_{[m]}$ and $\widetilde{\bm{\delta}}_{[m]}=\vect(\widetilde{\bm{\Delta}}_{[m]})$. Note that
\begin{equation}\label{eq:optimleft1}
\sum_{t=1}^T\lVert\langle \bm{\Delta}, \cm{X}_{t}\rangle\rVert_\Fr^2 = \sum_{t=1}^T\lVert \bm{\Delta}_{[m]} \bm{x}_{t}\rVert_2^2 = \lVert \bm{\Delta}_{[m]} \bm{X}\rVert_\Fr^2.
\end{equation}
where $\bm{X}=(\bm{x_1},\dots, \bm{x}_T)$. Denote by $\sigma_x^2$ the sub-Gaussian parameter of  $\epsilon_{it}$ in Assumption \ref{assum:stationary}(ii).
Applying Lemma  \ref{lemma:hansonw} with  $\bm{M}=\bm{\Delta}_{[m]}$ and $\eta= \kappa_1/(2\sigma_x^2\kappa_2)$, by Lemmas \ref{lemma:eigen} and \ref{lemma:hansonw}, we have the pointwise bound for any $\bm{\Delta}\in \bm{\Pi}(2R)$:
\begin{equation}\label{eq:hw1}
\mathbb{P}\left \{\left |\frac{1}{T}\sum_{t=1}^{T}\|\bm{\Delta}_{[m]}\bm{x}_t\|_2^2 - \mathbb{E}\left (\|\bm{\Delta}_{[m]}\bm{x}_t\|_2^2\right )\right | \geq \kappa_1/2 \right \} \leq 2e^{-c_1 \kappa_1^2 T/\kappa_2^2},
\end{equation}
where $c_1=c_{\HW}\min\{1/(2\sigma_x^2), 1/(4\sigma_x^4)\}$. By Lemma \ref{lemma:eigen}, if $\lVert\bm{\Delta}_{[m]}\rVert_\Fr = 1$, then $\kappa_2\geq \lambda_{\max}(\bm{\Sigma}_x)\geq\mathbb{E} (\|\bm{\Delta}_{[m]}\bm{x}_t\|_2^2 ) = \mathbb{E} (\|(\bm{x}_t^\top \otimes \bm{I}_{p_y})\bm{\delta}_{[m]}\|_2^2 ) = \bm{\delta}_{[m]}^\top (\bm{\Sigma}_x\otimes \bm{I}_{p_y}) \bm{\delta}_{[m]} \geq \lambda_{\min}(\bm{\Sigma}_x) \geq \kappa_1$. This, together with \eqref{eq:optimleft1}, further leads to the following pointwise bound for any $\bm{\Delta}\in \bm{\Pi}(2R)$:
\begin{equation}\label{eq:point1}
\mathbb{P}\left \{  \kappa_1/2 \leq \frac{1}{T}	\lVert \bm{\Delta}_{[m]} \bm{X}\rVert_\Fr^2 \leq 3\kappa_2/2 \right \} \leq 2e^{-c_1 \kappa_1^2 T/\kappa_2^2}.
\end{equation}
Next we strengthen this result to union bounds that hold for all $\bm{\Delta}\in \bm{\Pi}(2R)$. Let $\mathcal{N}(2R, \epsilon)$ be a minimal $\epsilon$-net of $\bm{\Pi}(2R)$. Define the event
\[
\mathcal{E}(\epsilon) = \left \{ \forall \bm{\Delta}\in \mathcal{N}(2R, \epsilon):  \sqrt{ \kappa_1/2 } \leq \frac{1}{\sqrt{T}}	\lVert \bm{\Delta}_{[m]} \bm{X}\rVert_\Fr \leq \sqrt{3\kappa_2/2} \right \}.
\]
On the one hand, by Lemma  \ref{lem:4}(ii), it holds 
\begin{equation}\label{eq:side1}
\mathcal{E}(\epsilon) \subset   \left \{ \max_{\bm{\Delta}\in \mathcal{N}(2R, \epsilon)}    \frac{1}{\sqrt{T}}\lVert \bm{\Delta}_{[m]} \bm{X}\rVert_\Fr \leq \sqrt{3\kappa_2/2} \right \}
\subset \left \{ \sup_{\bm{\Delta}\in \bm{\Pi}(2R)}    \frac{1}{\sqrt{T}} \lVert \bm{\Delta}_{[m]} \bm{X}\rVert_\Fr \leq \frac{\sqrt{3\kappa_2/2}}{1-\sqrt{2}\epsilon} \right \}.
\end{equation}
On the other hand, for any $\bm{\Delta}\in \bm{\Pi}(2R)$ and its corresponding $\bar{\bm{\Delta}}\in \mathcal{N}(2R, \epsilon)$, by arguments similar to those for the proof of Lemma \ref{lem:4}(ii), we can show that 
\begin{align*}
\frac{1}{\sqrt{T}}\lVert \bm{\Delta}_{[m]} \bm{X}\rVert_\Fr &\geq \frac{1}{\sqrt{T}}	\lVert \bar{\bm{\Delta}}_{[m]} \bm{X}\rVert_\Fr - \frac{1}{\sqrt{T}}	\lVert (\bar{\bm{\Delta}}_{[m]} - \bm{\Delta}_{[m]} )\bm{X} \rVert_\Fr \\
& \geq \min_{\bar{\bm{\Delta}}\in \mathcal{N}(2R, \epsilon)}  \frac{1}{\sqrt{T}}\lVert \bar{\bm{\Delta}}_{[m]} \bm{X}\rVert_\Fr - \frac{1}{\sqrt{T}} \sum_{i=1}^{2}	\lVert (\bm{\Delta}_i)_{[m]} \bm{X}\rVert_\Fr\\
& \geq \min_{\bar{\bm{\Delta}}\in \mathcal{N}(2R, \epsilon)}  \frac{1}{\sqrt{T}}	\lVert \bar{\bm{\Delta}}_{[m]} \bm{X}\rVert_\Fr - \sqrt{2}\epsilon \sup_{\bm{\Delta}\in \bm{\Pi}(2R)}\frac{1}{\sqrt{T}}\lVert \bm{\Delta}_{[m]} \bm{X}\rVert_\Fr.
\end{align*}
Taking the infimum over $\bm{\Pi}(2R)$ and in view of \eqref{eq:side1}, we can show that on the event $\mathcal{E}(\epsilon)$,
\[
\inf_{\bm{\Delta}\in \bm{\Pi}(2R)} \frac{1}{\sqrt{T}} \lVert \bm{\Delta}_{[m]} \bm{X}\rVert_\Fr \geq \sqrt{\kappa_1/2} - \sqrt{2}\epsilon \cdot  \frac{\sqrt{3\kappa_2/2}}{1-\sqrt{2}\epsilon} \geq \sqrt{\kappa_1/8},
\]
if we choose $\epsilon=(5\sqrt{2})^{-1}$. Thus,
\begin{equation}\label{eq:side2}
\mathcal{E}(\epsilon) \subset   \left \{ \inf_{\bm{\Delta}\in \bm{\Pi}(2R)}    \frac{1}{\sqrt{T}}	 \lVert \bm{\Delta}_{[m]} \bm{X}\rVert_\Fr \geq  \sqrt{\kappa_1/8} \right \}.
\end{equation}
Moreover, by \eqref{eq:point1} and the covering number in Lemma \ref{lem:4}(i), we have 
\begin{align}\label{eq:cover3}
\mathbb{P}(\mathcal{E}^\complement(\epsilon)) &\leq e^{2R(\sum_{d=1}^N p_d+1) \log\{3(N+1)/\epsilon\}} \max_{\bm{\Delta}\in \mathcal{N}(2R, \epsilon)}	\mathbb{P}\left \{  \kappa_1/2 \leq \frac{1}{T}	 \lVert \bm{\Delta}_{[m]} \bm{X}\rVert_\Fr^2 \leq 3\kappa_2/2 \right \} \notag\\
&\leq 2 e^{-2R(\sum_{d=1}^N p_d+1) \log\{3(N+1)/\epsilon\}},
\end{align}
given that  $T\geq 4 c_1^{-1} (\kappa_2/\kappa_1)^2 R(\sum_{d=1}^N p_d+1) \log\{3(N+1)/\epsilon\}$. Combining \eqref{eq:side1}--\eqref{eq:cover3} and the above choice of  $\epsilon=(5\sqrt{2})^{-1}$, we have
\begin{equation}\label{eq:union1}
\sqrt{\kappa_1/8} \leq \inf_{\bm{\Delta}\in \bm{\Pi}(2R)}    \frac{1}{\sqrt{T}}	 \lVert \bm{\Delta}_{[m]} \bm{X}\rVert_\Fr \leq \sup_{\bm{\Delta}\in \bm{\Pi}(2R)}    \frac{1}{\sqrt{T}}	 \lVert \bm{\Delta}_{[m]} \bm{X}\rVert_\Fr \leq  \sqrt{3\kappa_2},
\end{equation}
with probability at least $1-2 e^{-2R(\sum_{d=1}^N p_d+1) \log\{15(N+1)\}}$. In light of \eqref{eq:optimleft1}, this implies 
\begin{equation}\label{eq:lower1}
\mathbb{P}(\mathcal{E}_1^\complement)\leq  2 e^{-2R(\sum_{d=1}^N p_d+1) \log\{15(N+1)\}},
\end{equation}
where $\mathcal{E}_1^\complement$ is the complement of the event
\[
\mathcal{E}_1 = \left \{ \frac{1}{T}\sum_{t=1}^T\lVert\langle\widetilde{\bm{\Delta}}, \cm{X}_{t}\rangle\rVert_\Fr^2 \geq \frac{\kappa_1}{8}  \|\widetilde{\bm{\Delta}}\|_\Fr^2 \right \}.
\]

Next we consider the right side of \eqref{eq:optim1}. By Lemma \ref{lem:4}(ii), we have
\begin{align}\label{eq:cover4}
\frac{2}{T}\sum_{t=1}^T\big\langle \widetilde{\bm{\Delta}}, \cm{E}_{t}\circ\cm{X}_{t}\big\rangle 
&\leq \frac{2}{T}\|\widetilde{\bm{\Delta}}\|_\Fr \sup_{\bm{\Delta}\in \bm{\Pi}(2R)} \big\langle \widetilde{\bm{\Delta}}, \sum_{t=1}^T\cm{E}_{t}\circ\cm{X}_{t}\big\rangle \notag\\
&\leq \frac{8}{T}\|\widetilde{\bm{\Delta}}\|_\Fr
\max_{\bm{\Delta}\in \mathcal{N}(2R,1/2)} \langle \bm{\Delta}, \sum_{t=1}^T\cm{E}_{t}\circ\cm{X}_{t} \rangle,
\end{align}
where $\mathcal{N}(2R,1/2)\subset \bm{\Pi}(2R)$ is the minimal $1/2$-net of $\bm{\Pi}(2R)$ in the Frobenius norm. By arguments similar to those detailed in the proof of Theorem \ref{thm:sparsetr}, for any $\bm{\Delta}\in \bm{\Pi}(2R)$ and  $a>0$, we can show that
\begin{align*}
\mathbb{P}\left \{ \sum_{t=1}^T \langle \bm{\Delta}_{[m]}\bm{x}_{t}, \bm{\varepsilon}_{t} \rangle \geq a \right \}\leq  \exp\left \{-\frac{a^2}{3 \sigma^2 \kappa_2 \lambda_{\max}(\bm{\Sigma}_{\varepsilon})T}\right \}  +  2e^{-c_1  T}.
\end{align*}
As a result,  by setting $a= 4\sqrt{\sigma^2  \kappa_2 \lambda_{\max}(\bm{\Sigma}_{\varepsilon})T R(\sum_{d=1}^N p_d+1) \log\{6(N+1)\} }$, we have
\begin{align*}
\mathbb{P}\left \{ \max_{\bm{\Delta}\in \mathcal{N}(2R,1/2)} \langle \bm{\Delta}, \sum_{t=1}^T\cm{E}_{t}\circ\cm{X}_{t} \rangle \geq a \right \}
&\leq |\mathcal{N}(2R,1/2)|   \max_{\bm{\Delta}\in \mathcal{N}(2R,1/2)} \mathbb{P}\left \{ \langle \bm{\Delta}, \sum_{t=1}^T\cm{E}_{t}\circ\cm{X}_{t} \rangle \geq a \right \}\\
&\leq 3e^{- 2 R(\sum_{d=1}^N p_d+1) \log\{6(N+1)\}},
\end{align*}
provided that $T \geq  4c_1^{-1} R(\sum_{d=1}^N p_d+1) \log\{6(N+1)\}$.
This, together with \eqref{eq:cover4}, implies 
\begin{align}\label{eq:cover5}
\mathbb{P}(\mathcal{E}_2^\complement)\leq  3e^{-2 R(\sum_{d=1}^N p_d+1) \log\{6(N+1)\}},
\end{align}
where $\mathcal{E}_2^\complement$ is the complement of the event
\[
\mathcal{E}_2 = \left \{ \frac{2}{T}\sum_{t=1}^T\big\langle \widetilde{\bm{\Delta}}, \cm{E}_{t}\circ\cm{X}_{t}\big\rangle   \leq  32\|\widetilde{\bm{\Delta}}\|_\Fr \sqrt{\frac{\sigma^2  \kappa_2 \lambda_{\max}(\bm{\Sigma}_{\varepsilon})R(\sum_{d=1}^N p_d+1) \log\{6(N+1)\} }{T} }  \right \}.
\]
In view of  \eqref{eq:optim1}, on the event $\mathcal{E}_1 \cap \mathcal{E}_2$, we have
\[
\|\widetilde{\bm{\Delta}}\|_\Fr \lesssim   \sqrt{\frac{\sigma^2  \kappa_2 \lambda_{\max}(\bm{\Sigma}_{\varepsilon}) R (\sum_{d=1}^N p_d+1) \log\{6(N+1)\}  }{\kappa_1^2 T} },
\]
and consequently,
\[
\frac{1}{T}\sum_{t=1}^T\lVert\langle\widetilde{\bm{\Delta}}, \cm{X}_{t}\rangle\rVert_\Fr^2
\lesssim  \frac{\sigma^2  \kappa_2 \lambda_{\max}(\bm{\Sigma}_{\varepsilon}) R(\sum_{d=1}^N p_d+1) \log\{6(N+1)\} }{\kappa_1 T} 
\] 
By \eqref{eq:lower1} and \eqref{eq:cover5}, we have $\mathbb{P}(\mathcal{E}_1 \cap \mathcal{E}_2) \geq 1-5  e^{-2 R(\sum_{d=1}^N p_d+1) \log\{6(N+1)\}}$. The proof of this theorem is complete.
\end{proof}

\begin{proof}[Proof of Theorem \ref{thm:sparsetr}]
Given  $R$ and $\bm{s}=(s_1,\dots, s_N)\leq (p_1,\dots, p_m, q_1,\dots, q_n)$, where the inequality is elementwise, denote the parameter space of $\cm{B}$ by
\[
\bm{\Gamma}(R, \bm{s}) = \Big\{\cm{B}= \sum_{r=1}^R\omega_r\bm{\beta}_{r,1}\circ\cdots\circ\bm{\beta}_{r,N}: \omega_{r}\in\mathbb{R}, \bm{\beta}_{r,d}\in\pazocal{S}^{p_d-1}, \lVert\bm{\beta}_{r,d}\rVert_0\leq s_d,  r\in[R], d\in[N] \Big\},
\]
where $p_{m+d}:=q_d$ for $1\leq d\leq n$.
Then $\cm{B}^*, \widehat{\cm{B}} \in \bm{\Gamma}(R, \bm{s})$. Note that for any $\cm{B}\in \bm{\Gamma}(R, \bm{s})$, it holds $\|\vect(\cm{B})\|_0 \leq Rs$, where $s=\prod_{d=1}^{N}s_d$.

Denote $\bm{\Delta}=\cm{B}-\cm{B}^*$ and $\widehat{\bm{\Delta}}=\widehat{\cm{B}}-\cm{B}^*$. Since the loss function is minimized at $\widehat{\cm{B}}$ over the parameter space $\bm{\Gamma}(R, \bm{s})$, we have $	\sum_{t=1}^T\lVert\cm{Y}_{t}-\langle\widehat{\cm{B}}, \cm{X}_{t}\rangle\rVert_\Fr^2\leq \sum_{t=1}^T\lVert\cm{Y}_{t}-\langle \cm{B}^*, \cm{X}_{t}\rangle\rVert_\Fr^2$, which implies
\begin{equation}\label{eq:optim}
\frac{1}{T}\sum_{t=1}^T\lVert\langle\widehat{\bm{\Delta}}, \cm{X}_{t}\rangle\rVert_\Fr^2\leq \frac{2}{T}\sum_{t=1}^T\big\langle \widehat{\bm{\Delta}}, \cm{E}_{t}\circ\cm{X}_{t}\big\rangle.
\end{equation} 

First consider the left side of \eqref{eq:optim}. Let $\bm{\Delta}_{[m]}$ be the multi-mode matricization of $\bm{\Delta}$ obtained by collapsing its first $m$ modes into rows.
Similarly we can define $\widehat{\bm{\Delta}}_{[m]}$. Note that
\begin{equation}\label{eq:optimleft}
\lVert\langle\widehat{\bm{\Delta}}, \cm{X}_{t}\rangle\rVert_\Fr^2 = \lVert \widehat{\bm{\Delta}}_{[m]} \bm{x}_{t}\rVert_2^2.
\end{equation}

Let $\cm{K}(2Rs)=\{\bm{\Delta}\in\mathbb{R}^{p_1\times \cdots\times p_m\times q_1\times\cdots\times q_n}:\lVert\bm{\Delta}\rVert_\Fr = 1, \lVert\vect(\bm{\Delta})\rVert_0\leq 2Rs\}$. Note that for any $\cm{B}\in \bm{\Gamma}(R, \bm{s})$, we have $\bm{\Delta} / \| \bm{\Delta} \|_\Fr\in \cm{K}(2Rs)$. In particular, $\widehat{\bm{\Delta}} / \| \widehat{\bm{\Delta}} \|_\Fr\in \cm{K}(2Rs)$.  Denote by $\sigma_x^2$ the sub-Gaussian parameter of  $\epsilon_{it}$ in Assumption \ref{assum:stationary}(ii).
Applying Lemma  \ref{lemma:hansonw} with  $\bm{M}=\bm{\Delta}_{[m]}$ and $\eta= \kappa_1/(2\sigma_x^2\kappa_2)$, by Lemmas \ref{lemma:eigen} and \ref{lemma:hansonw}, we have the pointwise bound for any $\bm{\Delta}\in \cm{K}(2Rs)$:
\begin{equation}\label{eq:hw2}
\mathbb{P}\left \{\left |\frac{1}{T}\sum_{t=1}^{T}\|\bm{\Delta}_{[m]}\bm{x}_t\|_2^2 - \mathbb{E}\left (\|\bm{\Delta}_{[m]}\bm{x}_t\|_2^2\right )\right | \geq \kappa_1/2 \right \} \leq 2e^{-c_1 \kappa_1^2 T/\kappa_2^2},
\end{equation}
where $c_1=c_{\HW}\min\{1/(2\sigma_x^2), 1/(4\sigma_x^4)\}$. By arguments similar to the proof of Lemma F.2 in \cite{Basu2015}, this can be strengthened to a union bound over  $\bm{\Delta}\in \cm{K}(2Rs)$ as follows:
\begin{align*}
&\mathbb{P}\left \{\sup_{\bm{\Delta}\in \cmt{K}(2Rs)} \left |\frac{1}{T}\sum_{t=1}^{T}\|\bm{\Delta}_{[m]}\bm{x}_t\|_2^2 - \mathbb{E}\left (\|\bm{\Delta}_{[m]}\bm{x}_t\|_2^2\right )\right | \geq \kappa_1/2 \right \} \\
& \hspace{10mm}\leq 2e^{-c_1 \kappa_1^2 T/\kappa_2^2 + 2Rs\min\{\log(p_y q_x), \log[21e p_y q_x/(2Rs)]\}} 
\leq 2e^{- 2Rs\min\{\log(p_y q_x), \log[21e p_y q_x/(2Rs)]\}},
\end{align*}
given that $T\geq 4 c_1^{-1}  Rs (\kappa_2/\kappa_1)^2 \min\{\log(p_y q_x), \log[21e p_y q_x/(2Rs)]\}$. Moreover, by Lemma \ref{lemma:eigen}, if $\lVert\bm{\Delta}_{[m]}\rVert_\Fr =1$, then $\mathbb{E} (\|\bm{\Delta}_{[m]}\bm{x}_t\|_2^2 ) = \mathbb{E} (\|(\bm{x}_t^\top \otimes \bm{I}_{p_y})\bm{\delta}_{[m]}\|_2^2 ) = \bm{\delta}_{[m]}^\top (\bm{\Sigma}_x\otimes \bm{I}_{p_y}) \bm{\delta}_{[m]} \geq \lambda_{\min}(\bm{\Sigma}_x) \geq \kappa_1$. As a result, by the triangle inequality, we can show that
\begin{equation*}
\mathbb{P}\left \{ \inf_{\bm{\Delta}\in \cmt{K}(2Rs)} \frac{1}{T}\sum_{t=1}^{T}\|\bm{\Delta}_{[m]}\bm{x}_t\|_2^2 \leq \kappa_1/2  \right \} \leq  2e^{- 2Rs\min\{\log(p_y q_x), \log[21e p_y q_x/(2Rs)]\}},
\end{equation*}
which, together with \eqref{eq:optimleft}, implies
\begin{equation}\label{eq:lower}
\mathbb{P}(\mathcal{E}_1^\complement)\leq  2e^{- 2Rs\min\{\log(p_y q_x), \log[21e p_y q_x/(2Rs)]\}},
\end{equation}
where $\mathcal{E}_1^\complement$ is the complement of the event
\[
\mathcal{E}_1 = \left \{ \frac{1}{T}\sum_{t=1}^T\lVert\langle\widehat{\bm{\Delta}}, \cm{X}_{t}\rangle\rVert_\Fr^2 \geq \frac{\kappa_1}{2}  \|\widehat{\bm{\Delta}}\|_\Fr^2 \right \}.
\]

On the other hand, applying Lemma  \ref{lemma:hansonw} with  $\bm{M}=\bm{\Delta}_{[m]}$ and $\eta= 1/(2\sigma_x^2)$, together with Lemma \ref{lemma:eigen}, we have a  similar pointwise bound for any $\bm{\Delta}\in \cm{K}(2Rs)$:
\begin{equation}\label{eq:hw4}
\mathbb{P}\left \{\left |\frac{1}{T}\sum_{t=1}^{T}\|\bm{\Delta}_{[m]}\bm{x}_t\|_2^2 - \mathbb{E}\left (\|\bm{\Delta}_{[m]}\bm{x}_t\|_2^2\right )\right | \geq \kappa_2/2 \right \} \leq 2e^{-c_1  T}.
\end{equation}
By Lemma \ref{lemma:eigen}, if $\lVert\bm{\Delta}_{[m]}\rVert_\Fr = 1$, then $\mathbb{E} (\|\bm{\Delta}_{[m]}\bm{x}_t\|_2^2 ) = \bm{\delta}_{[m]}^\top (\bm{\Sigma}_x\otimes \bm{I}_{p_y}) \bm{\delta}_{[m]} \leq \lambda_{\max}(\bm{\Sigma}_x) \leq \kappa_2$. Then, by the triangle inequality, for any $\bm{\Delta}\in \cm{K}(2Rs)$, we have the pointwise bound
\begin{equation}\label{eq:upper}
\mathbb{P}\left \{ \frac{1}{T}\sum_{t=1}^{T}\|\bm{\Delta}_{[m]}\bm{x}_t\|_2^2  \geq 3\kappa_2/2 \right \} \leq  2e^{-c_1  T}.
\end{equation}

%as long as $T\geq 4 c_1^{-1}  Rs  \min\{\log(p_y q_x), \log[21e p_y q_x/(2Rs)]\}$.

Next we consider the right side of \eqref{eq:optim}. By Lemma \ref{lemma:cover1}(ii), we have
\begin{align}\label{eq:cover1}
\frac{2}{T}\sum_{t=1}^T\big\langle \widehat{\bm{\Delta}}, \cm{E}_{t}\circ\cm{X}_{t}\big\rangle 
&\leq \frac{2}{T}\|\widehat{\bm{\Delta}}\|_\Fr \sup_{\bm{\Delta}\in \cmt{K}(2Rs)} \big\langle \widehat{\bm{\Delta}}, \sum_{t=1}^T\cm{E}_{t}\circ\cm{X}_{t}\big\rangle \notag\\
&\leq \frac{8}{T}\|\widehat{\bm{\Delta}}\|_\Fr
\max_{\bm{\Delta}\in \pazocal{N}(2Rs,1/2)} \langle \bm{\Delta}, \sum_{t=1}^T\cm{E}_{t}\circ\cm{X}_{t} \rangle,
\end{align}
where $\pazocal{N}(2Rs,1/2)\subset \cm{K}(2Rs)$ is the minimal $1/2$-net of $\cm{K}(2Rs)$ in the Frobenius norm. 

Moreover, by Lemma \ref{lemma:cover1}(i) and the fact that ${p\choose K}\leq \min\{p^K,(ep/K)^K\}$, we have
\[
\log |\pazocal{N}(2Rs,1/2)| \leq 2Rs \min\{\log(p_yq_x), \log[ep_yq_x/(2Rs)]\} + 4Rs;
\]
see also the proof of Lemma F.2 in \cite{Basu2015}.
Note that 
\begin{align*}
\langle \bm{\Delta}, \sum_{t=1}^T\cm{E}_{t}\circ\cm{X}_{t} \rangle = \sum_{t=1}^T \langle \bm{\Delta}_{[m]}, \bm{\varepsilon}_{t}\bm{x}_{t}^\top\rangle = \sum_{t=1}^T \langle \bm{\Delta}_{[m]}\bm{x}_{t}, \bm{\varepsilon}_{t} \rangle.
\end{align*}
For any $\bm{\Delta}\in \cm{K}(2Rs)$ and  $a>0$, by \eqref{eq:upper} and applying Lemma \ref{lemma:martgl} with $\bm{M}=\bm{\Delta}_{[m]}$, we can show that
\begin{align*}
&\mathbb{P}\left \{ \sum_{t=1}^T \langle \bm{\Delta}_{[m]}\bm{x}_{t}, \bm{\varepsilon}_{t} \rangle \geq a \right \}\\ &\hspace{5mm}\leq 
\mathbb{P}\left \{ \sum_{t=1}^{T}\langle\bm{\Delta}_{[m]}\bm{x}_t, \bm{\varepsilon}_t \rangle\geq a, \; \sum_{t=1}^{T}\lVert \bm{\Delta}_{[m]}\bm{x}_t \rVert_2^2 \leq 3\kappa_2 T/2 \right \} + \mathbb{P}\left \{ \sum_{t=1}^{T}\lVert \bm{\Delta}_{[m]}\bm{x}_t \rVert_2^2 \geq 3\kappa_2 T/2\right \} \\
&\hspace{5mm}\leq  \exp\left \{-\frac{a^2}{3 \sigma^2 \kappa_2 \lambda_{\max}(\bm{\Sigma}_{\varepsilon})T}\right \}  +  2e^{-c_1  T}.
\end{align*}
As a result,  by setting $a= 6\sqrt{\sigma^2  \kappa_2 \lambda_{\max}(\bm{\Sigma}_{\varepsilon})T Rs \min\{\log(p_yq_x), \log[ep_yq_x/(2Rs)]\} }$, we have
\begin{align*}
\mathbb{P}\left \{ \max_{\bm{\Delta}\in \pazocal{N}(2Rs,1/2)} \langle \bm{\Delta}, \sum_{t=1}^T\cm{E}_{t}\circ\cm{X}_{t} \rangle \geq a \right \}
&\leq |\pazocal{N}(2Rs,1/2)|   \max_{\bm{\Delta}\in \pazocal{N}(2Rs,1/2)} \mathbb{P}\left \{ \langle \bm{\Delta}, \sum_{t=1}^T\cm{E}_{t}\circ\cm{X}_{t} \rangle \geq a \right \}\\
&\leq 3e^{-6Rs  \min\{\log(p_yq_x), \log[ep_yq_x/(2Rs)]\}},
\end{align*}
given that $T \geq  12 c_1^{-1}  Rs \min\{\log(p_yq_x), \log[ep_yq_x/(2Rs)]\}$. This, together with \eqref{eq:cover1}, implies 
\begin{align}\label{eq:cover2}
\mathbb{P}(\mathcal{E}_2^\complement)\leq  3e^{-6Rs  \min\{\log(p_yq_x), \log[ep_yq_x/(2Rs)]\}},
\end{align}
where $\mathcal{E}_2^\complement$ is the complement of the event
\[
\mathcal{E}_2 = \left \{ \frac{2}{T}\sum_{t=1}^T\big\langle \widehat{\bm{\Delta}}, \cm{E}_{t}\circ\cm{X}_{t}\big\rangle   \leq  48\|\widehat{\bm{\Delta}}\|_\Fr \sqrt{\frac{\sigma^2  \kappa_2 \lambda_{\max}(\bm{\Sigma}_{\varepsilon}) Rs \min\{\log(p_yq_x), \log[ep_yq_x/(2Rs)]\} }{T} }  \right \}.
\]
In view of  \eqref{eq:optim}, on the event $\mathcal{E}_1 \cap \mathcal{E}_2$, we have
\[
\|\widehat{\bm{\Delta}}\|_\Fr \lesssim   \sqrt{\frac{\sigma^2  \kappa_2 \lambda_{\max}(\bm{\Sigma}_{\varepsilon}) Rs \min\{\log(p_yq_x), \log[ep_yq_x/(2Rs)]\} }{\kappa_1^2 T} },
\]
and consequently,
\[
\frac{1}{T}\sum_{t=1}^T\lVert\langle\widehat{\bm{\Delta}}, \cm{X}_{t}\rangle\rVert_\Fr^2
\lesssim  \frac{\sigma^2  \kappa_2 \lambda_{\max}(\bm{\Sigma}_{\varepsilon}) Rs \min\{\log(p_yq_x), \log[ep_yq_x/(2Rs)]\} }{\kappa_1 T} 
\] 
By \eqref{eq:lower} and \eqref{eq:cover2}, we have $\mathbb{P}(\mathcal{E}_1 \cap \mathcal{E}_2) \geq 1-5e^{- 2Rs\min\{\log(p_y q_x), \log[e p_y q_x/(2Rs)]\}}$. The proof is complete.
\end{proof}	

%%%%%%%%%%%%%%%%%%%%%%%%%%%%%%%%%%%%%%%%%%%%%%%%%%%%%%%%%%%%%%%%%%%%%%%%%%%%%%%%%%%%%%%%%%%%%%%%%%%%%%%
\subsection{Proofs of Corollaries \ref{cor:nonsparsetrTAR} and \ref{cor:sparsetrTAR} }

\begin{proof}[Proof of Corollary \ref{cor:nonsparsetrTAR}]
The proof of Corollary \ref{cor:nonsparsetrTAR} follows closely that of Theorem \ref{cor:nonsparsetrTAR}, with $\bm{x}_t=(\bm{y}_{t-1}^\top, \dots, \bm{y}_{t-L}^\top)^\top$. The first key difference is that instead of \eqref{eq:hw1}, we apply Lemmas \ref{lemma:eigenAR} (as opposed to \ref{lemma:eigen}) and \ref{lemma:hansonw} with $\eta= \kappa_1^{\ar}/(2\sigma^2\kappa_2^{\ar} L)$, which leads to
\begin{equation}\label{eq:hw3}
\mathbb{P}\left \{\left |\frac{1}{T}\sum_{t=1}^{T}\|\bm{\Delta}_{[m]}\bm{x}_t\|_2^2 - \mathbb{E}\left (\|\bm{\Delta}_{[m]}\bm{x}_t\|_2^2\right )\right | \geq \kappa_1^{\ar}/2 \right \} \leq 2e^{-c_1 \kappa_1^2 T/(\kappa_2^{\ar} L)^2},
\end{equation}
where $c_1=c_{\HW}\min\{1/(2\sigma^2), 1/(4\sigma^4)\}$. Then by Lemma \ref{lemma:eigenAR}, if $\lVert\bm{\Delta}_{[m]}\rVert_\Fr = 1$, then $\kappa_2^{\ar}\geq \lambda_{\max}(\bm{\Sigma}_x) \geq \lambda_{\min}(\bm{\Sigma}_x) \geq \kappa_1^{\ar}$. As a result, \eqref{eq:point1} is replaced by
\begin{equation*}
\mathbb{P}\left \{  \kappa_1^{\ar}/2 \leq \frac{1}{T}	\lVert \bm{\Delta}_{[m]} \bm{X}\rVert_\Fr^2 \leq 3\kappa_2^{\ar}/2 \right \} \leq 2e^{-c_1 (\kappa_1^{\ar})^2 T/(\kappa_2^\ar L)^2}.
\end{equation*}
Define the events
\[
\mathcal{E}_1 = \left \{ \frac{1}{T}\sum_{t=1}^T\lVert\langle\widetilde{\bm{\Delta}}, \cm{X}_{t}\rangle\rVert_\Fr^2 \geq \frac{\kappa_1^{\ar}}{8}  \|\widetilde{\bm{\Delta}}\|_\Fr^2 \right \}
\]
and 
\[
\mathcal{E}_2 = \left \{ \frac{2}{T}\sum_{t=1}^T\big\langle \widetilde{\bm{\Delta}}, \cm{E}_{t}\circ\cm{X}_{t}\big\rangle   \leq  32\|\widetilde{\bm{\Delta}}\|_\Fr \sqrt{\frac{\sigma^2  \kappa_2^{\ar} \lambda_{\max}(\bm{\Sigma}_{\varepsilon})R(2\sum_{d=1}^m p_d+L) \log\{6(N+1)\} }{T} }  \right \}.
\]
Analogous to the proof of Theorem \ref{cor:nonsparsetrTAR}, given that $T \gtrsim (\kappa_2^\ar L/\kappa_1^\ar)^2  R(\sum_{d=1}^m p_d +L) \log N$, we can show that $\mathbb{P}(\mathcal{E}_1 \cap \mathcal{E}_2) \geq 1-5  e^{-2 R(2\sum_{d=1}^m p_d+L) \log\{6(N+1)\}}$, and hence the results of this corollary. The proof is complete.
\end{proof}

\begin{proof}[Proof of Corollary \ref{cor:sparsetrTAR}]
The proof of Corollary \ref{cor:sparsetrTAR} is similar to that of Theorem \ref{cor:sparsetrTAR}, with $\bm{x}_t=(\bm{y}_{t-1}^\top, \dots, \bm{y}_{t-L}^\top)^\top$. 	Instead of \eqref{eq:hw2}, we apply Lemmas \ref{lemma:eigenAR} (as opposed to \ref{lemma:eigen}) and \ref{lemma:hansonw} with $\eta= \kappa_1^{\ar}/(2\sigma^2\kappa_2^{\ar} L)$, which leads to
\begin{equation*}
\mathbb{P}\left \{\left |\frac{1}{T}\sum_{t=1}^{T}\|\bm{\Delta}_{[m]}\bm{x}_t\|_2^2 - \mathbb{E}\left (\|\bm{\Delta}_{[m]}\bm{x}_t\|_2^2\right )\right | \geq \kappa_1^\ar/2 \right \} \leq 2e^{-c_1 \kappa_1^2 T/(\kappa_2^\ar L)^2},
\end{equation*}
where $c_1=c_{\HW}\min\{1/(2\sigma^2), 1/(4\sigma^4)\}$.	Another difference is that instead of \eqref{eq:hw4},  by applying Lemma  \ref{lemma:hansonw} with $\eta= 1/(2\sigma^2 L)$, together with Lemma \ref{lemma:eigenAR}, we have the pointwise bound for any $\bm{\Delta}\in \cm{K}(2Rs)$:
\begin{equation*}
\mathbb{P}\left \{\left |\frac{1}{T}\sum_{t=1}^{T}\|\bm{\Delta}_{[m]}\bm{x}_t\|_2^2 - \mathbb{E}\left (\|\bm{\Delta}_{[m]}\bm{x}_t\|_2^2\right )\right | \geq \kappa_2^\ar/2 \right \} \leq 2e^{-c_1  T/L^2}.
\end{equation*}
Define the events
\[
\mathcal{E}_1 = \left \{ \frac{1}{T}\sum_{t=1}^T\lVert\langle\widehat{\bm{\Delta}}, \cm{X}_{t}\rangle\rVert_\Fr^2 \geq \frac{\kappa_1^\ar}{2}  \|\widehat{\bm{\Delta}}\|_\Fr^2 \right \}.
\]
and
\[
\mathcal{E}_2 = \left \{ \frac{2}{T}\sum_{t=1}^T\big\langle \widehat{\bm{\Delta}}, \cm{E}_{t}\circ\cm{X}_{t}\big\rangle   \leq  48\|\widehat{\bm{\Delta}}\|_\Fr \sqrt{\frac{\sigma^2  \kappa_2^\ar \lambda_{\max}(\bm{\Sigma}_{\varepsilon}) Rs \min\{\log(p_y^2 L), \log[e p_y^2 L/(2Rs)]\} }{T} }  \right \}.
\]	
With $T\gtrsim (\kappa_2^\ar L/\kappa_1^\ar)^2 Rs  \min\{\log(p_y^2 L), \log[21e p_y^2 L/(2Rs)]\}$, along the lines of the  proof of Theorem \ref{cor:sparsetrTAR}, we can show that
$\mathbb{P}(\mathcal{E}_1 \cap \mathcal{E}_2) \geq 1-5e^{- 2Rs\min\{\log(p_y^2 L), \log[e p_y^2 L/(2Rs)]\}}$ and  accomplish the  proof of this corollary.
\end{proof}

\putbib[CPAR]	
\end{bibunit}
\end{document}